\newcommand{\thmtexta}{If $f$  and $g$ are $(3\eps_1, \eps_2)$-half-stable and $f \not \sim_{(3\eps_1, d.\eps_2)} g$ for $d\geq 3$ then there exists an  $(\eps_1, (d-2). \eps_2)$-rectangle which separates $f$ and $g$.
}
\newcommand{\thmtext}{Let   
$\eps_1,\eps_2, \delta$, 
a frequency function $f$ and 
a stream $s$ with insertions only be given. 
If the frequency functions $f$ and $g$ are $ (3\eps_1, \eps_2)$\hbox{-half-stable} and $(\gamma_1,\gamma_2)$-decreasing 
then Algorithm $A_1(s,\eps_1,\eps_2,f)$
 is a streaming Tester for $f$ that uses space $O(\log^2 n\cdot \log\log n)$. It accepts with high probability if $f=g$ and rejects if $f \not\sim_{ (3.\eps_1,10\eps_2)} g$ with high probability.}
\newcommand{\thmtextb}{Let   
$\eps_1,\eps_2, \delta$, 
and 
a stream $s$ of elements of dimension $d$ with insertions only be given. 
 If the marginal distributions $g_1$ on the first $d/2$ elements and   
 $g_2$ on the last $d/2$ elements satisfy the same hypothesis as in theorem \ref {main3} ($g_1$ is   $(3\eps_1/2, \eps_2/4)$ half-stable  and  $(\gamma_1/(1+\eps_1),(1+\eps_2)^2\gamma_2 )$-decreasing and $g_2$ is  $ (3\eps_1,\eps_2)$\hbox{-half-stable} and $(\gamma_1,\gamma_2)$-decreasing )
then Algorithm $A_3(s,\eps_1,\eps_2, \delta, \pi_1, \pi_2)$
 is a streaming Tester that uses space 
 $O(\log^2 n\cdot \log\log n)$. It accepts with high probability if $g_1=g_2$ and rejects if $g_1 \not\sim_{ (4.\eps_1,12\eps_2)} g_2$ with high probability.}
\newcommand{\thmtextc}{Let   
$\eps_1,\eps_2, \delta$, 
and two streams $s_1$  and $s_2$ of elements of
domains $D_1$ and $D_2$ of the same size $n$, and let $g_1$ and $g_2$ be their frequency distributions. 
 If   $g_1$ is   $(3\eps_1/2, \eps_2/4)$ half-stable  and  $(\gamma_1/(1+\eps_1),(1+\eps_2)^2\gamma_2 )$-decreasing and if $g_2$ is  $ (3\eps_1,\eps_2)$\hbox{-half-stable} and $(\gamma_1,\gamma_2)$-decreasing 
then Algorithm $A_2(\eps_1,\eps_2,\delta)$
 is a streaming Tester that uses space $O(\log^2 n\cdot \log\log n)$. It accepts with high probability if $g_1=g_2$ and rejects if $g_1 \not\sim_{ (4.\eps_1,12\eps_2)} g_2$ with high probability.}
\newtheorem*{T1}{Theorem~\ref{main}}
\newtheorem*{T3}{Theorem~\ref{main3}}
\newtheorem{notation}{Notation} 
\newcommand{\dist}{\mathsf{dist}}
\newcommand{\E}{I\!\!E}
\newcommand{\ccount}{\hbox{count}}
\newcommand{\occ}{\hbox{occ}}
\newcommand{\rank}{\hbox{rank}}
   \newcommand{\VDR}{\mathsf{VDR}}
     \newcommand{\EMD}{\mathsf{EMD}}
\newsavebox{\fmbox}
\newcommand{\eps}{\varepsilon}
\newtheorem{theorem-non}{Theorem}
\newtheorem{lemma}{Lemma}
\newtheorem{definition}{Definition}
\newtheorem{corollary}{Corollary}
\title{Testing frequency distributions in a stream}
\author*[1]{\fnm{Claire} \sur{Mathieu}}\email{claire@irif.fr}
\author*[1,2]{\fnm{Michel} \sur{de Rougemont}}\email{mdr@irif.fr}
\affil*[1]{\orgdiv{IRIF}, \orgname{CNRS}, \orgaddress{\street{ }\city{Paris},  \postcode{ } \country{France} \state{ } }}
\affil*[2]{\orgdiv{IRIF }, \orgname{University Paris II}, \orgaddress{\street{ } \city{Paris}, \postcode{ }  \country{France} \state{ } }}
\keywords{Verification of a distribution, Property Testing, Frequent items, Fr{\'e}chet distance} 
\begin{document}

\maketitle

\begin{abstract}

How does one verify that the frequency function $g$ given by a stream of $N$ data items taken from a universe of $n$ distinct items closely matches a given frequency function $f$? We introduce the \emph{relative Fr{\'e}chet distance} to compare two frequency functions  in a homogeneous manner. 
We then present a Tester  for a certain class of functions, which decides if $f=g$ or if $f$ is  far from $g$ with high probability. The Tester only uses polylogarithmic space. We prove that the assumptions defining the class are necessary by giving linear space lower bounds when some assumption fails.  
We then generalize the Tester to compare the frequency functions of two distinct streams. If the elements of the stream are tuples of dimension $d$, we can also  compare the frequencies of two marginal distributions. The techniques  rely on the Spacesaving algorithm \cite{MAE2005,Z22} and on sampling the stream. 
\end{abstract}
\maketitle
\newpage
\tableofcontents
\newpage
\section{Introduction}

The frequency function $g$ of a stream of data items is such that $g(i)$ is the number of occurrences of the $i$th most frequent item in the stream.  Here, we consider a stream of length $N$, consisting of elements from a domain $U$ of size $n$ and we want to verify whether its frequency function $g$ matches a fixed distribution $f$. We may also look at two different streams and ask whether their frequencies $g_1$ and $g_2$ are close to each other.  In practice, of particular interest are settings with single-pass streams and very small memory~\cite{M02}. 
 
 The well-known SpaceSaving algorithm \cite{MAE2005} finds the most frequent items (heavy hitters) using very small memory. 
 Here, we design an extension of SpaceSaving to identify, not just the frequencies of the heavy hitters, but also the frequencies of the lower frequency items in the tail of the distribution, and more generally the entire profile of the frequency function. 
 Our algorithm uses SpaceSaving as a black box and simply  leverages it with probabilistic subsampling of the items in the stream.
 What kind of properties can we hope to verify if we only allow poly-logarithmic space? 
 
 Our first result is negative. 
 We prove a linear space lower bound $\Omega (n)$ on the space of the Tester, Theorem \ref{lb},  to verify whether the frequency function of the stream is "close"  to the uniform function. We therefore need some additional assumption. In this paper, we restrict attention to frequency functions that decrease "fast enough", see definition~\ref{definition:decreasing}. 
 
The rate of decrease does not suffice. Our second result is also negative.  We also prove a linear space lower bound $\Omega (n)$ on the space of the Tester, Theorem \ref{lb2},  to verify whether the frequency function of the stream is "close"  to a fast decreasing function that has some particular "double  discontinuity"   feature. We therefore need some additional assumption. In this paper, we further restrict attention to frequency functions that have no such double discontinuities, see definition~\ref{hs} (half-stability). That insures that the frequency function  may have discontinuities, i.e. large drops, but no double discontinuities. 
These two assumptions include real-life functions such as power-law distributions. 

Given a stream and a fixed frequency function $f$ which is half-stable and decreasing fast enough,   the objective is to decide whether the frequency defined by the stream is close to $f$, following the Property Testing framework. What does it mean for two frequency functions to be "close" ? We use a new measure of distance, the \emph{relative Fr{\'e}chet distance}. 
It is an adaptation of the classical (discrete) Fr{\'e}chet distance, which is an absolute distance, studied in particular in computational geometry, including in the streaming context~\cite{D19,F20}. The { relative Fr{\'e}chet distance} must preserve  the distance on the $x$-axis within $(1+\eps_1)$ and  the distance on the $y$-axis within $(1+\eps_2 )$.

  We combined two well known techniques: the Spacesaving algorithm \cite{MAE2005,Z22} which deterministically selects the most frequent items approximately and the Minhash technique which approximates the low frequencies probabilistically. 
  
  All of the following results assume that the functions under study are half-stable (since otherwise there is a linear lower bound, Theorem \ref{lb2}). A preliminary result  shows that
  any two frequency functions either are close to each other (in terms of relative Fr\'echet distance) or have a separating rectangle between them, Theorem \ref{main0}, 
  
\paragraph{Main results.} 

\begin{itemize}
 
 \item A  streaming Tester $A_1$ to compare the frequency of a stream to a fixed reference frequency function, both  $(\gamma_1,\gamma_2)$-decreasing, for  the relative Fr{\'e}chet distance. The Tester uses $O(\log^2 n\cdot \log\log n)$ space, Theorem \ref{main}. The Tester uses the separating rectangle as a witness for negative instances.
 
 \item A  streaming Tester $A_2$ to compare the frequencies of two streams $g_1$ and $g_2$, both  $(\gamma_1,\gamma_2)$-decreasing,  for  the relative Fr{\'e}chet distance. The Tester also uses  $O(\log^2 n\cdot \log\log n)$ space, Theorem \ref{main3}.

 \end{itemize}

There are two different representations of frequencies. Either the function $F$ is from  the universe $U$ to ${\bf N}$ and $F(e)$ is the frequency of  item $e$; or the function $f$ is from  $\{1,2,..|U|\}$ to   ${\bf N}$ and $f(i)$ is the frequency of the $i$-th most frequent item; we take the latter viewpoint. A \emph{frequency function} $g$ is a non-negative integer-valued function over a set of elements such that $f(i)$ is the number of occurences of the $i$th most frequent element. For example, the two streams $aaabba$ and $bbbaab$ have the same frequency functions ($f(1)=4$, $f(2)=2$) even though element $b$ has 2 occurrences in the first stream and 4 occurrences in the second stream. 

The second section 
presents our main definitions and we present the Testers in the third section. We give the space lower bounds $\Omega(n)$ in the fourth section. In the fifth section, we review the bounds
 associated with the Spacesaving algorithms  and a fine analysis, lemma \ref{ss}.   
In the sixth section
we analyze the Testers and prove Theorem \ref{main}  and Theorem \ref{main3}.

\paragraph{Motivations and comparison with other approaches.}
Problems that are hard in the worst-case may be much simpler for inputs which follow specific distributions, for example power law distributions. It is therefore important to verify if some given data follow certain distributions, 
when the data arrive in a stream. If each item is a tuple of dimension $d$, there are many marginal distributions which we might want to compare, without a given reference distribution $f$.

The area of {\em Distribution testing} \cite{C20} studies this type of problems in general and assume a model where we sample the distribution, whereas we don't use this feature. The role of the uniform distribution is then very different. The distance between distributions is also important in this context.

The $L_1$ distance, also called the {\em variational distance} between two distributions $ \delta_1, \delta_2  $ on a domain with $n$ elements is defined as: $$\dist_1( \delta_1, \delta_2   )=\frac{1}{2} \cdot\sum_i  \mid  \delta_1(i)- \delta_2(i)   \mid . $$ 
If we relabel the domain with  a permutation $\pi$ we may have a smaller variation $\sum_i  \mid  \delta_1(i)- \delta_2(\pi(i))   \mid $ and 
\cite{GR20} introduces  the {\em Variation distance up to relabeling} $\VDR(\delta_1, \delta_2  )$\footnote{For a distribution $\delta$, let the histogram  $h_{\delta}$ of $\delta$ be the function  $[0,1] \rightarrow N$ such that $h_{\delta}(x)=|\{i:~\delta(i)=x\}|$, the number ef elements with probability $x$. Then \cite{GR20} shows the connection betwen $\VDR$ and the  Earth-Moving Distance $\EMD$ of the histograms: 
$\VDR(\delta_1, \delta_2  )=\EMD(h_{\delta_1},h_{\delta_2})/2$. } as the minimum over $\pi$ of  $\frac{1}{2} \cdot\sum_i  \mid  \delta_1(i)- \delta_2(\pi(i))   \mid  $. It is also the $L_1$ distance between the frequencies of the distributions, i.e. ordered by decreasing values and the distribution of the frequencies is invariant by relabeling.

The classical Spacesaving algorithm \cite{MAE2005} gives estimates on the most frequent items, but no information on the tail of the distribution. The proposed Tester compares the most frequent items on some randomized substreams using the Spacesaving algorithm as a black box and indirectly gives some information on the tail of the frequency distribution. It assumes additional hypothesis on the function $g$, to be half-stable and $(\gamma_1,\gamma_2)$-decreasing, in order to obtain relative errors on the frequencies, as in \cite{C23} to approximate the rank of an item. Classical 
sketching approaches such as  {\em Count-Min Sketch} \cite{C05}
only guarantee an additive error. As we want to approximate small values in the tail of the distribution, we need relative errors.
We work in the streaming insertion model.  

In \cite{C19}, the verification of properties of a stream is studied  with streaming interactive proofs. In \cite{C18}, the verification is done efficiently thanks to prior work done by annotating the stream in advance in preparation for the task. In our setting, we use the Property Testing framework without any annotations or other additional prior information, as in  \cite{F02}. We propose this setting for the verification of the frequency distributions, given a stream of data.

A standard problem in statistics is to check if some observed data, i.e. in the insertion only model, approximately fit 
some statistics $F$ where $F(e_i) $ is the frequency of the element $e_i$. Let $G$ be the frequency of the elements of the observed data. The standard $\chi^2$ test compares two statistics\footnote{Let
$\chi^2(F,G)= \sum_{i=1}^{n} (F(e_i)-G(e_i))^2/F(e_i)$.

 If  $\chi^2(F,G) \leq a$, we know that $G$ follows $F$ with  confidence $1-\alpha$, for example $a=11,07$ and $1-\alpha=95\%$. }. In  this setting, \cite{FL20} gives an algorithm  which uses space $O(\log N. \sqrt{N})$ to decide if $F$ and $G$ are close or far for the $\chi^2$ test. In fact, the AMS-sketch \cite{A99} can be adapted and requires only $O(\log n)$ space.  In this case, the problem is easy because the statistics $F$ takes an element as argument.
 


\section{Main definitions}

The main definitions are the {\em relative Fr\'echet distance} definition \ref{definition:Frechet}, the 
$(\gamma_1,\gamma_2)$-decreasing functions
definition \ref{definition:decreasing} and the $(\eps_1,\eps_2)$-half-stable functions, definition \ref{hs}. The third one is the most novel concept.

 \subsection{Relative Fr\'echet distance}

We first define a relative distance between points. 

\begin{definition}\label{defi:epsclose}
Let $0< \eps_1,\eps_2<1$.  We say that two non-negative numbers $a,b$ are \emph{$\eps$-close}, and denote it by $a \simeq_{\eps} b$, if $|  a-b   | \leq \eps \cdot \min\{a,b\}$.  We say that two points $p=(x,y)$ and $p'=(x',y')$ are $(\eps_1,\eps_2)$-close, and denote it by $p\simeq_{(\eps_1,\eps_2)}p'$,  if $x\simeq_{\eps_1}x'$ and $y\simeq_{\eps_2}y'$.  \end{definition}

\begin{lemma}\label{lemma:losing-epsilon}
If $|a-b|\leq \epsilon \cdot a$ then $a\simeq_{\eps/(1-\eps)} b$.
If $a\not\simeq_{\eps}b$ then $|a-b|>\frac{\eps}{1+\eps}\cdot \max(a,b).$
\end{lemma}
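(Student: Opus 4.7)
The plan is to unpack the definition of $\eps$-closeness directly and treat the two assertions as short arithmetic manipulations, splitting by which of $a,b$ is the minimum.

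For the first implication, I will assume $|a-b|\le \eps\cdot a$ and show $|a-b|\le \frac{\eps}{1-\eps}\min(a,b)$. If $a\le b$, then $\min(a,b)=a$ and the hypothesis directly gives $|a-b|\le \eps a\le \frac{\eps}{1-\eps}\min(a,b)$ since $0<\eps<1$. If $b<a$, then $\min(a,b)=b$ and from $a-b\le \eps a$ I get $a(1-\eps)\le b$, hence $a\le b/(1-\eps)$, which yields $|a-b|\le \eps a\le \frac{\eps}{1-\eps}b=\frac{\eps}{1-\eps}\min(a,b)$.

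For the second implication, assume $a\not\simeq_\eps b$, i.e.\ $|a-b|>\eps\cdot\min(a,b)$; I want $|a-b|>\frac{\eps}{1+\eps}\max(a,b)$. Without loss of generality $a\le b$ (the definition is symmetric in $a,b$), so $\min(a,b)=a$ and $\max(a,b)=b$. The hypothesis becomes $b-a>\eps a$, hence $b>(1+\eps)a$, i.e.\ $a<\frac{b}{1+\eps}$. Therefore
\[
b-a \;>\; b-\frac{b}{1+\eps} \;=\; \frac{\eps}{1+\eps}\,b \;=\; \frac{\eps}{1+\eps}\max(a,b),
\]
as desired.

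No step here is a real obstacle: both claims boil down to rearranging a one-sided inequality into the symmetric form used in Definition~\ref{defi:epsclose} (and back). The only thing to be careful about is the WLOG reduction in the second part, which is legitimate because $\simeq_\eps$ is symmetric in its two arguments, and the case split on which of $a,b$ is smaller in the first part, where the nontrivial branch is the one that forces the loss from $\eps$ to $\eps/(1-\eps)$.
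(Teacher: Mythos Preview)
Your proof is correct and follows essentially the same approach as the paper: the same case split on which of $a,b$ is smaller for the first claim, and the same WLOG reduction and arithmetic for the second claim. The arguments are virtually identical.
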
 
\begin{proof}
If $a\leq b$ then $|a-b|\leq \epsilon \cdot a\leq \epsilon \cdot \min(a,b)$ and the first part of the Lemma holds.
If $b<a$ then the hypothesis says that $|a-b|=a-b<\epsilon \cdot a$, so $a(1-\epsilon)<b$, implying $\epsilon \cdot a < (\epsilon/1-\epsilon)\cdot b$ and the first part of the Lemma also holds.

For the second part, the hypothesis states that $|a-b|>\eps \cdot \min(a,b)$. 
Assume, by symmetry, that $a\leq b$. 
Then $b-a>\eps \cdot a$, 
so $b>a \cdot (1+\eps)$. 
Rewriting, $a<\frac{1}{1+\eps} \cdot b$, and so, 
$b-a>b-\frac{1}{1+\eps} \cdot b=\frac{\eps}{1+\eps} \cdot b$, thus 
$|a-b|>\frac{\eps}{1+\eps} \cdot \max(a,b).$
\end{proof}

\begin{definition}(Relative Fr{\'e}chet distance)\label{definition:Frechet}
Let $f$ and $g$ be two functions with domain $\{ 1,\cdots ,n\}$.  We say that $f$ and $g$ are $(\eps_1,\eps_2)$-close, denoted $f \sim_{(\eps_1,\eps_2)} g$,  if for every point $p=(x,f(x))$ of $f$ there exists an $(\eps_1,\eps_2)$-close point $p'=(x',g(x'))$ of $g$ and symmetrically for every point $p=(x,g(x))$ of $g$ there exists an $(\eps_1,\eps_2)$-close point $p'=(x',f(x'))$ of $f$.
In other words:
\begin{itemize}
\item $\forall p=(x,f(x)),~~ \exists p'=(x',g(x'))~:~ p\sim_{(\eps_1,\eps_2)}p'$
\item $\forall p=(x,g(x)),~~ \exists p'=(x',f(x'))~:~ p\sim_{(\eps_1,\eps_2)}p'$
\end{itemize}

\end{definition}

In appendix \ref{a1} we show that this definition is equivalent to  a relative Fr{\'e}chet distance based on a coupling between $f$ and $g$.
The relation $f \sim_{(\eps_1,\eps_2)} g$ is reflexive and symmetric. 
 
\subsection{$(\gamma_1,\gamma_2)$-decreasing frequency functions}

We want the frequency function to decrease fast enough in a precise way.\\

\begin{definition}$((\gamma_1,\gamma_2)$-decreasing) Let $\gamma_2>\gamma_1 >1$. 
A non-increasing function $f$ with domain $\{ 1,\cdots ,n\}$
 is $(\gamma_1,\gamma_2 )$-decreasing if for all $t$ such that $1 \leq \gamma_1. t \leq n$:
$$f(\lceil \gamma_1.t \rceil )\leq f(t)/\gamma_2$$
\label{definition:decreasing}
\end{definition}

 With $(\gamma_1,\gamma_2)$-decreasing  frequency functions, we can  bound the tail of the frequency distribution, Lemma~\ref{tail}, and bound the relative errors incurred by the counters in the SpaceSaving algorithm, Lemma~\ref{decreas}. 
 It is a property held by common distributions:  Zipf distributions are $(\gamma_1,\gamma_2)$-decreasing.

\subsection{Half-Stability}

The frequency function may have discontinuities, but we assume that it is ``not too discontinuous" in a way that we now define. \\

\begin{definition}(Rectangle)\label{rect}
Let $0<\eps_1,\eps_2<1$. An {\em $(\eps_1,\eps_2)$-rectangle} is a set $R \subseteq [1,n]\times [0,\infty]$ with  bottom left corner $(x,y)$ and top right corner $(x(1+\eps_1),y(1+\eps_2))$. \\

\end{definition}

\begin{definition}(Half-stability)\label{hs}
A frequency function $f$  
with domain $\{ 1,\cdots ,n\}$
is {\em $(\eps_1,\eps_2)$-half-stable}  if for every $t$, $1\leq t\leq n$, there exists an  $(\eps_1,\eps_2)$-rectangle $R$ containing $(t,f(t))$ and all the points of $f$ within the horizontal span of $R$. In other words:
$$\forall t\in [1,n]~~\exists (x,y) ~:~ t\in [x,x(1+\eps_1)] \hbox{ and } \forall z\in [x,x(1+\eps_1)], f(z)\in [y,y(1+\eps_2)].$$

\end{definition}

We can interpret $R$ as a window  $[x, x.(1+\eps_1)] $  which includes $t$ such that the values of the frequency function $f$ over the window are  in the interval 
$[y,y(1+\eps_2)]$.  In this case $R=
[x, x.(1+\eps_1)] \times [y, y(1+\eps_2) ]$.  

The function $f$ may have discontinuities but no double discontinuities: $f$ could have a discontinuity to the left of $t$, for example between $t-1$ and $t$, but then the values of $f$ to the right of $t$ are close to $f(t)$; or there could be a discontinuity to the right of $t$, for example between $t$ and $t+1$, but then the values of $f$ to the left of $t$ are close to $f(t)$.
If  $f$ is half-stable then every $t$ belongs to the horizontal span of a certain rectangle $R$, and all elements whose frequency is  approximately close can therefore be substituted for one another, hence sampling is a useful algorithmic approach.

Is half-stability a reasonable assumption? Zipf distributions assume the relative frequency on a stream of length $m$, $f(i)/m=\frac{c}{i^{\alpha}}$ for $\alpha >1$, and power laws consider probabilities with a similar assumption. We ignore rounding problems as each $f(i)$ is an integer value.
Power laws and Zipf distributions are {\em $(\eps,\eps')$-half-stable}. (On the other hand a geometric distribution such as $f(i)=c/2^i$ is not {\em half-stable}, as it has large consecutive discontinuities.)

\begin{lemma} \label{zipf}
If $f$  is the relative frequency is  a Zipf distribution of parameter $\alpha$, then 
$f$ is  $(\eps/\alpha,\eps)$-half-stable.
\end{lemma}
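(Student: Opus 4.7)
The plan is to exhibit, for each $t \in [1,n]$, an explicit $(\eps/\alpha, \eps)$-rectangle witnessing half-stability at $t$. For $t$ away from the right endpoint of the domain I would place the bottom-left corner at $(x,y)$ with $x = t$ and $y = f(\lceil t(1+\eps/\alpha)\rceil)$ (for $t$ close to $n$, slide $x$ slightly below $t$ so that the horizontal span stays within $[1,n]$). Containment of $(t,f(t))$ in the rectangle is then immediate, and since $f(i) = c/i^\alpha$ is non-increasing, the minimum of $f$ over the horizontal span $[x, x(1+\eps/\alpha)]$ is exactly $y$, giving the vertical lower bound $y \leq f(z)$ for free on all integer $z$ in the span.

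The second step is to check the vertical upper bound $f(z) \leq y(1+\eps)$ for every integer $z$ in the span. By monotonicity of $f$ this reduces to bounding the ratio of the maximum to the minimum:
\[
\frac{f(x)}{y} \;=\; \frac{\lceil x(1+\eps/\alpha)\rceil^\alpha}{x^\alpha} \;\leq\; \left(1+\frac{\eps}{\alpha}\right)^\alpha (1+o(1)),
\]
where the $o(1)$ absorbs the integer-rounding overhead, consistent with the paper's convention of ignoring rounding.

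The hard step will be to obtain $(1+\eps/\alpha)^\alpha \leq 1+\eps$. Using $\ln(1+u)\leq u$ I get $(1+\eps/\alpha)^\alpha\leq e^\eps = 1+\eps+O(\eps^2)$, which is the natural bound to hope for. The main obstacle is that, by Bernoulli's inequality, $(1+\eps/\alpha)^\alpha \geq 1+\eps$ for $\alpha\geq 1$, so the target inequality holds only up to a second-order correction in $\eps$. My proof would therefore either restrict attention to sufficiently small $\eps$ and absorb the $O(\eps^2)$ slack into the vertical parameter, or state the conclusion as $(\eps/\alpha,\, e^\eps-1)$-half-stability, which follows directly from the exponential bound above combined with the monotonicity argument. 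Either way, the rectangle constructed in the first two steps discharges the remaining obligations of Definition \ref{hs}.
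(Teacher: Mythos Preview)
Your approach is essentially the same as the paper's: a direct computation of how much $f(i)=c/i^\alpha$ varies over a span of relative width $1+\eps/\alpha$. The paper goes in the reverse direction---fixing the vertical ratio at $1+\eps$ and solving $f(j)=f(i)/(1+\eps)$ to obtain $j=i(1+\eps)^{1/\alpha}$, then writing $(1+\eps)^{1/\alpha}\simeq 1+\eps/\alpha$---but the content is identical. Your observation about Bernoulli's inequality is correct and is precisely the second-order slack that the paper's ``$\simeq$'' is hiding; the paper's proof is heuristic on this point and does not address it, so your proposed remedies (absorbing the $O(\eps^2)$ term for small $\eps$, or restating the conclusion as $(\eps/\alpha,\,e^\eps-1)$-half-stability) are in fact more careful than what the paper itself does.
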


\begin{proof}
  Let us find $j>i$ such that $f(j)\simeq f(i)/1+\eps$.
 We have:
 $$\frac{f(j)}{m}=\frac{c}{j^{\alpha}}\simeq \frac{c}{i^{\alpha}.(1+\eps)}$$
 Then $j \simeq i.(1+\eps)^{1/\alpha}\simeq i.(1+\eps/\alpha)$.
 \end{proof}
 
 In the appendix \ref{hsf} we give a characterization of the half-stable frequency functions.

\begin{definition}
We say that  a rectangle $ (x, x.(1+\eps_1) ), (y, y. (1+\eps_2))$ {\em separates}  two  functions $f$ and $g$ with domain $\{1,\ldots ,n\}$  if 
$$\forall j\in (x,x(1+\eps_1))~~ g(j) \leq y ~~\leq~~  y(1+\eps_2) \leq f(j) $$
 or conversely (exchanging $f$ and $g$). 
\end{definition}

In other words, $f$ is below the rectangle $R$ and $g $ is above $R$. No points $(t,f(t))$ of $f$ or  $(t,g(t))$ of $g$ is in $R$.
Notice that the point $(t,f(t))$ is the left of the rectangle for $t=1$ and at the right of the rectangle for $t=n$. 
The important structural property of half-stable functions is the following result which provides a witness that two half-stable functions are far from each other.\\

\begin{theorem-non}
  \label{main0}
  \thmtexta\\
\end{theorem-non}

\begin{proof}  

By definition \ref{definition:Frechet} on the distance between $f$ and $g$, there exists a point $u=(i,f(i))$ of $f$ such that no point of $g$ is $(3\eps_1, d.\eps_2)$-close to it. 
All the points of $g$ are outside the rectangle 
$$R_d = 
  [i/(1+3\eps_1),i.(1+3\eps_1) ] * [f(i)/(1+d.\eps_2) ,f(i).(1+d.\eps_2) ]$$ which includes the rectangle $R_s$ defined below.
  
Since $f$ is $(3\eps_1, \eps_2)$-half-stable, there exist $x,y$ such that:

$$ x \leq i \leq x.(1+3\eps_1)$$
$$y\leq f(i) \leq y.(1+\eps_2)$$
and the points of $f$  
are inside the rectangle $R_s = 
  [x,x.(1+3\eps_1) ] * [y,y.(1+\eps_2) ]$.
  
  Notice that $R_s \subseteq R_d$ since  $d>1$. The points $(j,g(j)) $ are outside $R_d$ and there are two cases: either the curve $g$ does or does not cross the  rectangle $R_s$.
  It crosses  $R_s$ if there is a point $t$ such that:
  $$  x \leq t \leq x.(1+3\eps_1)  $$
  $$ g(t)\leq f(i)/(1+3\eps_2)$$
  $$f(i).(1+3\eps_2) \leq g(t-1)$$

In the first case, if $g$ does not cross  $R_s$, it is either above or below. Assume it is below, then the rectangle below $R_s$ in  $R_d$, i.e. 
  $$R_0= [x,x.(1+3\eps_1) ] * [f(i)/(1+d.\eps_2),y ]$$
  is an  $(3\eps_1,(d-2). \eps_2)$-rectangle separating $f$ and $g$. Its relative width is 
  $(1+3\eps_1)$ and its relative height is $(1+(d-2).\eps_2)$ as $y\geq f(i)/(1+\eps_2)$ and 
  $$y/ (f(i)/(1+d.\eps_2))\leq (1+d.\eps_2)/(1+\eps_2) \leq (1+(d-2). \eps_2)$$
  
  In the second case, if $g$ crosses  $R_s$, then consider the two rectangles $R_1$ above $R_s$ and $R_2$ below $R_s$ within the span of $R_s$ on each side of $t$, as shown in figure \ref{fig:sr}:

$$R_1= [x,t ] * [y.(1+\eps_2),f(i).(1+d.\eps_2)  ]$$
$$R_2= [t,x.(1+3\eps_1)] * [y,f(i).(1+d.\eps_2)  ]$$
Their relative height is larger than $(1+(d-2).\eps_2)$ because 
$$f(i).(1+d\eps_2)/ y.(1+\eps_2)> (1+d.\eps_2)/ (1+\eps_2) > (1+(d-2).\eps_2)$$
At least one of them has a relative width larger than $(1+\eps_1)$ because the product of their relative width is greater then $ (t/x )* (x.(1+3\eps_1)/t )=(1+3\eps_1)$. At least one of the rectangle has a width greater than $\sqrt{1+3\eps_1}>1+\eps_1$. Hence there is a  $(\eps_1,(d-2). \eps_2)$-rectangle separating $f$ and $g$.
\end{proof} 

 \begin{figure}
\includegraphics[width=120mm]{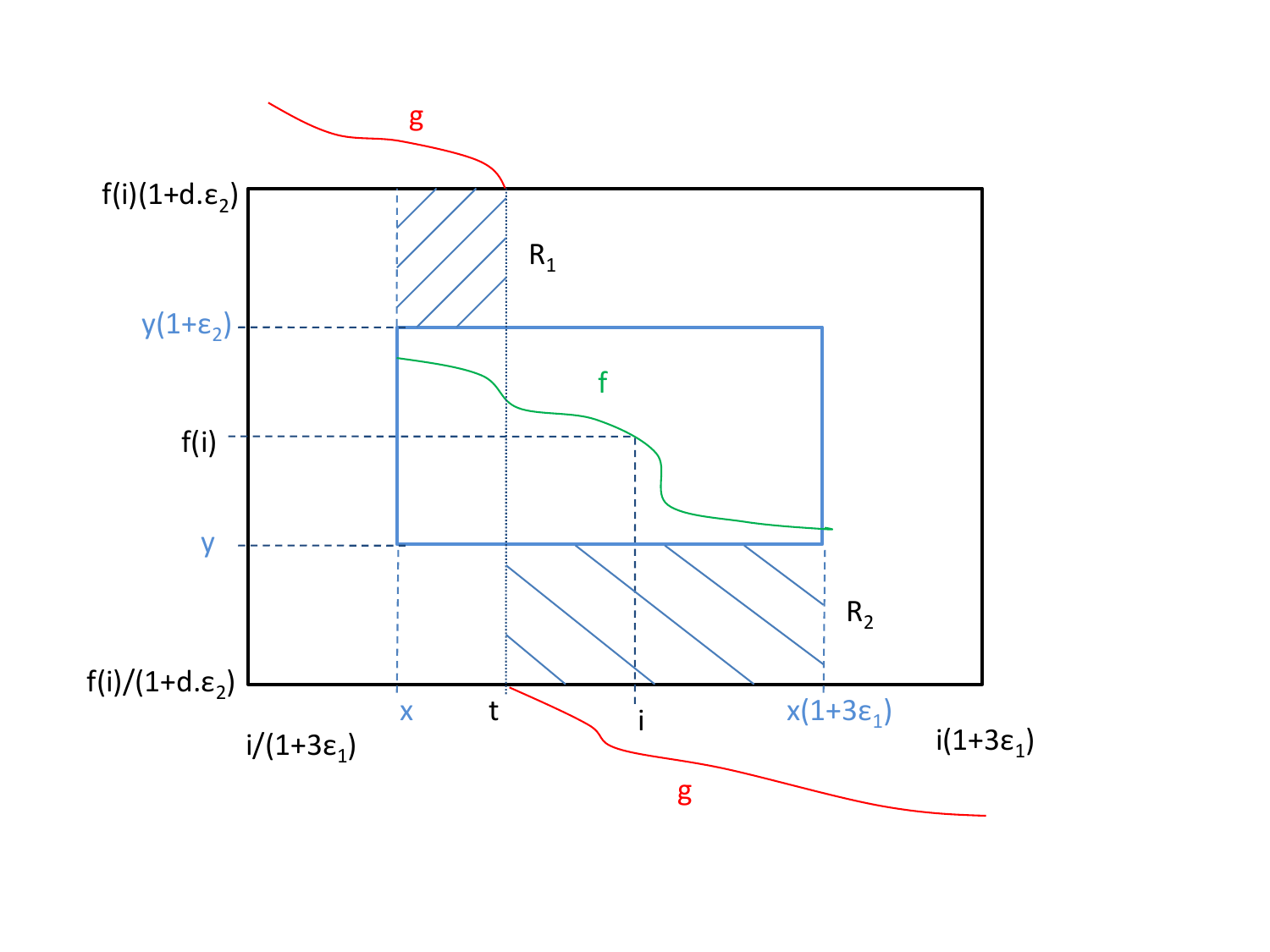}
\caption{Point $(i,f(i))$ is far from $g$, so rectangle $R_d$ (in black), that is centered on $(i,f(i))$, contains no point of $g$. Rectangle $R_d$   includes the smaller rectangle $R_s$ (in blue) that includes $f$ within its span. Examples of functions $f$ (within $R_s$) and $g$ (outside $R_d$) sre drawn.There are two possible separating rectangles, $R_1$  or $R_2$ (shaded). See the proof of Theorem \ref{main0}.}
\label{fig:sr}
\end{figure}
\section{Testers}

We first present a Tester \ref{alg:Tester} when a reference frequency $f$ is given and we want to compare the frequency function $g$ of a stream $s$ with the frequency $f$. We then consider a Tester 
\ref{alg:Tester2} when we are given two streams $s_1$ and $s_2$ and we compare the two frequency functions $g_1$ and $g_2$. We finally consider a Tester \ref{alg:Tester3} where elements are tuples of dimension $d$ and we compare the frequency functions of two marginal distributions.

The SpaceSaving algorithm was introduced in~\cite{MAE2005}  to compute estimates of 
the frequencies of the $k$ most frequent elements in a stream of elements from a universe of size $n$, using a table $T$ with $K\leq n$ entries. Each table entry consists of an element and a counter (plus some auxiliary information), which is a rough estimate of the frequency of the element in the stream. 

The SpaceSaving algorithm is straightforward: the table is kept sorted by counters: $c_1\geq c_2\geq \cdots c_K$. If the next element $e$ of the stream is in $T$, then the algorithm increments the corresponding counter; otherwise, it substitutes $e$ for the element whose counter is minimum ($c_K$ in position $K$), and increments the corresponding counter.  The value of $c_i$ is an estimate of $f(i)$. See \ref{rb} for details.

The following additive error result was proved in the original paper. (Note that $f(i)$ is the $i$th largest frequency whereas $c_i$ is the $i$th largest counter, so they count occurences of different elements in general). 

\begin{lemma}\cite{MAE2005} \label{sb}
Let $K$ denote the size of the table, $N$ denote the length of the stream, and $c_K$ the variable defined in the Space Saving algorithm. Then for every $i\leq K$ we have $|f(i)-c_i|\leq c_K$ and for every $i>K$ we have $f(i)\leq c_K$; moreover, $c_K \leq N/K$.
\end{lemma}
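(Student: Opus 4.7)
The plan is to establish three invariants of the SpaceSaving algorithm by induction on the stream prefix. First, the sum of all counters equals the number of items processed, since every incoming item increments exactly one counter. Second, the minimum counter $c_K$ is non-decreasing: a replacement sets the new counter to $c_K+1$, while the other counters were already at least $c_K$. Third, writing $F$ for the true frequency function on the universe, for any element $e$ currently in the table one has $F(e)\leq c_e\leq F(e)+c_K$, and for any element $e$ not in the table one has $F(e)\leq c_K$. The upper bound in the in-table case uses that when $e$ last entered the table its counter was initialized to one more than the minimum at that moment, and $c_K$ has only grown since; the out-of-table case follows because either $e$ was never seen, or $e$ was evicted at some earlier time when its counter (an overestimate of $F(e)$) was the table minimum, and $e$ has not appeared since (otherwise it would have been reinserted and thus still be in the table).

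From the first invariant, $K\cdot c_K\leq \sum_j c_j = N$, so $c_K\leq N/K$. For $i>K$, the top $K+1$ most frequent elements in the universe cannot all sit in the table, so at least one of them is absent and therefore has frequency at most $c_K$ by the third invariant; monotonicity of $f$ then gives $f(i)\leq f(K+1)\leq c_K$.

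For $i\leq K$ the bound $|f(i)-c_i|\leq c_K$ splits into two counting arguments. To see $c_i\leq f(i)+c_K$, observe that the $i$ elements owning the top $i$ counters each have counter $\geq c_i$, hence true frequency $\geq c_i-c_K$ by the third invariant, giving $i$ distinct elements of frequency $\geq c_i-c_K$ and therefore $f(i)\geq c_i-c_K$. To see $f(i)\leq c_i+c_K$, consider the $i$ most frequent elements in the universe: if all of them are in the table then their counters dominate their frequencies, so at least $i$ counters are $\geq f(i)$ and $c_i\geq f(i)$; if any of them is absent then the third invariant together with monotonicity of $f$ forces $f(i)\leq c_K\leq c_i+c_K$. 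The main subtlety, and the step most easily botched, is that ``$i$-th largest counter'' and ``$i$-th largest true frequency'' generally index different elements of the universe, so both arguments must go through counting elements that cross a threshold rather than directly matching $c_i$ with a counter of the $i$-th most frequent item.
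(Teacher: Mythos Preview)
Your proof is correct and is essentially the standard argument from \cite{MAE2005} that the paper cites; the paper itself does not give a self-contained proof of this lemma but simply lists the four SpaceSaving properties (sum of counters equals $N$; $\eps_j\leq c_K$; $c_j-\eps_j\leq f(\sigma(j))\leq c_j$; and $f(i)\leq c_K$ for elements not in the table), which are exactly your three invariants rephrased. Your counting argument for $|f(i)-c_i|\leq c_K$ via ``number of elements crossing a threshold'' is the right way to handle the mismatch between the $i$th largest counter and the $i$th largest frequency, and it mirrors the reasoning the paper later uses inside the proof of Lemma~\ref{ss} (the step ``$f(\sigma(i))\geq f(i)-c_K$'').
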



Throughout the paper, we use $\ccount(e)$ to denote the value of the counter of element $e$. 
 Another elementary property of the SpaceSaving algorithm, the SpaceSaving Lemma, gives precise bounds on the error. The proof can be found in  section \ref{rb}, with a relative error bound Lemma \ref{decreas}.\\
 
\begin{lemma}[SpaceSaving Lemma]\label{lemma:diffcounts}
Let $K$ denote the size of the table and $c_K$ be as defined in the SpaceSaving algorithm.
Fix $r\leq K$. Let $\tilde{e}$ denote the element with the $r$th largest count in the table, and $e'$ denote the element with the $r$th largest frequency in $s$. Then
$$|\ccount(e')-\ccount(\tilde{e})|\leq c_K.$$
\end{lemma}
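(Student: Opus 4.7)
My plan is to combine Lemma~\ref{sb} with a classical refinement of the SpaceSaving guarantee: for every $i \leq K$ one has $c_i \geq f(i)$, not just $|c_i - f(i)| \leq c_K$. To see the refinement, I would look at the top $i$ elements by frequency in the stream. If all $i$ of them are in the table, their counters each overestimate their true frequency and hence are all $\geq f(i)$, producing $i$ counters at least $f(i)$ and therefore $c_i \geq f(i)$. If on the other hand some top-$i$ element is absent from the table, the SpaceSaving guarantee forces $f(i) \leq c_K \leq c_i$.

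With this refinement, together with the standard SpaceSaving overestimation bound $f(e) \leq \ccount(e) \leq f(e) + c_K$ for every $e$ in the table, I would split into cases based on the counter-rank $j$ of $e'$, writing $\ccount(e') = c_j$ when $e'$ lies in the table. If $j \leq r$, then $c_r \leq c_j$, and the upper bound $c_j \leq f(e') + c_K = f(r) + c_K \leq c_r + c_K$ follows from the refinement $f(r) \leq c_r$, so $0 \leq c_j - c_r \leq c_K$. If $j > r$, then $c_j \leq c_r$, and the lower bound $c_j \geq f(e') = f(r) \geq c_r - c_K$ follows directly from Lemma~\ref{sb}, so $0 \leq c_r - c_j \leq c_K$. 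Either way $|\ccount(e') - \ccount(\tilde e)| \leq c_K$. The remaining case, $e' \notin$ table, is absorbed by adopting the natural convention that an absent element's counter equals $c_K$ (the value it would receive upon insertion), combined with $c_r \leq f(r) + c_K \leq 2 c_K$.

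The main obstacle I expect is precisely the need for the refinement $c_i \geq f(i)$. A naive application of the triangle inequality through $f(r)$ yields only $2 c_K$, since the $c_K$ slack in the overestimation bound and the $c_K$ slack in Lemma~\ref{sb} can a priori combine in the worst direction. The refinement eliminates one of the two slacks, and the case split on whether $j \leq r$ or $j > r$ ensures that in each branch the remaining slack is the favorable one.
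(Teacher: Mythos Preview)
Your proof is correct. The refinement $c_i\geq f(i)$ is sound (and in fact the paper derives exactly this inequality, in passing, inside the proof of Lemma~\ref{ss}), and your two-case split on whether the counter-rank $j$ of $e'$ lies in $[1,r]$ or in $(r,K]$ cleanly avoids the doubling that a naive triangle inequality would give. Your explicit treatment of the case $e'\notin T$ via the convention $\ccount(e')=c_K$ is also fine and, if anything, makes a point precise that the paper leaves implicit.

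The paper's own argument is different in presentation though equivalent in spirit. It plots, for each element $e$ in the stream, the point $(\occ(e),\ccount(e))$, observes that all such points lie in the diagonal strip $x\leq y\leq x+c_K$, and then partitions the strip into three vertical bands determined by the value $c_r=\ccount(\tilde e)$: a band $P_1$ with $\occ>c_r$, a band $P_2$ with $\occ<c_r-c_K$, and the remaining band $P_3$. A counting argument (at most $r-1$ elements can have counter $>c_r$, at most $n-r$ can have counter $<c_r$) forces the element $e'$ of $\occ$-rank $r$ to land in $P_3$, which immediately gives $c_r-c_K\leq \occ(e')\leq c_r$ and hence the bound. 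Your approach is an algebraic case analysis driven by the inequality $c_r\geq f(r)$; the paper's is a geometric cardinality argument on the strip. Both exploit the same asymmetry (counters overestimate frequencies) to kill the second $c_K$, and neither is materially shorter or more general than the other.
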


The SpaceSaving algorithm is  used when focusing on the frequencies of the top few elements of the universe. Here, we would like to leverage the power of the SpaceSaving algorithm to test whether the \emph{entire} distribution of frequencies of the stream approximates a given frequency distribution, with small \emph{relative} error.   For example, this can be used to check whether a stream of graph edges defines a graph whose degree sequence is close to a predicted degree sequence. 

With that, we can describe our streaming algorithm to test whether the frequency distribution $g$ defined by the elements of a stream is close to a specified frequency distribution $f$.
Let $z_{i}=\lceil(1+\eps_1^2)^i  \rceil$ for $i\geq 1$. 
The Tester will estimate the rank $z$ of an element which is not quite $z_i$ but close.  By the half-stability property its frequency must be close to $g(z_{i+1})$ or to $g(z_{i-1})$.
It explains the $3$ conditions on the Tester.

\subsection{Tester with a reference frequency function $f$ }

 The streaming Algorithm~\ref{alg:Tester} consists of the following three steps in parallel for all $i=\{ 1,2,\cdots, \lceil \log_{1+\eps_1 }n\rceil \} $:\\
 
\begin{enumerate}
\item 
We sample each element of the universe uniformly with probability $1/a_i$ where
$a_i= \lceil\eps_1^2z_i/\log\log n \rceil $ and discard from stream $s$ all elements that are not in the sample. This defines a substream $s_i$, such that the element that is the $r$th most frequent in $s_i$ is likely to be close to the $z_i$th most frequent element in $s$. For $i$ large enough $r\sim \log\log n / \eps_1^2$.

\item
On substream $s_i$, we  run the 
SpaceSaving algorithm with a table $T_i$ of size $K=O(\log n\cdot \log\log n)$ to approximately track the frequencies of the most frequent elements; 
Let $c_{i,r}$ denote the value of the $r$th largest counter in table $T_i$. 

\item 
The Tester accepts if, for most $i$'s ({\em most}, in a sense to be specified in the detailed algorithmic description), $c_{i,r}$ is close to $f(z_i)$. 
\end{enumerate}

The complete description is spelled out in Algorithm~\ref{alg:Tester}. 
The algorithm would be exact if the $z_i$th most frequent element of $s$ belonged to $s_i$; if it had the $r$th largest counter in $T_i$; and if that counter was exactly equal to its frequency $f(z_i)$. Because of the noise in the sampling and in the SpaceSaving algorithm, none of those three properties is guaranteed, and they are sources of error. With additional assumptions on the input, we are able to bound the errors. \\

 \begin{notation}\label{n1}
The frequency function $g$ of the stream $s$ and 
the reference frequency $f$ are both from $\{1,2,..n\}$ to   $N$. 
 Both functions $f$ and $g$ are $(\gamma_1,\gamma_2)$-decreasing (see Definition~\ref{definition:decreasing}), $\eps_1,\eps_2$ are the Fr\'echet parameters (see Definition~\ref{definition:Frechet}), and $\delta$ is the desired error probability of the Tester (see Definition~\ref{definition:Tester}).
  Let $U$ be the set of elements $e$ and $\occ(e)$ is the number of occurences of $e$.  For each stream $s_i$, the counter $c_{i,r}$ of the Spacesaving algorithm is compared with  $f(z_i)$ where $r=\lceil  \frac{z_i}{ a_i } \rceil =\lceil  \frac{\log\log n }{ \eps_1^2} \rceil  $.  

 Let $K$ denote the size of the table used by Algorithm~\ref{alg:Tester}  for the substream $s_i$. We choose $K=O(\log n \cdot \log\log n)$. In Lemmas \ref{re1}, \ref{re}, it is precisely:
 $$K= 2\cdot\frac{z_i}{ a_i } \cdot \frac{(\gamma_1-1)}{1-\gamma_1/\gamma_2}\cdot (1+\eps_1^2)\cdot \frac{\log n}{\eps_2 \delta}=\frac{2 }{ \eps_2 . \eps_1^2} \cdot \frac{(\gamma_1-1)}{1-\gamma_1/\gamma_2}\cdot \frac{\log n .\log\log n }{\delta}\cdot(1+\eps_1^2)$$
 
 \end{notation}

{

\begin{algorithm}[ht]
{\bf Tester Algorithm  $A_1(\eps_1,\eps_2,\delta$; half-stable function $f$)}\\
 \KwData{a stream $s$ from a universe $\{ e_1, e_2, \ldots ,e_n \}$.  }

 ~~~~~~~~~{\bf Notations}: $z_i\gets \lceil(1+\eps_1^2)^i \rceil$ ;~~ $K\gets O(\log n\cdot \log\log n)$ \;
 
 ~~~~~~~~~~$a_i\gets \lceil\eps_1^2z_i/\log\log n \rceil $ ;~~ $h_i\gets$ uniform hash function $[1,n] \rightarrow [1,a_i]$ \;
 
  ~~~~~~~~~~$r \gets \lceil  \frac{z_i}{ a_i } \rceil =\lceil  \log\log n/\eps_1^2 \rceil$  ;
 
{
 \For{each $i=1,2,\ldots ,\lceil \log_{1+\eps_1 }n\rceil $ in parallel:}
{
{\bf  1. Defining substreams} \;
 Let $s_i$ denote the substream consisting of those elements $e$ where $h_i(e)=1$ \;
{\bf  2. Dealing with substreams $s_i$ in parallel}  \;

	on substream $s_i$, run SpaceSaving with a table $T_i$ of size $K$\;
	 $c_{i,r}\gets$ the counter at position $r$ of  table $T_i$  \;
	
}

{\bf  Coherence Test} \;

 {

\If{ $z_i \leq  \lceil 1/  \eps_1^3 \rceil  ~ \wedge ~c_{i,r}\not \simeq_{5.\eps_2} f(z_i)$ }
	 {output NO}
}
\If{ $\exists i ~~~z_i >\lceil 1/ \eps_1^3 \rceil  ~ \wedge \\ ~c_{i,r} \not \simeq_{{5.\eps_2}} f(z_i)~ 
\wedge 
 ~c_{i,r} \not \simeq_{5.\eps_2} f(z_{i-1})~ \wedge ~c_{i,r} \not \simeq_{5.\eps_2} f(z_{i+1})$ }
	 {output NO}
{output YES}\\
}

 \caption{A Streaming Tester to check whether the function $g$ of frequencies of elements arriving in a stream matches a given frequency function $f$}\label{alg:Tester}
\end{algorithm}

}

If  $z_i \leq  \lceil 1/  \eps_1^3 \rceil $, then $a_i = \lceil\eps_1^2z_i/\log\log n \rceil  \leq \lceil  1/\eps_1.\log\log n \rceil=1$ there is only one stream  and we query the same Spacesaving table in position $r= \lceil  z_i/a_i \rceil=\lceil  z_i \rceil$. We start distinct substreams for $a_i\geq 2$ and then query the position $r= \lceil  z_i/a_i \rceil$ which is independent of $i$.

\subsubsection{Analyzing the Tester Algorithm \ref{alg:Tester}}

What does this algorithm accomplish?  We use the Property Testing framework with the relative Fr\' echet distance.
 The notion of a Property Tester goes back to \cite{B93} and the streaming version to \cite{F02}. 
 
\begin{definition}\label{definition:Tester}
Let $\eps_1,\eps_2,\delta \in (0,1)$. A {\em streaming Tester for a  frequency function $f$} is a streaming algorithm $A$ which, given   a 
function $f$ over $\{ 1,2,\cdots ,n\}$,  takes as input  a stream of elements from a universe of size $n$ defining a frequency function $g$ such that $g(j)$ is the number of occurrences of the $j$th most frequent element in the stream and:  
\begin{itemize}
\item  if $f=g$   then $A$ accepts with probability  at least $1-\delta$; and 
\item if $g$ is $ ( \eps_1, \eps_2)$-far from $f$ for the relative Fr{\'e}chet distance  then $A$ rejects with probability at least $1-\delta$.
\end{itemize}
\end{definition}

In this definition, the gap  between  $(0,0)$ and $ ( \eps_1, \eps_2)$ assumes the parameters $\eps_1, \eps_2$ are fixed  in the analysis.
We want Algorithm~\ref{alg:Tester} to be a streaming Tester for $f$. 
We assume that the frequency functions are  $(\eps_1,\eps_2)$-half-stable  (definition \ref{hs}) and  $(\gamma_1,\gamma_2)$-decreasing (definition \ref{definition:decreasing}).
We can now state the main result:\\

\begin{theorem-non}
  \label{main}
  \thmtext
\end{theorem-non}

We generalize the Tester to a tolerant version in Corollary \ref{tt} where we distinguish between
$f$ and $g$ are $ ( \eps_1, \eps_2)$-close from the case where $f$ and $g$ are $ ( \eps'_1, \eps'_2)$-far with $\eps'_i > \eps_i$.

\subsection{Tester to compare two streams}\label{2s}
Assume two streams $s_1$ and $s_2$ of elements on two different domains of the same size $n$.  The algorithm $A_2$  keeps two sets of tables  $T_i^1$ for the stream $s_1$ and $T_i^2$ for the stream $s_2$, for $i=1,2,\ldots ,\lceil \log_{1+\eps_1 }n\rceil $.\\

We use the same definition \ref{definition:Tester}, with different parameters $c_1, c_2$. We modify the previous Tester $A_1$ by replacing the reference frequency function $f$ by the approximation $ \bar{g_1}$ of $g_1$ defined in section \ref{2streams}. The step function $ \bar{g_1}(t)  $ is defined by the counters
$c_{i,r}^1$. Because of errors, the $c_{i,r}^1$ are not necessarly monotone, so we define:
$$\bar{c}_{i,r}^1=    Min _{j\leq i }\{ c_{j,r}^1 \}  $$
the monotone version of $c_{i,r}^1$. In the coherence test we will compare $\bar{c}_{i,r}^1$ with $c_{i,r}^2$ as
$ \bar{g_1}(t)  $ is the step function defined by $\bar{c}_{i,r}^1$ and replaces $f$ in the Tester $A_1$.\\


{
\begin{algorithm}[ht]
{\bf Tester Algorithm A$_2(\eps_1,\eps_2,\delta$)}\\
 \KwData{Two streams $s_1$  and $s_2$ }
 
 ~~~~~~~~~{\bf Notations}: $z_i\gets \lceil(1+\eps_1^2)^i \rceil$ ;~~ $K_1\gets O(\log n\cdot \log\log n)$ \;
 
 ~~~~~~~~~~$K_2\gets O(\log n\cdot \log\log n)$ ;~~$a_i\gets \lceil\eps_1^2z_i/\log\log n \rceil $ ;
 
   ~~~~~~~~~~$h_i\gets$ uniform hash function $[1,n] \rightarrow [1,a_i]$ ;~~
$r \gets \lceil  \log\log n/\eps_1^2 \rceil$  ;

{
 \For{each $i=1,2,\ldots ,\lceil \log_{1+\eps_1 }n\rceil $ in parallel:}
{
{\bf  1. Defining substreams} \;
 Let $s_i^1$ (resp. $s_i^2$) denote the substream consisting of those elements $e$ of $s_1$ (resp. $s_2$) where $h_i(e)=1$ \;
{\bf  2. Dealing with substreams $s_i$ in parallel}  \;

	on substream $s_i^1 $ (resp. $s_i^2$), run SpaceSaving with a table $T_i^1$ (resp. $T_i^2$) of size $K$\;
	

~~$c_{i,r}^1\gets$ the counter at position $r$ of  table $T_i^1$  \;
~~$c_{i,r}^2\gets$ the counter at position $r$ of  table $T_i^2$  \;


}
{\bf  Coherence Test} \;
\If{ $z_i \leq  \lceil 1/  \eps_1^3 \rceil  ~ \wedge ~\bar{c}_{i,r}^1\not \simeq_{5.\eps_2}c_{i,r}^2 $}
	 {output NO}

\If{ $\exists i ~~~z_i >\lceil 1/ \eps_1^3 \rceil  ~ \wedge ~\bar{c}_{i,r}^1\not \simeq_{5.\eps_2}c_{i,r}^2 ~ \wedge ~\bar{c}_{i-1,r}^1\not \simeq_{5.\eps_2}c_{i-1,r}^2 ~ \wedge ~\bar{c}_{i+1,r}^1\not \simeq_{5.\eps_2}c_{i+1,r}^2 $ }
	 {output NO}
{output YES}
}

 \caption{Streaming Tester to compare two  streams}\label{alg:Tester2}
\end{algorithm}
}

\begin{theorem-non}
  \label{main3}
  \thmtextc
\end{theorem-non}
\subsection{Tester to compare two marginal distributions}\label{2m}
If $f$ is unknown and $g_1$ and $g_2$ are two marginal distributions defined from $g$ where each element $e_i$ is a $d$-tuple $(e_{i,1},...e_{i,d})$ and two projections $\pi_1, \pi_2$ selecting two subtuples. The marginal frequency distributions $g_1$ and $g_2$, defined similarly as in the previous sections on these subtuples, which may have different supports  but approximate sizes. In this case we take the larger domain as a reference and complete the smaller domain by new points with frequency $0$.

We present $A_3$ 
which directly calls $A_2$.\\

{
\begin{algorithm}[ht]
{\bf Tester Algorithm  A$_3(\eps_1,\eps_2,\delta,\pi_1,\pi_2$)}\\
 \KwData{a stream $s$ of $d$-tuples from a universe and two projections $\pi_1,\pi_2$  }
 
 {\bf  1. Defining substreams} \;
 Let $s^1$ (resp. $s^2$) denotes the substream consisting of those elements $\pi_1(e)$ (resp.   $\pi_2(e)$) where $h_i(e)=1$\;
 
 {\bf  2. Apply A$_2(\eps_1,\eps_2,\delta)$ on  $s^1$ and $s^2$ \;

}
 \caption{Streaming Tester to compare two marginal distributions}\label{alg:Tester3}
\end{algorithm}
}

\begin{corollary}\label{st}

\thmtextb
\end{corollary} 



\section{General space lower bounds}

 We first consider the case when $f$ is uniform. We then assume that $f$ is  $(\gamma_1,\gamma_2)$-decreasing  and want to test the half-stability property.

\subsection{A lower bound when $f$ is uniform}

A classical observation is that in the worst-case, the approximation of $F_{\infty}={\rm Max}_j ~f(j)$ requires space $\Omega(n)$, using a standard reduction from Communication Complexity.  In \cite{KPW21}  the Unique-Disjointness problem for $x,y \in\{0,1\}^n$ is reduced  to the approximation of $F_{\infty}$ on a stream $s$. Another standard problem which requires space $\Omega(n)$ for the One-way Communication complexity is the Index$(x,y)$ problem, see \cite{KN96}, where $x\in \{0,1\}^n$, $y\in \{1,2,...n\}$ and the goal is to compute $x_y\in \{0,1\}$. We write
Index$(x,y)=x_y$, as Alice holds $x$ of length $n$, Bob holds $y$ of length $\log n$ and only Alice can send information to Bob. Notice that we can assume that $| \{i : ~x_i=1\}  | =O(n)$ for example $n/2$, otherwise Alice would directly send these positions to Bob.

We show in the next result a simple reduction from the Index problem to the {\em Tolerant streaming Test problem}
which given $f$ and a stream $s$ over the items $a_1,...a_n$, which defines a frequency $g$,  decides:
either $f \sim_{\eps_1,\eps_2} g$  or $f \not\sim_{\eps'_1,\eps'_2} g$  with h.p.\\

\begin{theorem-non}\label{lb}
The Tolerant streaming Test problem  requires space $\Omega(n)$ when $f$ is uniform over its support.

\end{theorem-non}
\begin{proof}
Consider the  following reduction from Index to the streaming Test problem. Given
 $x \in\{0,1\}^n$ and $y\in \{1,2,...n\}$ the inputs to Index, let $f$ be the frequency distribution on a domain $\{1,2,...n\}$  for a stream on the elements $\{a_1,a_2,...a_n\}$ and let 
 $f(i)=1$ for $i=1,...k=O(n)$ if  $x_i=1$.
The stream $s$ is determined by the elements of $x$ of weight $1$, followed by the element
$a_y$ associated with $y$, i.e.
$a_{i_1},...a_{i_k}$ where  $x_{i_j}=1$,  followed by $a_y$. It is of length $k+1$.

If   Index$(x,y)=1$ then  frequency  $g$ of the stream has an element of frequency $2$. The point $(1, 1)$ of $f$ is far from the closest point
$(1, 2)$ of $g$. Hence  $f \not\sim_{10\eps,10\eps} g$. 

If   Index$(x,y)=0$ then $g$ is uniform over $k+1$ elements. The points 
$(i, 1)$ of $f$ for $i=1,2...k$ are at relative distance $(0,0)$ 
 from the closest point of $g$ for $i=1,2...k$. The point 
$(k+1, 1)$ of $g$ is at relative distance $(\eps,0)$ from the point $(k,1)$ of $f$.
Hence  $f \sim_{\eps,\eps} g$ for $\eps=1/k$ and $n$ large enough.

We reduced a Yes-instance to Index to a No-instance of Test, and a No-instance of Index to a Yes-instance of Test.

As Index requires space $\Omega(n)$, so does the  streaming Test problem.
\end{proof}

\subsection{A lower bound for the half-stability property when  $f$ is $(\gamma_1,\gamma_2)$-decreasing}

In this section, we assume without loss of generality that $f$ is $(\gamma_1,\gamma_2)$-decreasing for $\gamma_1=\gamma_2=2$ and we reduce the Index problem between Alice and Bob to the decision whether $g$ is half-stable or not.

Given a binary word $x$ of length $n'$ given to Alice and a value $i \in \{1,2,...n'\}$ given to Bob, let $n=2n'$. We construct a stream  on an alphabet of size $3n/4$. The elements are $\{a_1,...a_n'\}$ associated to the input string of length $n'$ and $\{b_1,...,b_{n'/2}\}$ to make sure that $f$ is $(\gamma_1,\gamma_2)$-decreasing.

Assume $\gamma=2$ and let $f_0$ be the step function described in figure \ref{stepf}: $f(t)=2^{k-i}$: for $2^{i-1}\leq t <2^i$. If $n'=8$, let $n=2n'=16=2^k$, hence $k=4$.

The stream starts with a prefix using the $b_i$ followed with a suffix using the $a_j$.

 \begin{figure}[ht]
\includegraphics[height=70mm]{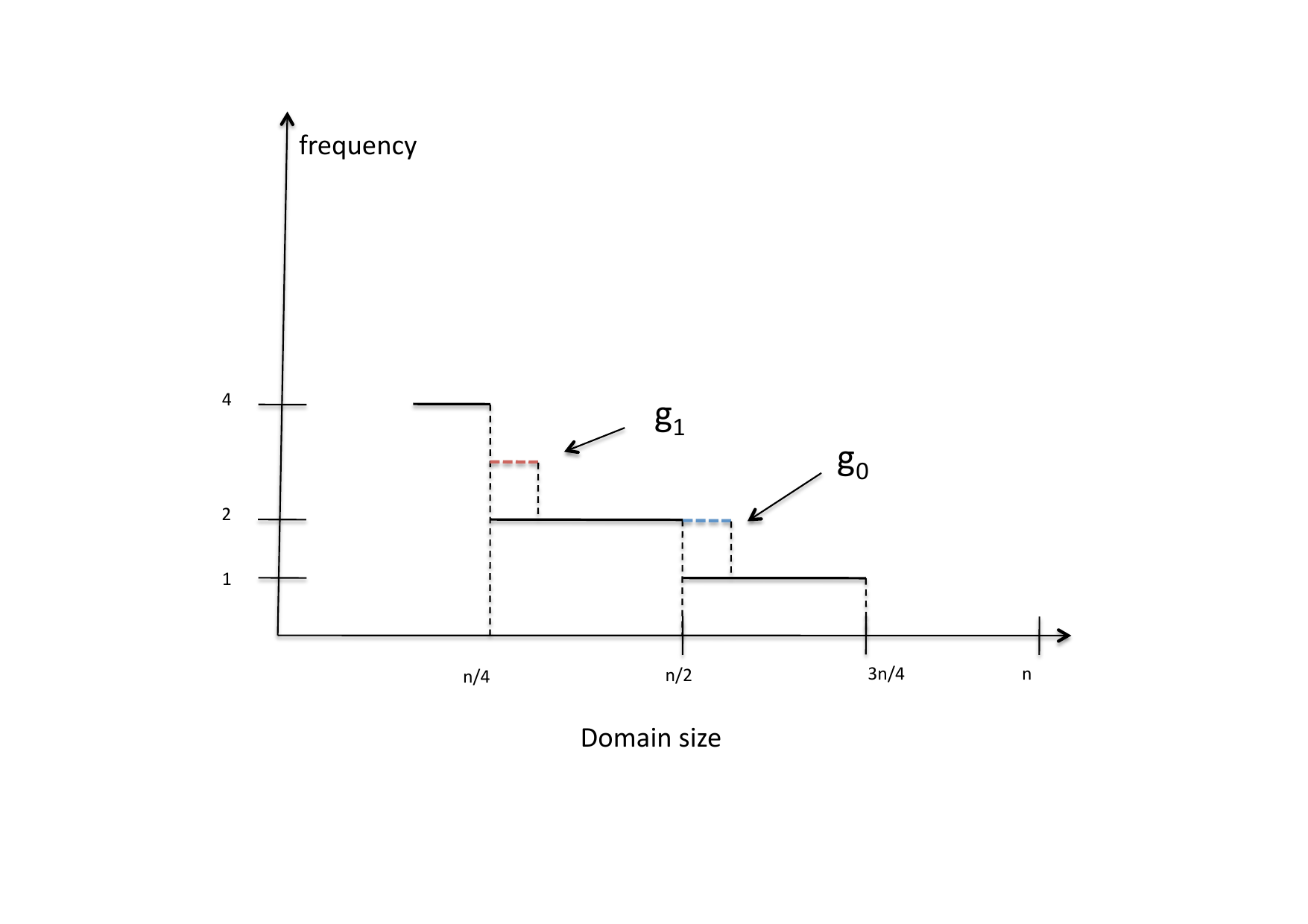}
\caption{Step function $f_0$ and frequency functions $g_0$ and $g_1$ used in theorem \ref{lb2}} 
\label{stepf}
\end{figure}


\begin{theorem-non}\label{lb2}
The half-stability property  requires space $\Omega(n)$ when $f$ is $(\gamma_1,\gamma_2)$-decreasing.
\end{theorem-non}
\begin{proof}
Consider the  following reduction from Index to half-stability. Given
 $x \in\{0,1\}^n$ and $y\in \{1,2,...n\}$ the inputs to Index, let $f$ be the $f_0$ function described in figure \ref{stepf}. 

The stream $s$ is determined by a prefix using the elements $\{b_1,...,b_{n'/2}\}$ , followed by the elements of $x$ of weight $1$ repeated twice, followed by the elements of $x$ of weight $0$,  followed by the element
$a_y$ associated with $y$.  The prefix guarantees that  $g$ may be close to a $(\gamma_1,\gamma_2)$-decreasing function and the frequencies of the $b_j$ follow the function $f_0$.  After the prefix, the stream is  
$a_{i_1}^2,...a_{i_k}^2$ where  $x_{i_j}=1$ where $k=O(n)$,  followed by $a_{i_1},...a_{i_k}$ where  $x_{i_j}=0$, followed by $a_y$. It is of length $n'+1$.

If   Index$(x,y)=1$ then  the  frequency  $g_1$ of the stream has an element of frequency $3$, as $g(n/4+1)=3$, as $a_y$ is one of the elements $a_{i_j}$ which appears as $a_{i_j}^2$. The point $(n/4, 4)$ of $f$ and the  point $(n/4+2,2)$ of $f$ are $\eps$ far from the point
$(n/4+1, 3)$ of $g_1$, for $\eps> 1/2$. Hence  $f \not\sim_{10\eps,10\eps} g_1$  for $\eps <1/20$.

If   Index$(x,y)=0$ then the frequency  $g_0$ of the stream  satisfies $f \sim_{\eps,\eps} g_0$, because $g_0(n/2 +1)=2$ and the point $(n/2, 2)$ of $f$ is 
$(\eps,\eps)$ close to the point $(n/2 +1, 2)$ of $g_0$.
We reduced a Yes-instance of the  Index problem to a No-instance of Test, and a No-instance of Index to a Yes-instance of Test.

As Index requires space $\Omega(n)$, so does the  streaming Test problem.
\end{proof}
As an example, suppose $n'=8$ and $x=01100110$, $k=n'/2=4$, $y=4$  hence $x_y=0$.
The stream is:
$$s= b_1^{16}.b_2^{8}.b_3^{4}.b_4^{4}. a_2^2 a_3^2 a_6^2 a_7^2.a_1a_4a_5a_8.a_4$$
In this case $f \sim_{\eps,\eps} g$. If $y=3$, we replace the last item by $a_3$ and the frequency of $a_3$ is $3$, hence $g$ has a double jump and is far from $f$.

 \section{The Spacesaving algorithms and a  relative bound on the error}

Given a stream of $N$ elements drawn from a universe $U$ of size $n$, let $f(j)$ denote the frequency (number of occurences) of the $j$th most frequent element, so that  $ f(1)\geq f(2)\geq \cdots \geq f(n) \geq 0$ and $\sum_{i=1}^n f(j)=N$.


We are particularly interested in frequencies which have a compact representation. 
For example, 
{\em uniform frequencies} where $f(i)= N/n$, 
 {\em Zipf} frequencies (also called heavy-tailed, or scale-free, or power-law)  with parameter $\alpha$, where $f(i)= cN/i^\alpha$ with $c=1/\sum_{1\leq j\leq n}( 1/j^{\alpha} )$, and  
 {\em  geometric} frequencies where $f(i)=cN/2^{i}$ with $c=1/\sum_{1\leq j\leq n} 1/2^j$.

For Zipf frequencies with parameter $\alpha$ the maximum frequency is $f(1)=\Theta(N)$ if $\alpha >1$ and $f(1)=\Theta(N/\log n)$ if $\alpha=1$. 

\subsection{The Spacesaving algorithm}\label{rb}

The SpaceSaving algorithm introduced in~\cite{MAE2005} computes an approximation of the frequencies of the $k$ most frequent items elements in a stream for the {\em insertion only} model and an additive error. It uses a table $T$ of triplets $T[j]=(e, c_j, \eps_j)$ where $e \in U$ is an element of the universe $U=\{e_1,e_2,\cdots ,e_n\}$, $c_j \in N$ is a counter approximating the number of occurences of element $e$ and $\eps_j \in N$, $\eps_j< c_j$ is a bound on the error between the counter and the correct number of occurences of $e$ in the stream. The table, of size $K$, is ordered by counters: $c_1 \geq c_2,...\geq c_K$. Assume $k<K$. \\

\begin{algorithm}[H]\label{topk}
{\bf Algorithm Top-k$(k, K)$}\\
 \KwData{a stream $s$ of length $N$, from a universe $U=\{ e_1,e_2,\ldots ,e_n\}$.}

 $T[j]\gets (-,0,0)$ for every  $j\in[1,K]$\;
 \While{stream $S$ is flowing}{
  read next element $e$ of $S$\;
  \eIf{$e$ is in the table $T$ at position $j$ }{
   increment $c_j$\;
   }{
   Replace $T[K]=(e', c_K, \eps_K)$ by 
   $T[K]=(e, c_K+1, c_K)$ \;
   Reorder $T$ by non-increasing values of $c_j$ \;
  }
 }
  \KwResult{the  sequence $S$ of the first $k$ elements}
  \vspace{5mm}

 \caption{The Top-k algorithm}
\end{algorithm}

\vspace{5mm}

\begin{notation}
For a table position $j\in [1,K]$, let  $\tilde{e_j}$ denote the element with the $j$th largest count in the table.
We define $\sigma(j)=i$ if         $\tilde{e_j}$ has frequency $f(i)$.
\end{notation}
Thus $f(\sigma(i))$ is the frequency associated with the element whose counter is  $c_i$.  Algorithm Top-k guarantees that $\sigma $ is injective. 
In the ideal case in which $\sigma(i)=i$ for all $i\in [1,K]$, then $T $ contains the $K$ most frequent elements of $U$,  ordered by non-increasing frequency.
Algorithm Top-k satisfies the following properties:
\begin{enumerate}
\item
$\sum_{1\leq j\leq K} c_j=N$ \label{prop:1}
\item
For all $j\leq K $, $\epsilon_j\leq c_K$.\label{prop:2}
\item
For all $j\leq K $, $c_j - \epsilon_j \leq f(\sigma(j)) \leq c_j$.\label{prop:3}
\item 
For each element $e\in U$ not in $T$, i.e. for any index $i\notin Im(\sigma)$:
$f(i) \leq c_K$.\label{prop:4}
\end{enumerate}

The size $K$ of the table  can be tuned to provide  the approximate Top-k elements or the exact Top-k elements, in some special cases.
Let $S^*$ be the set of top $k$ most frequent elements.
The following Lemma is implicitly present in~\cite{MAE2005}. 

\begin{lemma}\label{exss}(adapted from \cite{MAE2005})
\begin{enumerate}
\item
(Exact result) If  $c_K\leq f(k) -f(k+1)$, then the  Top-k algorithm gives the  exact solution $S^*$.
\item
(Approximate result) If  $c_K\leq \eps.f(k)$, then $S$ contains every element $e_i$ such that 
$f(i) \geq (1+\eps).f(k) $ and no element $e_i$ such that  $f(i) \leq (1-\eps).f(k)$.
\end{enumerate}
\end{lemma}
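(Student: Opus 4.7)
The plan is to derive both parts from the four properties~(1)--(4) of the SpaceSaving algorithm listed just above the lemma, combined with the fact that the table is kept sorted by counter value. In both regimes, I would first establish a preliminary observation: every top-$k$ element of the stream must appear in the table. Indeed, if an element $e_i$ with $i\le k$ were absent, Property~4 would give $f(i)\le c_K$, and in both regimes the hypothesis on $c_K$ forces $c_K<f(k)\le f(i)$, a contradiction (using $\eps<1$ in the approximate case).

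For Part~1 (Exact), I would show that any top-$k$ element $e_i$ at table position $j$ has counter $c_j\ge f(i)\ge f(k)$ by Property~3, while any non-top-$k$ element $e_{i'}$ (with $i'>k$) sitting at position $j'$ has counter $c_{j'}\le f(i')+\eps_{j'}\le f(k+1)+c_K\le f(k)$, by Properties~2 and~3 combined with the hypothesis $c_K\le f(k)-f(k+1)$. Hence top-$k$ elements have weakly larger counters than non-top-$k$ elements in the table, so the sorted order places the top-$k$ elements in the first $k$ positions and the output is $S^*$.

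For Part~2 (Approximate), the preliminary observation still applies. For the inclusion direction, suppose $e_i$ satisfies $f(i)\ge(1+\eps)f(k)$ and occupies table position $j$. Property~3 gives $c_j\ge f(i)\ge(1+\eps)f(k)$, and any other element $e_{i'}$ at a position $j'<j$ must have $c_{j'}\ge c_j$, hence $f(i')\ge c_{j'}-c_K\ge(1+\eps)f(k)-\eps f(k)=f(k)$, so $e_{i'}$ is itself a top-$k$ element. Since there are at most $k-1$ top-$k$ elements other than $e_i$, the number of positions strictly before $j$ is at most $k-1$, forcing $j\le k$ and hence $e_i\in S$. For the exclusion direction, if $e_i$ with $f(i)\le(1-\eps)f(k)$ were at a position $j\le k$, Property~3 would give $c_j\le f(i)+c_K\le(1-\eps)f(k)+\eps f(k)=f(k)$; combined with the preliminary observation that all $k$ top-$k$ elements lie in the table and have counters $\ge f(k)$, a pigeonhole argument on the top-$k$ positions contradicts the sorted order.

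The main obstacle is handling tie cases cleanly. When $c_K$ saturates its upper bound, the inequalities above collapse to equalities, and which of several tied elements ends up in a given top-$k$ position may depend on the tie-breaking convention implicit in the SpaceSaving updates. I would address this either by noting that the lemma's conclusions concern the set $S$ rather than a specific ordering, or by relying on the strict inequalities that hold generically when the gap $f(k)-f(k+1)$ (respectively $\eps f(k)$) strictly exceeds $c_K$, leaving the boundary case as a tie-breaking convention.
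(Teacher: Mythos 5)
Your proof is correct and follows essentially the same route as the paper: both parts reduce to comparing counters of top-$k$ versus non-top-$k$ elements via Properties 2--4 and the sortedness of the table, and the boundary/tie issue you flag at the end is present in the paper's own argument as well (which silently asserts strict inequalities that are only weak in general). The one inference worth tightening is in your Part~2 inclusion step, where ``$f(i')\geq f(k)$, so $e_{i'}$ is a top-$k$ element'' can fail when more than $k$ elements tie at frequency $f(k)$; this is cured either by the strict-inequality reading you propose or by instead bounding, as the paper does, the counters of all elements of $U\setminus S^*$ by $f(k+1)+c_K\leq(1+\eps)f(k)$, which counts against the set $S^*$ of size exactly $k$.
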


\begin{proof}
Assume $c_K\leq f(k) -f(k+1)$. From property~\ref{prop:4}, if $e\in U$ is not in the table $T$, its frequency 
$f(i) \leq c_K$. As $c_K\leq f(k) -f(k+1) < f(k)$, hence $f(i) < f(k)$ and $e \notin S^*$. Let us show that if $e\in T-S$, then $e \notin S^*$.

Let $i,j$ two elements of $T$ such that $f(i) > f(j) +c_K$. The corresponding counters
$c_{\sigma^{-1}(i)}$ and $c_{\sigma^{-1}(j)}$ are in the right order, i.e.
$$c_{\sigma^{-1}(i) }>c_{\sigma^{-1}(j)}$$
Apply properties~\ref{prop:2} and~\ref{prop:3}:

$$c_{\sigma^{-1}(i)} \geq f(i) > f(j) +c_K \geq f(j) +\eps_j \geq c_{\sigma^{-1}(j)}$$

If $i\in \{1,2,...k\}$ and $j \notin \{1,2,...k\}$, then:
$$f(i)- f(j) > f(k) -f(k+1) \geq  c_K $$
Hence the counters $c_{\sigma^{-1}(1)},....c_{\sigma^{-1}(k)}$ are all greater than the counters $c_{\sigma^{-1}(j)}$ for $j>k$. Hence $S=S^*$.\\

Assume $c_K\leq \eps.f(k)$, the figure \ref{c-f} shows that if $f(i) < (1-\eps)f(k)$ then
$c_{\sigma^{-1}(i)}$ is smaller than all the counters of elements of $S^*$, hence 
$i \notin S$. If $f(j) >(1+\eps)f(k)$, then 
$c_{\sigma^{-1}(j)}$ is larger than all the counters of elements of
$U-S^*$, hence $i \in S$.
\end{proof}

 \begin{figure}[h]
\includegraphics[width=\linewidth]{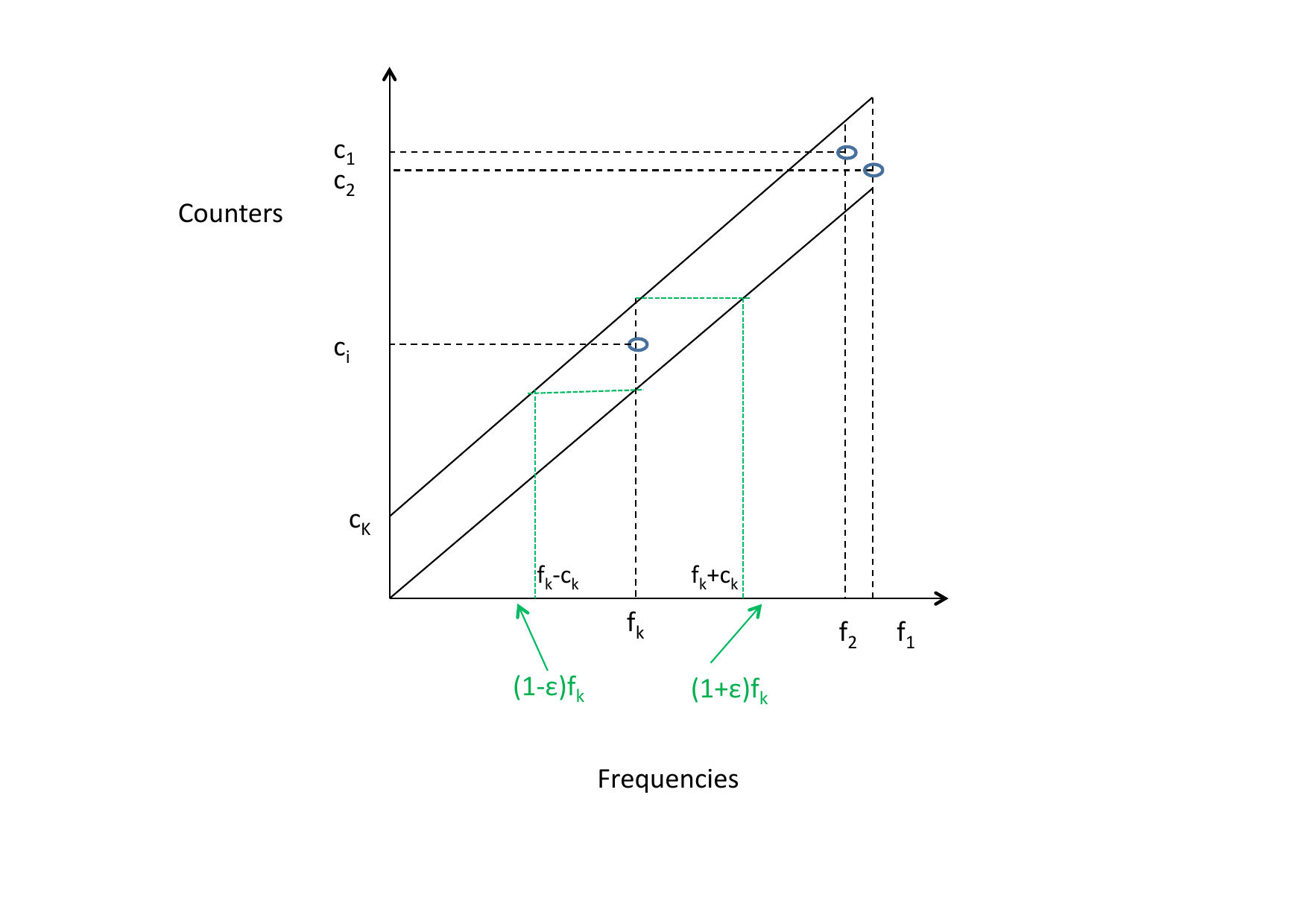}
\caption{Frequencies-Counters relation:  for each $1\leq k\leq n$, the   $i$-th element of the table $T$ is $(e,c_i,\eps_i)$ where $c_i$ is the $i$-th counter and
$\sigma(i)=k$. Then   $  f(k) \leq c_i   \leq f(k)+ \eps_k \leq f(k) +c_K$. By properties~\ref{prop:2} and~\ref{prop:3} the points $(f(k), c_{\sigma^{-1}(k)})$ are above the diagonal and below the diagonal shifted by $c_K$.}

\label{c-f}
\end{figure}

When the table $T$ of size $ K > k$  is such that:
\begin{equation}
c_K \leq f(k) - f(k+1)   ~~~~ \tag{2}\label{e1}
\end{equation}
Lemma \ref{exss} for  the  condition \eqref{e1} guarantees that the Top-k algorithm gives an exact solution. In fact the  $k$ first  elements of the table T are in the right order.
In the original paper\cite{MAE2005} , Lemma~\ref{sb} bounded the additive error $c_K$ using a simple averaging argument.

Notice that  $c_K \leq N/K$ by the uniform bound on the minimum value, hence $K=O(\frac{1}{\eps} )$, then $c_K \leq \eps.N$ and the frequency $f(i)$ can be approximated with an additive error less than $\eps.N$. 
In  \cite{B10} a better bound is given, which is a lower bound in the worst-case. If we want a relative error for the Top-k algorithm, i.e the hypothesis $c_K \leq \eps.f(k)$ of lemma \ref{exss}, we need to use the $(\gamma_1,\gamma_2)$-decreasing hypothesis, in sections \ref{tkre} and \ref{ssaving}.
We can summarize the various previous additive bounds in table \ref{tab}.

 \begin{table}
 \caption{Error bounds for the Top $k$ elements, $i\leq k$, $K=O(k / \eps)$}
 \label{tab}
\begin{tabular}[b]{  |l l | l |l }
\hline
{\rm\bf Top ~k~elements ~for~ } $K=O(k / \eps )$  & {\rm\bf Error ~bound}\\
\hline

{\rm SpaceSaving}\cite{MAE2005}    &$ |f(i)-c_i | \leq 2\eps.N$\\
\hline
{\rm SpaceSaving ~with ~strong ~error~ Bounds}\cite{B10}  & $|f(i)-c_i | \leq \frac{\eps }{ k }. F^{res(k)}$\\
\hline

{\rm SpaceSaving ~for ~}$(\gamma_1/\gamma_2)$-{\rm decreasing~functions}  & $|f(i)-c_i | \leq \eps.f(k).\frac{(\gamma_1-1)}{1-\gamma_1/\gamma_2}\leq \eps.f(i).\frac{(\gamma_1-1)}{1-\gamma_1/\gamma_2}$\\
\hline
\end{tabular}

\end{table}

\subsection{A tighter analysis of the SpaceSaving algorithm }\label{ssa}

Here, we prove the following improvement to Lemma~\ref{sb}'s bound on $c_K$, which is a special case  when $u=0$. In \cite{B10}, a specific stream shows that the bound is tight.

Consider the  cumulative distribution of frequencies, denoted by $F_t=\sum_{1\leq i\leq t} f(i)$ and $F_0=0$ and the residual cumulative distribution of  frequencies 
$$F^{res(t)}=\sum_{t+1\leq i\leq n} f(i)$$

\begin{lemma}\label{ss}
Let $K$ denote the size of the table and $c_K$ be an integer as defined in the Space Saving algorithm. Then $$c_K \leq \min_{u<K/2} \frac{F^{res(u)}}{K-2u}$$
\end{lemma}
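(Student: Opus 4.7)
The plan is to compare the total counter mass $\sum_j c_j = N$ to the mass contributed by the top $u$ universe elements, separating what is ``captured'' by the table from what is ``missed''. Fix any $u < K/2$ and split $[1,K]$ according to whether $\sigma(j)\le u$: let $A'=\{j:\sigma(j)\le u\}$ with $a=|A'|\le u$ (by injectivity of $\sigma$), and $B'=[1,K]\setminus A'$ with $|B'|=K-a$. The top $u$ universe elements that are \emph{not} in the table, of which there are $u-a$, each have frequency at most $c_K$ by Property~\ref{prop:4}; hence the mass of top-$u$ elements stored in the table is at least $F_u-(u-a)c_K$, where $F_u=\sum_{i=1}^u f(i)=N-F^{res(u)}$.

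Next I would push this through the counters. By Property~\ref{prop:3}, for every $j$ we have $c_j\ge f(\sigma(j))$, so
\[
\sum_{j\in A'} c_j \;\ge\; \sum_{i\in \sigma(A')} f(i) \;\ge\; F_u-(u-a)c_K.
\]
Subtracting from $\sum_{j=1}^K c_j = N$ yields the upper bound
\[
\sum_{j\in B'} c_j \;\le\; N - F_u + (u-a)c_K \;=\; F^{res(u)} + (u-a)c_K.
\]
On the other hand, since $c_j\ge c_K$ for every $j$, we have the trivial lower bound $\sum_{j\in B'} c_j \ge (K-a)c_K$.

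Combining these two estimates on $\sum_{j\in B'} c_j$ gives $(K-a)c_K \le F^{res(u)} + (u-a)c_K$, i.e.\ $(K-u)c_K \le F^{res(u)}$. Since $u<K/2$ implies $K-u \ge K-2u>0$, this proves
\[
c_K \;\le\; \frac{F^{res(u)}}{K-u} \;\le\; \frac{F^{res(u)}}{K-2u},
\]
and taking the minimum over all admissible $u$ yields the stated bound (with a slight strengthening). The only step that requires real care is the ``mass in table'' lower bound $F_u-(u-a)c_K$: one has to remember that top-$u$ elements missing from $T$ each carry frequency at most $c_K$ (Property~\ref{prop:4}), which is exactly what transforms the $a$-dependent inequality into one that loses only a $u c_K$ slack term and allows $a$ to cancel cleanly in the final line.
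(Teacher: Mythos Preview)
Your proof is correct, and in fact yields the sharper inequality $c_K \le F^{res(u)}/(K-u)$ for every $u<K$, from which the stated bound with $K-2u$ follows a fortiori. The route, however, differs from the paper's. The paper averages the \emph{last} $K-u$ counters to get $c_K\le (N-\sum_{j\le u} c_j)/(K-u)$, then lower-bounds $\sum_{j\le u} c_j$ by $\sum_{j\le u} f(\sigma(j))$ and proves a separate sublemma $f(\sigma(i))\ge f(i)-c_K$; this loses an additive $u\,c_K$ and produces the $K-2u$ denominator. You instead partition the table positions according to whether the \emph{stored element} lies among the top-$u$ (your set $A'$), invoke Property~\ref{prop:4} for the $u-a$ top-$u$ elements absent from the table, and let the unknown $a=|A'|$ cancel. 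Your decomposition is the one underlying the tight bound of~\cite{B10}; it is cleaner and avoids the factor-of-two loss, at the cost of needing Property~\ref{prop:4}, which the paper's argument does not use.
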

\begin{proof}
Let $u$ be an integer in interval $[1,K/2[$.  To prove the Lemma, it suffices to argue that $c_K \leq \frac{F^{res(u)}}{K-2u}$.
The minimum value $c_K$ of the counters  is less than the average of the counters over the interval $[u+1,K]$, so (using properties~\ref{prop:1} and~\ref{prop:3}): 
$$c_K \leq
 \frac{\sum_{j=u+1}^K c_j}{K-u} = 
\frac{N-\sum_{j=1}^u c_j}{K-u}  \leq
\frac{N-\sum_{j=1}^u f(\sigma(j))}{K-u}.$$

Notice  that $f(\sigma(1))+f(\sigma(2))+...f(\sigma(u)) \leq  f(1)+f(2)+...f(u)$ because  $(f(j))$ is a non-increasing sequence. 

Let us prove that for each $i$,  $$f(\sigma(i)) \geq f(i) - c_K$$

For every $j\in [1,i]$ we have $f(j)\leq c_{\sigma^{-1}(j)}$ by Property~\ref{prop:3}. Hence 
$f(i)=\min_{1\leq j\leq i} f(j)\
\leq \min_{1\leq j\leq i} c_{\sigma^{-1}(j)}$. 
But by definition of $c_i$, 
$\min_{1\leq j\leq i} c_{\sigma^{-1}(j)}\leq c_i$, 
and by Property~\ref{prop:3} again, $c_i\leq f(\sigma(i))+c_K$. Therefore $f(\sigma(i)) \geq f(i) - c_K$. 

Hence:
 $f(1)- c_K +f(2)-c_K+...f(u) -c_K\leq  f(\sigma(1))+f(\sigma(2))+...f(\sigma(u))$ and:
 $$N-\sum_{j=1}^u f(\sigma(j)) \leq N-\sum_{j=1}^u f(j) +u.c_K$$
 
$$c_K \leq \frac{N-\sum_{j=1}^u f(\sigma(j))}{K-u}\leq  \frac{N-\sum_{j=1}^u f(j) +u.c_K}{K-u}= \frac{N-\sum_{j=1}^u f(j) }{K-u}+\frac{u.c_K}{K-u}$$

$$ c_K .\frac{K-2u}{K-u}\leq \frac{N-\sum_{j=1}^u f(j) }{K-u}=\frac{\sum_{j=u+1}^n f(j)}{K-u} =
\frac{F^{res(u)}}{K-u}$$
If $u <K/2$ then $K-2u>0$:

$$ c_K \leq \frac{F^{res(u)}}{K-2u}$$
  
As this true for all $u <K/2$, then
$$c_K \leq \min_{u<K/2} \frac{F^{res(u)}}{K-2u}$$
\end{proof}

\subsubsection{Application to  Zipf distributions}

Assume a Zipf distribution of parameter $\alpha >1$:  $f(i)=cN/i^{\alpha}$, where $c=1/(\sum_{i=1}^n 1/i^{\alpha})$.  We apply Lemma \ref{ss} to upper bound the main uncertainty parameter $c_K$.

\begin{lemma}\label{zb}
 Let $K$ denote the size of the table.    Then:
$$c_K\leq \frac{\Theta(N)}{K^{\alpha}}.$$
\end{lemma}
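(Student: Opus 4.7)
The plan is to invoke Lemma~\ref{ss} and choose $u$ proportional to $K$. Lemma~\ref{ss} gives
$$c_K \leq \min_{u<K/2}\frac{F^{res(u)}}{K-2u},$$
so it suffices to produce one good choice of $u$. A natural candidate is $u=\lfloor K/4\rfloor$, which makes the denominator $K-2u = \Theta(K)$ and leaves us with the task of estimating $F^{res(u)}$.

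Next I would estimate the residual sum. Since $\alpha>1$, the normalizing constant is $c = 1/\sum_{i=1}^n i^{-\alpha} = \Theta(1)$ (bounded above and below uniformly in $n$). By the standard integral comparison,
$$\sum_{i=u+1}^n \frac{1}{i^{\alpha}} \;\leq\; \int_{u}^{\infty} \frac{dx}{x^{\alpha}} \;=\; \frac{1}{(\alpha-1)\,u^{\alpha-1}},$$
so $F^{res(u)} = cN\sum_{i=u+1}^n i^{-\alpha} = O(N/u^{\alpha-1})$.

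Plugging $u=\lfloor K/4\rfloor$ back in yields
$$c_K \;\leq\; \frac{F^{res(u)}}{K-2u} \;=\; \frac{O(N/K^{\alpha-1})}{\Theta(K)} \;=\; \frac{\Theta(N)}{K^{\alpha}},$$
which is the claimed bound. (One should also check $K\geq 4$ so that $u<K/2$ is meaningful; for $K$ bounded, the statement is trivially true since $c_K\leq N$.)

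There is no real obstacle here: the two inputs, Lemma~\ref{ss} and the standard tail estimate $\sum_{i>u} i^{-\alpha}=\Theta(u^{1-\alpha})$ for $\alpha>1$, combine directly. The only thing worth being careful about is that the hidden constants in the $\Theta(N)$ depend on $\alpha$ (through both $c$ and the $1/(\alpha-1)$ factor in the tail estimate), which is consistent with the $\Theta$-notation in the statement.
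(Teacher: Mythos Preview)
Your proof is correct and follows essentially the same approach as the paper: apply Lemma~\ref{ss} with $u$ chosen as a constant fraction of $K$, bound the Zipf tail $F^{res(u)}=\Theta(N/u^{\alpha-1})$ via integral comparison, and combine. The only cosmetic difference is that the paper takes $u=K/3$ (so $K-2u=K/3$) whereas you take $u=\lfloor K/4\rfloor$; both choices yield the same $\Theta(N)/K^{\alpha}$ bound.
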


\begin{proof}
Since $\alpha >1$, we have $c=\Theta(1)$ as $n  \rightarrow \infty$. 
$$F^{res(u)}=c N.\sum_{i=u+1}^{n}   \frac{1}{i^{\alpha}} $$
$$\int_{u+1}^{n}  \frac{dx}{x^{\alpha}}   \leq  \sum_{i=u+1}^{n}  \frac{1}{i^{\alpha}}\leq \int_{u}^{n}  \frac{dx}{x^{\alpha}}  $$

$$F^{res(u)} \leq \frac{\Theta(N)}{u^{\alpha-1}}$$
By lemma \ref{ss}, for $u <K/2$,  $c_K \leq \frac{F^{res(u)}}{K-2u}$. Hence for $u=K/3$:
$$c_K \leq \frac{F^{res(K/3)}}{K/3}\leq \frac{\Theta(N)}{K^{\alpha}}$$
\end{proof}

We need to analyse the size of $K$ in the Top-k algorithm \ref{topk} as a function of $k$ for Zipf distributions.

$$f(k) - f(k+1)=cN.(\frac{1}{k^{\alpha}}-\frac{1}{(k+1)^{\alpha} })\simeq 
cN.\frac{k^{\alpha-1}}{k^{2\alpha}}=\frac{cN}{k^{\alpha+1}}$$

By lemma \ref{sb}, $c_K \leq \frac{N}{K}$, the uniform average. If
$$c_K \leq  \frac{N}{K}\leq \frac{cN}{k^{\alpha+1}}$$
 the condition \eqref{e1} on  $ f(k) - f(k+1) $ is  guaranteed and we have an exact solution. Hence:
$$K = \Omega(k^{\alpha+1})$$

The new analysis of lemma \ref{ss} gives  a better bound on $K$.
\begin{lemma}
For the Zipf distribution with parameter $\alpha>1$, $K = \Omega(k^{1+1/\alpha})$
guarantees an exact solution.
\end{lemma}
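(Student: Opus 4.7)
The plan is to combine the exactness criterion of Lemma~\ref{exss}(1) with the sharp bound on $c_K$ from Lemma~\ref{zb}. By Lemma~\ref{exss}(1), the Top-$k$ algorithm returns the exact top-$k$ set as soon as $c_K \leq f(k)-f(k+1)$, so it suffices to choose $K$ large enough that this inequality holds for a Zipf distribution of parameter $\alpha>1$.

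First I would compute the frequency gap at position $k$. Since $f(k) = cN/k^\alpha$ with $c = \Theta(1)$ when $\alpha>1$, a Taylor expansion (or the mean value theorem applied to $x \mapsto 1/x^\alpha$) yields
\[
f(k) - f(k+1) \;=\; cN\left(\frac{1}{k^\alpha} - \frac{1}{(k+1)^\alpha}\right) \;=\; \Theta\!\left(\frac{N}{k^{\alpha+1}}\right),
\]
as already indicated in the paragraph preceding the statement. Next, Lemma~\ref{zb} supplies $c_K \leq \Theta(N)/K^\alpha$. Plugging both estimates into the sufficient condition reduces exactness to
\[
\frac{\Theta(N)}{K^\alpha} \;\leq\; \Theta\!\left(\frac{N}{k^{\alpha+1}}\right),
\]
which rearranges to $K^\alpha \geq \Omega(k^{\alpha+1})$, that is $K = \Omega(k^{(\alpha+1)/\alpha}) = \Omega(k^{1+1/\alpha})$, exactly the claim.

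The only real subtlety is bookkeeping the hidden constants: the implicit constant in Lemma~\ref{zb} depends on $\alpha$ (through both the integral estimate of $F^{res(u)}$ and the choice $u=K/3$), while the constant in the gap estimate depends on $c = 1/\sum_i 1/i^\alpha$. Both are fixed once $\alpha$ is, so they can be absorbed into the $\Omega$, and the proof reduces to a one-line arithmetic comparison of the two preceding lemmas. It is worth contrasting with the naive bound $K = \Omega(k^{\alpha+1})$ obtained from the uniform estimate $c_K \leq N/K$ of Lemma~\ref{sb}; the present argument improves that by a factor of $k^{\alpha-1/\alpha}$, which is precisely what the refined analysis of Lemma~\ref{ss} buys us.
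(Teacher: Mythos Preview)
Your proof is correct and follows essentially the same approach as the paper: combine the gap estimate $f(k)-f(k+1)=\Theta(N/k^{\alpha+1})$ with Lemma~\ref{zb}'s bound $c_K\leq \Theta(N)/K^{\alpha}$ and solve the resulting inequality for $K$. The paper's proof is just the same chain of inequalities stated more tersely, without your added remarks on constants and the comparison to the naive bound.
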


\begin{proof}
By lemma \ref{zb}, $c_K \leq \frac{\Theta(N)}{K^{\alpha}}$ hence if:

$$ c_K \leq  \frac{\Theta(N)}{K^{\alpha}}\leq \frac{\Theta(N)}{k^{\alpha+1}}$$
the condition \eqref{e1}  is guaranteed. Hence
$$K \geq \Theta(k^{1+1/\alpha})$$
\end{proof}

A similar bound is given in \cite{MAE2005}, by arguing that $f(i) < c_K$ for $i >K$. In particular
for $i=K+1$:
$$f(K+1)= \frac{cN}{(K+1)^{\alpha}} \leq  c_K  \leq \frac{cN}{k^{\alpha+1}}$$
which gives $K \geq \Theta(k^{1+1/\alpha})$.\\

For an approximate solution, we take a table $T$ such that:
\begin{equation}
c_K \leq \eps.f(k)    ~~~~ \tag{2}\label{e2}
\end{equation}

\begin{lemma}
For the Zipf distribution with parameter $\alpha >1$, 
$K = \Omega(k. (\frac{1}{\eps})^{1/\alpha} )$
guarantees an approximate solution.

\end{lemma}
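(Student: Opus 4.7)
The plan is to combine Lemma \ref{zb} (which bounds the table uncertainty $c_K$ for Zipf inputs) with the sufficient condition \eqref{e2} for an approximate solution, namely $c_K \leq \eps \cdot f(k)$, and then solve for $K$.

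First, I would recall the two ingredients. On one hand, for a Zipf distribution of parameter $\alpha > 1$, the $k$-th frequency is $f(k) = cN/k^{\alpha}$ with $c = \Theta(1)$. On the other hand, Lemma \ref{zb} gives
\[
c_K \leq \frac{\Theta(N)}{K^{\alpha}}.
\]
So a sufficient condition for \eqref{e2} is
\[
\frac{\Theta(N)}{K^{\alpha}} \leq \eps \cdot \frac{cN}{k^{\alpha}},
\]
and isolating $K$ yields $K^{\alpha} \geq \Theta(k^{\alpha}/\eps)$, i.e.\ $K \geq \Theta\bigl(k \cdot (1/\eps)^{1/\alpha}\bigr)$, which is exactly the claimed bound. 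Once \eqref{e2} holds, the approximate guarantee from Lemma \ref{exss} (part 2) kicks in and the Top-$k$ algorithm returns every element with frequency at least $(1+\eps)f(k)$ and no element with frequency at most $(1-\eps)f(k)$.

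There is essentially no obstacle here: this lemma is a direct corollary of Lemma \ref{zb} and the definition of the approximate regime, and the calculation is analogous to the one carried out just above for the exact case (which used $f(k) - f(k+1) \simeq cN/k^{\alpha+1}$ to derive $K \geq \Theta(k^{1+1/\alpha})$). The only small point of care is tracking the constants hidden in the $\Theta(\cdot)$ of Lemma \ref{zb}; since both $c$ and the constant arising from the integral bound on $F^{\mathrm{res}(u)}$ are $\Theta(1)$ as $n \to \infty$, they can be absorbed into the $\Omega(\cdot)$ of the final statement without affecting the $k \cdot (1/\eps)^{1/\alpha}$ growth rate.
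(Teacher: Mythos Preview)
Your proposal is correct and follows essentially the same route as the paper: both invoke the Zipf-specific bound $c_K \leq \Theta(N)/K^{\alpha}$ (the paper cites Lemma~\ref{ss}, but this is really the content of Lemma~\ref{zb}, which you reference directly), impose condition~\eqref{e2}, and solve for $K$. Your write-up is in fact slightly more careful than the paper's, since you explicitly note that $c=\Theta(1)$ and that Lemma~\ref{exss}(2) is what yields the approximate guarantee once \eqref{e2} holds.
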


\begin{proof}
By lemma \ref{ss}, $c_K \leq \frac{\Theta(N)}{K^{\alpha}}$ hence if:

$$ c_K \leq \frac{\Theta(N)}{K^{\alpha}}\leq \eps.f(k)= \eps. cN. \frac{c}{k^{\alpha}}$$
the condition \eqref{e2}  is guaranteed. Hence:
$$K \geq \Omega(k. (\frac{1}{\eps})^{1/\alpha} )$$

\end{proof}

\subsection{ Proof of the Spacesaving Lemma}\label{l2}

Let $K$ denote the size of the table and $c_K$ be as defined in the SpaceSaving algorithm
Recall that Lemma~\ref{lemma:diffcounts} states that 
given $r\leq K$, let $\tilde{e}$ denote the element with the $r$th largest count in the table, and $e'$ denote the element with the $r$th largest frequency in the stream. We can then bound 
$|\ccount(e')-\ccount(\tilde{e})|$ by $c_K.$

 \begin{figure}[h]
\includegraphics[width=130mm]{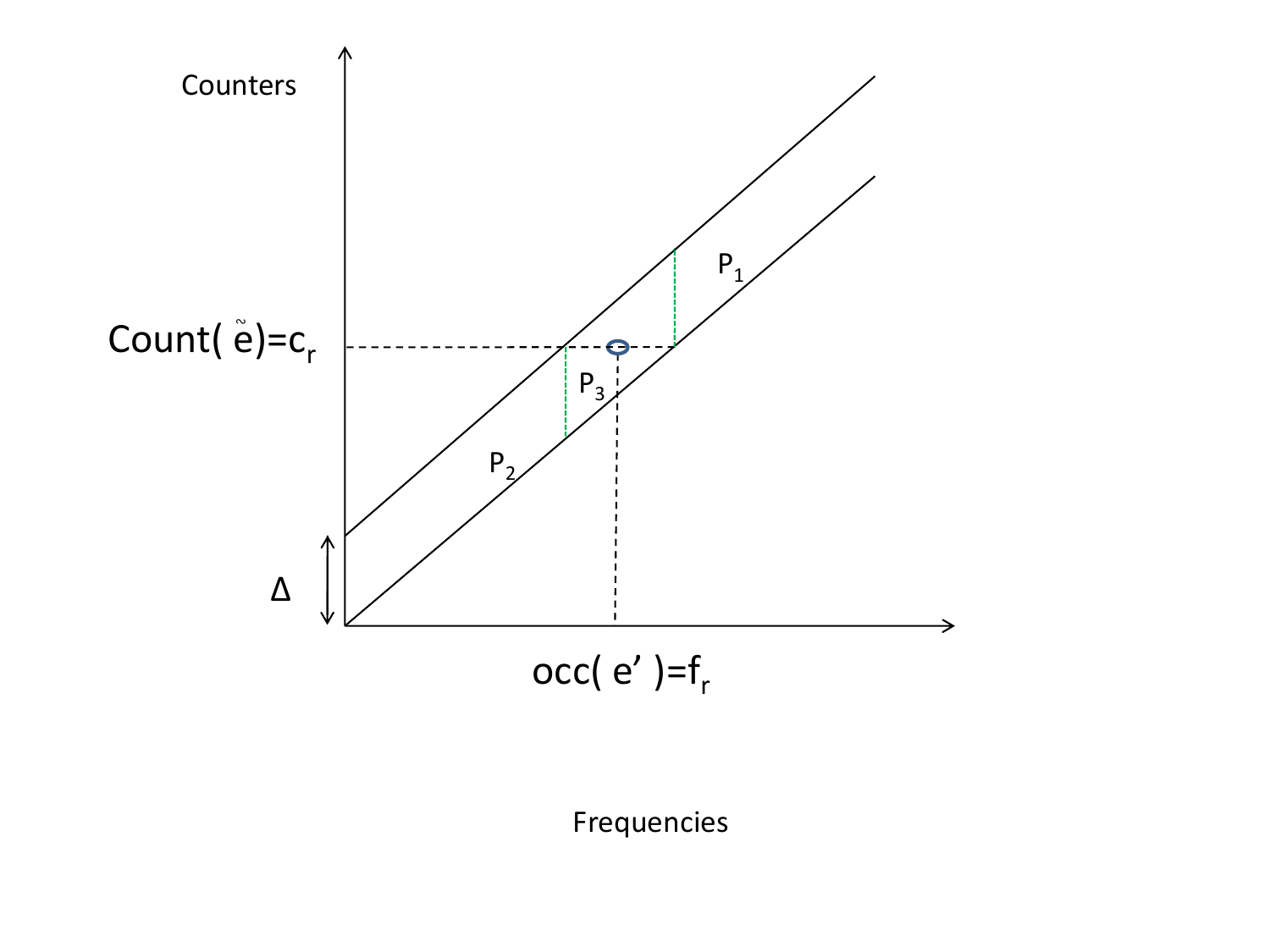}
\caption{Counters and Frequencies for a stream $s$.  The error $\Delta=c_{K}$ and $|\occ(e')-\ccount(\tilde{e})|<c_{K}.$}
\label{fig:3parts.pdf}
\end{figure}
\begin{proof}
We will argue that
\begin{equation}\label{eq:withfigure}
|\ccount(e')-\ccount(\tilde{e})|\leq  c_{K}.
\end{equation}
To that end, we refer the reader to Figure~\ref{fig:3parts.pdf}. By Property~\ref{prop:3}, for any element $e$ of  $s$ we have $  \occ(e)\leq \ccount(e)\leq \occ (e)+c_{K}$, so when we plot the points $(\occ(e),\ccount(e))$ for the elements occuring in stream $s$, all points are inside the strip of equation $x\leq y\leq x+c_{K}$. Consider the point $(\occ(e'),\ccount(\tilde{e}))$. We partition the strip into three parts (see Figure~\ref{fig:3parts.pdf}):
\begin{enumerate}
\item $P_1$ consisting of the points $(x,y)$ such that $x>\ccount(\tilde{e})$. Since $\tilde{e}$ has rank $r$ according to $\ccount$, there are at most $r-1$ points in $P_1$.
\item $P_2$ consisting of the points $(x,y)$ such that $x<\ccount(\tilde{e})-c_{K}$. Since $\tilde{e}$ has rank $r$ according to $\ccount$,  there are  fewer than 
$n -r $ where $n$ is the number of elements in the stream $s$.
\item $P_3$ consisting of the rest. All points of $P_1$ have $\occ$ value larger than all points of $P_3$, and 
all points of $P_2$ have $\occ$ value smaller than all points of  $P_3$.
\end{enumerate}
Recall that $e'$ has rank $r$ according to $\occ$. Thus the point $(\occ(e'),\ccount(\tilde{e}))$ cannot be in $P_1$ nor in $P_2$. This implies that
$e'$ is in $P_3$, hence Equation~\ref{eq:withfigure}. 

\end{proof}
 The Top-$k$ problem, the  building block used by the Tester
 generalizes for the {\em insertion and  $\alpha$-bounded deletion } model and for the {\em sliding window} model, an {\em insertion and window deletion }  model which is not a bounded deletion model.
 

\subsection{ Relative error for  $(\gamma_1,\gamma_2)$-decreasing  functions}\label{tkre}


\begin{lemma}\label{tail}
If $f$ is $(\gamma_1,\gamma_2)$-decreasing then for all $k$:
$$\frac{\eps }{ k }.F^{res(k)}\leq \eps .f(k).\frac{(\gamma_1-1)}{1-\gamma_1/\gamma_2}$$

\end{lemma}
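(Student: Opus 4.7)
\medskip

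\noindent\textbf{Proof plan.} The factor $\eps$ appears on both sides, so I would immediately cancel it and reduce the statement to showing
$$\frac{F^{res(k)}}{k} \;\leq\; f(k)\cdot\frac{\gamma_1-1}{1-\gamma_1/\gamma_2}.$$
The natural approach is to chop the tail $\sum_{i>k} f(i)$ into geometric blocks of indices $[\gamma_1^{j}k,\gamma_1^{j+1}k)$ for $j=0,1,2,\ldots$, and then use the $(\gamma_1,\gamma_2)$-decreasing hypothesis iteratively to control the values of $f$ on each block by $f(k)/\gamma_2^{j}$.

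More precisely, I would first prove by induction on $j$ that $f(\lceil \gamma_1^{j}k\rceil) \leq f(k)/\gamma_2^{j}$ for every $j\geq 0$ such that $\gamma_1^{j}k \leq n$. The base case $j=0$ is immediate, and the inductive step is exactly definition~\ref{definition:decreasing} applied with $t=\lceil \gamma_1^{j}k\rceil$. Because $f$ is non-increasing, every index $i$ in the block $[\lceil\gamma_1^{j}k\rceil,\lceil \gamma_1^{j+1}k\rceil)$ satisfies $f(i)\leq f(\lceil\gamma_1^{j}k\rceil)\leq f(k)/\gamma_2^{j}$, and the block has at most $\lceil\gamma_1^{j+1}k\rceil-\lceil\gamma_1^{j}k\rceil \leq (\gamma_1-1)\gamma_1^{j}k + 1$ elements; I would fold the trivial $+1$ into the geometric bound or, equivalently, argue directly that the block length is at most $(\gamma_1-1)\gamma_1^{j}k$ up to the standard rounding (alternatively one can assume wlog that $\gamma_1^{j}k$ are integers, since the lemma only gets tighter after rounding).

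Summing the block contributions then yields
$$F^{res(k)} \;\leq\; \sum_{j\geq 0} (\gamma_1-1)\gamma_1^{j}k \cdot \frac{f(k)}{\gamma_2^{j}} \;=\; (\gamma_1-1)\,k\,f(k)\sum_{j\geq 0}\left(\frac{\gamma_1}{\gamma_2}\right)^{j} \;=\; \frac{(\gamma_1-1)\,k\,f(k)}{1-\gamma_1/\gamma_2},$$
where the geometric series converges precisely because $\gamma_2>\gamma_1>1$ implies $\gamma_1/\gamma_2<1$. Dividing by $k$ and multiplying by $\eps$ gives the desired inequality.

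The only mildly delicate point is handling the ceilings in the definition of $(\gamma_1,\gamma_2)$-decreasing and the truncation of the last block when $\gamma_1^{j}k$ exceeds $n$. Truncation can only decrease $F^{res(k)}$, so it is harmless; the ceilings only affect lower-order terms and are absorbed by the geometric bound. Hence the computation above is essentially the whole proof.
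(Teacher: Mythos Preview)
Your proposal is correct and follows essentially the same approach as the paper: decompose the tail into geometric blocks $[\gamma_1^{j}k,\gamma_1^{j+1}k)$, bound each block by $(\gamma_1-1)\gamma_1^{j}k\cdot f(k)/\gamma_2^{j}$ via the $(\gamma_1,\gamma_2)$-decreasing hypothesis, and sum the resulting geometric series. If anything, you are slightly more careful than the paper about the ceilings and the truncation of the last block, which the paper simply ignores.
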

\begin{proof}
If $f$ is $(\gamma_1,\gamma_2)$-decreasing then for $j\geq 0$:
 $$\sum_{i> \gamma_1^j.k}^{i=\gamma_1^{j+1}.k} f(i) \leq 
 \frac{f(k)\cdot(\gamma_1^{j+1}.k -\gamma_1^j.k)}{\gamma_2^j}$$
Hence:
$$F^{res(k)}=\sum_{k+1\leq i\leq n} f(i) \leq k.f(k).(\gamma_1-1).\sum_{j\geq 0}\frac{\gamma_1^j}{\gamma_2^j}=k.f(k).(\gamma_1-1).\frac{1}{1-\gamma_1/\gamma_2}$$
$$=k.f(k).\frac{(\gamma_1-1)}{1-\gamma_1/\gamma_2}$$
$$\frac{\eps }{ k }.F^{res(k)}\leq \eps .f(k).\frac{(\gamma_1-1)}{1-\gamma_1/\gamma_2}$$
\end{proof}

We can use this bound  to obtain a relative error on the estimation of the Top frequencies.\\

\begin{lemma}\label{decreas}
If $f$ is $(\gamma_1,\gamma_2)$-decreasing then for all $i\leq k$:
$$|f(i)-c_i | \leq  \eps.f(i).\frac{(\gamma_1-1)}{1-\gamma_1/\gamma_2}$$
\end{lemma}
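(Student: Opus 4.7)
The plan is to combine the additive error bound of Lemma~\ref{sb} with the sharper bound on $c_K$ from Lemma~\ref{ss} and the decreasing-tail estimate implicit in Lemma~\ref{tail}. Concretely, Lemma~\ref{sb} gives $|f(i)-c_i|\leq c_K$ for every $i\leq K$, so it suffices to show that $c_K\leq \varepsilon\cdot f(i)\cdot (\gamma_1-1)/(1-\gamma_1/\gamma_2)$.

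To bound $c_K$, I would apply Lemma~\ref{ss} with the specific choice $u=k$ (which is legitimate because the Tester's table size $K=\Theta(k/\varepsilon)$, as specified in Notation~\ref{n1}, satisfies $K>2k$). This gives
$$c_K\;\leq\; \frac{F^{res(k)}}{K-2k}.$$
Next, the internal estimate proved inside Lemma~\ref{tail}, namely $F^{res(k)}\leq k\cdot f(k)\cdot (\gamma_1-1)/(1-\gamma_1/\gamma_2)$ (obtained by summing the geometric bounds $\sum_{i\in(\gamma_1^j k,\gamma_1^{j+1}k]} f(i)\leq f(k)(\gamma_1^{j+1}-\gamma_1^j)k/\gamma_2^j$), yields
$$c_K\;\leq\; \frac{k}{K-2k}\cdot f(k)\cdot \frac{\gamma_1-1}{1-\gamma_1/\gamma_2}.$$
Choosing $K$ so that $k/(K-2k)\leq \varepsilon$ (i.e.\ $K\geq k(2+1/\varepsilon)$, consistent with the choice made in Notation~\ref{n1}) gives $c_K\leq \varepsilon\cdot f(k)\cdot (\gamma_1-1)/(1-\gamma_1/\gamma_2)$.

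Finally, since $f$ is non-increasing and $i\leq k$, monotonicity gives $f(k)\leq f(i)$, so
$$|f(i)-c_i|\;\leq\; c_K\;\leq\; \varepsilon \cdot f(i)\cdot \frac{\gamma_1-1}{1-\gamma_1/\gamma_2},$$
which is the required relative error bound. The main subtlety is the choice $u=k$ in Lemma~\ref{ss}: a smaller $u$ gives a residual sum $F^{res(u)}$ that is too large to produce a bound relative to $f(k)$, while a larger one violates the constraint $u<K/2$. Matching $u$ with the target rank $k$ is exactly what allows the $(\gamma_1,\gamma_2)$-decreasing hypothesis to convert the generic additive bound on $c_K$ into a relative one, giving the desired $\textsf{error}/f(i)$ estimate.
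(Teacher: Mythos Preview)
Your proof is correct and follows essentially the same route as the paper: apply Lemma~\ref{ss} with $u=k$, bound $F^{res(k)}$ via the geometric-sum estimate of Lemma~\ref{tail}, take $K\geq k(2+1/\eps)$ so that $k/(K-2k)\leq\eps$, and finish with monotonicity $f(k)\leq f(i)$. The only cosmetic difference is that the paper stipulates the table size $K=k(2+1/\eps)$ directly in the proof rather than pointing to Notation~\ref{n1}; your reference to Notation~\ref{n1} is a slight mismatch (that notation fixes the Tester's table size, not the $K$ used in this lemma), but it does not affect the argument.
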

\begin{proof}
 We first apply  Lemma \ref{ss}:
 $$c_K \leq \min_{u<K/2} \frac{F^{res(u)}}{K-2u}$$
 Hence for $u=k$ and $K=k \cdot(2+ 1/\eps)$:
 $$c_K \leq \frac{F^{res(k)}}{K-2k}\leq \frac{k}{\eps}. F^{res(k)}  $$
We then apply 
 Lemma \ref{tail} and obtain:
 $$|f(i)-c_i | \leq c_K \leq \frac{k}{\eps}. F^{res(k)} \leq \eps.f(k).\frac{(\gamma_1-1)}{1-\gamma_1/\gamma_2}$$
 By montonicity of $f$:
 $$|f(i)-c_i | \leq \eps.f(i).\frac{(\gamma_1-1)}{1-\gamma_1/\gamma_2}$$
 \end{proof}
 
If we take the strong bound from \cite{B10} and combine it with Lemma \ref{tail}, we also obtain the relative error bound, for the top-$k$ frequencies $f(i)$ where $i\leq k$.

\section{Streaming Testers to compare frequency distributions}\label{mt}

A stream $s$ of $N$ elements of a universe $\{ e_1,e_2,\cdots ,e_n\}$ of size $n$  determines an integer  frequency function $g$ whose domain
is $ \{ 1,...n\}$, such that $g(i)$ is the number of occurences of the $i$th most frequent element in the stream. The frequency function $f$  is given and we now show that
Algorithm~\ref{alg:Tester} 
is a streaming Tester to compare $f$ and $g$.

\subsection{Analysis of the space used by Algorithm~\ref{alg:Tester}  }

 \begin{lemma}
 Algorithm~\ref{alg:Tester} uses $O((\log n)^2 \cdot \log\log n)$ space.
 \end{lemma}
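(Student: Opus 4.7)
The proof is a straightforward space accounting: multiply the number of parallel instances by the per-instance cost. I would first observe that the outer loop runs for $i=1,2,\ldots,\lceil \log_{1+\eps_1} n\rceil$. For constant $\eps_1$, this is $O(\log n)$ parallel branches. I would then bound the space used by a single branch and multiply.

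For a single branch $i$, I would identify three components of state. First, the constants of the branch ($z_i$, $a_i$, $r$) are integers in $[1,n]$, i.e.\ $O(\log n)$ bits each, or $O(1)$ machine words. Second, the hash function $h_i:[1,n]\to[1,a_i]$ can be instantiated from a universal (e.g.\ pairwise independent) family on $[1,n]$, whose seed fits in $O(1)$ words; this avoids storing a truly random table of size $\Omega(n)$. Third, the SpaceSaving table $T_i$ holds $K$ triplets (element, counter, error bound), where by Notation~\ref{n1} we have $K=O(\log n\cdot \log\log n)$ for constant $\eps_1,\eps_2,\delta,\gamma_1,\gamma_2$; each triplet fits in $O(1)$ machine words since element identifiers come from a universe of size $n$ and counters are bounded by the stream length (polynomial in $n$). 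Hence one branch uses $O(\log n\cdot \log\log n)$ words.

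Multiplying by the $O(\log n)$ branches gives a total of $O(\log^2 n\cdot \log\log n)$ words, matching the claim. Finally, I would note that the coherence tests only need to query $f$ at three values $z_{i-1},z_i,z_{i+1}$ and compare with the three counters $c_{i-1,r},c_{i,r},c_{i+1,r}$ already stored, so they do not increase the space budget. There is no real obstacle here; the only subtle point is to justify that pseudorandom hash functions suffice instead of truly uniform ones, which is standard since the SpaceSaving guarantees used in the correctness proof depend only on marginal sampling probabilities and can be carried out with a pairwise-independent family.
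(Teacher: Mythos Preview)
Your proposal is correct and follows essentially the same accounting as the paper: $O(\log n)$ parallel substreams, each running SpaceSaving with a table of $K=O(\log n\cdot\log\log n)$ entries, for a total of $O(\log^2 n\cdot\log\log n)$. You add detail the paper omits (storing $h_i$ via a small-seed family, the $O(1)$-word cost per table entry, and the fact that the coherence test reuses already-stored counters), but the structure of the argument is the same.
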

\begin{proof}
We  run the Spacesaving  algorithm for each substream $s_i$ for $i\leq O(\log n)$, with a table of size
$K$ where:


$$K = 2\cdot\frac{z_i}{ a_i } \cdot \frac{(\gamma_1-1)}{1-\gamma_1/\gamma_2}\cdot (1+\eps_1^2)\cdot \frac{\log n}{\eps_2 \delta} $$
As $a_i=\lceil \eps_1^2.z_i/ \log\log n)\rceil$.
$$K =\frac{2 }{ \eps_2 . \eps_1^2} \cdot \frac{(\gamma_1-1)}{1-\gamma_1/\gamma_2}\cdot \frac{\log n .\log\log n }{\delta}\cdot(1+\eps_1^2)\leq O(\log n \cdot \log\log n) $$

The total space is then $O((\log n)^2 \cdot \log\log n)$.
 \end{proof}
 
 Notice that  there exists  a threshold  $t_0$ such that for $z_i > t_0$, the expected number of distinct items in $s_i$ is less than $K$. In this case, the Spacesaving is equivalent to exact counting and there is no error.
 As
  $f$ is $(\gamma_1,\gamma_2)$-decreasing, $f(\gamma_1. t) \leq f(t)/\gamma_2$.   Hence 

$$f(\gamma^{\alpha}. t) < f(t)/\gamma_2^{\alpha}=\eps_2.f(t)$$ 
For $\alpha = \log (1/\eps_2)/\log \gamma_2 $ and
 $n=\gamma_1^{\alpha}.t_0$, we find a threshold  $t_0 =n/ \gamma^{\log (1/\eps_2)/\log \gamma_2}$ such that for
$z_i > t_0$ the expected number of elements in the stream $s_i$ is 
$$n/a_i =  n . \log\log n / \eps_1^2.z_i  \leq   n . \log\log n / \eps_1^2.t_0 \leq \gamma^{\log (1/\eps_2)/\log \gamma_2} .  \log\log n / \eps_1^2 \leq K$$
The expected number ot items is smaller than the size $K$ of the SpaceSaving Table, and we do 
   an exact counting.

\subsection{Analysis of the error probability of Algorithm~\ref{alg:Tester} }\label{ssaving}



\begin{notation}  Let  $\tilde{e_i}$ be the element whose counter value is $c_{i,r}$, i.e. $\ccount(\tilde{e_i})=c_{i,r}$ and $e'_i$ the element whose rank is $r=\lceil  z_i/a_i \rceil$ in the stream $s_i$, for the frequency function $g_i$, i.e.  $\occ(e'_i)=g_i(r)$ or $rank_{s_i}(e'_i)=r$. The functions
$\occ, \ccount, rank$ are from $U$ to $N$. We assume that tie-breaking rules are consistent over $s$ and the substreams $s_i$: $U=\{ e^1,e^2,\cdots ,e^n\}$ and if two elements $e^j$ and $e^k$, with $j<k$, have the same number of occurrences, then  $rank_s(e^j)<rank_s(e^k)$ and $rank_{s_i}(e^j)<rank_{s_i}(e^k)$ for all substreams. 
\end{notation} 


We now state the  probabilistic Lemma \ref{lemma:elementary1}, which analyzes the sampling that is used to create the substream $s_i$ and relates the rank of $e'_i$ to $z_i$. This  depends on the sampling process alone and not on the Spacesaving algorithm analysis. The main Lemmas \ref{re1}, \ref{re} guarantee an error bound on Spacesaving on each $s_i$ with high probability. 

\begin{lemma}\label{lemma:elementary1}
Recall that each element is kept in substream $s_i$ with probability $1/a_i$ and that 
$e'_i$ denotes the element with rank $\lceil  z_i/a_i \rceil$ in substream $s_i$.
Then, 
 the rank of $e'_i$ in stream $s$  (when sorted in non-increasing order of number of occurences) satisfies
$$\Pr(z_i(1-\eps_1^2)\leq rank_s(e'_i)\leq z_i (1+\eps_1^2))\geq 1-4\delta/\log n.$$ 
Moreover, if $f=g$ then $f(z_i)\sim_{\eps_2} \occ(e'_i)$.

\end{lemma}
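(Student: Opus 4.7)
The plan is to prove the two assertions separately. The first is a purely probabilistic statement about the sampling defining $s_i$, and does not invoke half-stability or the SpaceSaving algorithm; the second uses half-stability of $f$.

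Part 1 (rank bounds). Each element of the universe is retained in $s_i$ independently with probability $1/a_i$ via the uniform hash $h_i$. For an integer $j$, let $X_j$ count the elements among the top $j$ most frequent items of $s$ that survive the sampling; then $X_j\sim \mathrm{Binomial}(j,1/a_i)$, with $\E[X_j]=j/a_i$. The key observation is that
$rank_s(e'_i)\leq j$ if and only if $X_j\geq r$, because the $r$th most frequent element of $s_i$ has $s$-rank at most $j$ precisely when at least $r$ of the top $j$ elements of $s$ survived. I would apply this to $j=\lfloor z_i(1+\eps_1^2)\rfloor$ and $j=\lceil z_i(1-\eps_1^2)\rceil$, reducing each tail to a multiplicative deviation of a Binomial from its mean by $\Theta(\eps_1^2)$. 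A standard multiplicative Chernoff bound then yields a per-tail failure probability at most $\exp(-\Omega(\eps_1^4\,r))$; since $r$ is of order $\log\log n/\eps_1^2$ (with constants tuned to $\delta$), this is at most $2\delta/\log n$ per tail, and a union bound over the two tails gives the claimed $4\delta/\log n$.

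Part 2 (closeness when $f=g$). If $f=g$ and $k:=rank_s(e'_i)$, then by definition the number of occurrences of $e'_i$ in $s$ equals $g(k)=f(k)$, so the statement reduces to showing $f(z_i)\simeq_{\eps_2}f(k)$ under the assumption $k\in[z_i(1-\eps_1^2),z_i(1+\eps_1^2)]$. I would apply $(3\eps_1,\eps_2)$-half-stability at the point $k$: this yields a rectangle $[x,x(1+3\eps_1)]\times[y,y(1+\eps_2)]$ whose horizontal span contains $k$ and on which every value of $f$ lies in the vertical span. Since $k$ and $z_i$ differ by at most a factor $1+\eps_1^2$, and the window has the much larger relative width $3\eps_1$, a short geometric check shows $z_i$ lies in the same horizontal span, so that $f(z_i)$ and $f(k)$ both belong to $[y,y(1+\eps_2)]$, giving $f(z_i)\simeq_{\eps_2}f(k)=\occ(e'_i)$. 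The degenerate regime $z_i\leq\lceil 1/\eps_1^3\rceil$, where $a_i=1$ and $s_i=s$, is immediate since $k=z_i$ exactly.

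The main obstacle lies in Part 2: half-stability only asserts the \emph{existence} of some valid rectangle, and $k$ may sit at an endpoint of its horizontal span, potentially placing $z_i$ just outside on the wrong side. The remedy (and the reason the algorithm's coherence test compares $c_{i,r}$ against all of $f(z_{i-1}),f(z_i),f(z_{i+1})$) is to note that $z_{i+1}/z_{i-1}=(1+\eps_1^2)^2<1+3\eps_1$, so a case analysis on which side of $k$ the half-stability rectangle opens forces its horizontal span to capture at least one of $z_{i-1},z_i,z_{i+1}$. This yields $\occ(e'_i)\simeq_{\eps_2}f(z_j)$ for some $j\in\{i-1,i,i+1\}$, which is exactly the form consumed by the coherence test in Algorithm~\ref{alg:Tester}.
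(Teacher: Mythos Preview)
Your Part~1 is essentially the paper's proof: both reduce the two tail events $\{rank_s(e'_i)<z_i(1-\eps_1^2)\}$ and $\{rank_s(e'_i)>z_i(1+\eps_1^2)\}$ to multiplicative deviations of a Binomial $X_j\sim\mathrm{Bin}(j,1/a_i)$ from its mean and apply the multiplicative Chernoff bound (the paper's Lemma~\ref{lemma:chernoff}), then union-bound the two tails to get $4\delta/\log n$.

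For Part~2 you are actually more careful than the paper. The paper disposes of the ``$f=g\Rightarrow f(z_i)\sim_{\eps_2}\occ(e'_i)$'' clause in one line, assuming that $z_i$ is not near an endpoint of one of the Intervals of $f$ from Lemma~\ref{lemma:stepcompatible}; it does not address the boundary case you flag. Your observation that half-stability only guarantees a rectangle opening to \emph{one} side of $k$, so that $z_i$ itself may fall just outside, is correct, and your resolution---that $(1+\eps_1^2)^2<1+3\eps_1$ forces the rectangle to capture at least one of $z_{i-1},z_i,z_{i+1}$---is exactly the mechanism the paper uses, but only later, inside the proof of Theorem~\ref{main}. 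In effect you have relocated that case analysis into the lemma and arrived at the honest conclusion $\occ(e'_i)\simeq_{\eps_2}f(z_j)$ for some $j\in\{i-1,i,i+1\}$, which is the statement the coherence test actually consumes; the paper's unqualified ``$f(z_i)\sim_{\eps_2}\occ(e'_i)$'' is, as you implicitly note, not quite what is proved.
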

We recall the following classic Hoeffding probabilistic bound.

\begin{lemma}\label{lemma:chernoff}
Let $X=\sum_{j=1}^p X_i$ where $X_j=1$ with probability $q_j$ and $X_j=0$ with probability $1-q_j$, and the $X_j$'s are independent. 
Let $\mu=\E(X)$. Then for all $0<\beta <1$ we have $$\Pr(|X-\mu|>\beta \mu)\leq 2e^{-\mu\beta^2/3}.$$
\end{lemma}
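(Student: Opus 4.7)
The plan is to prove this standard multiplicative Chernoff--Hoeffding bound by the exponential moment (Chernoff) method, applied separately to the upper and lower tails and then combined by a union bound.

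For the upper tail, I will fix $t>0$ and apply Markov's inequality to $e^{tX}$, giving $\Pr(X > (1+\beta)\mu) \leq e^{-t(1+\beta)\mu}\,\E(e^{tX})$. Since the $X_j$ are independent with $X_j\in\{0,1\}$ and mean $q_j$, independence yields $\E(e^{tX}) = \prod_j (1 + q_j(e^t-1)) \leq \exp(\mu(e^t-1))$, using $1+x\leq e^x$ on each factor. Optimizing over $t$ (specifically $t=\ln(1+\beta)$) produces the classical inequality $\Pr(X > (1+\beta)\mu) \leq \bigl(e^{\beta}/(1+\beta)^{1+\beta}\bigr)^{\mu}$.

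Next I will invoke the standard analytic inequality $e^{\beta}/(1+\beta)^{1+\beta} \leq e^{-\beta^2/3}$ for $0<\beta<1$. This is a short calculus check: letting $\phi(\beta) = \beta - (1+\beta)\ln(1+\beta) + \beta^2/3$, one has $\phi(0)=0$ and $\phi'(\beta) = -\ln(1+\beta) + 2\beta/3$, which vanishes at $0$, dips negative, and remains $\leq 0$ throughout $[0,1]$ (as can be seen at $\beta=1$ where $\phi'(1) = 2/3 - \ln 2 < 0$). Hence $\phi \leq 0$ on $[0,1]$, yielding $\Pr(X > (1+\beta)\mu) \leq e^{-\mu\beta^2/3}$.

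For the lower tail, the symmetric argument, taking $t<0$ and optimizing at $t=\ln(1-\beta)$, gives the (slightly sharper) bound $\Pr(X < (1-\beta)\mu) \leq e^{-\mu\beta^2/2} \leq e^{-\mu\beta^2/3}$. A union bound over the two tails then delivers the stated inequality $\Pr(|X-\mu|>\beta\mu) \leq 2e^{-\mu\beta^2/3}$. The only mildly delicate step is the numerical bound on $e^{\beta}/(1+\beta)^{1+\beta}$; everything else is a textbook application of the Chernoff method, and the heterogeneous probabilities $q_j$ cause no extra difficulty because the factorization $\E(e^{tX})=\prod_j\E(e^{tX_j})$ only depends on $\mu=\sum_j q_j$ after the $1+x\leq e^x$ bound.
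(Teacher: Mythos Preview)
Your argument is correct and follows the standard exponential-moment derivation of the multiplicative Chernoff bound; the calculus check on $\phi(\beta)=\beta-(1+\beta)\ln(1+\beta)+\beta^2/3$ is valid (since $\phi'$ is unimodal with $\phi'(0)=0$ and $\phi'(1)=2/3-\ln 2<0$, it stays nonpositive on $[0,1]$), and the lower-tail bound $e^{-\mu\beta^2/2}$ is likewise a routine computation.

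As for comparison with the paper: the paper does not prove this lemma at all. It simply states it as ``We recall the following classic Hoeffding probabilistic bound'' and immediately moves on to use it in the proof of Lemma~\ref{lemma:elementary1}. So you have supplied a full textbook proof where the authors chose to invoke the result without argument; there is nothing to compare at the level of proof strategy.
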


{\bf Proof of Lemma ~\ref{lemma:elementary1}}.

\begin{proof}
By definition of $e'_i$, the rank of $\occ(e'_i)$ in the substream $s_i$ equals $r=z_i/a_i$.
We will prove the following:
With probability at least $1-4\delta/\log n$, the following properties hold: 
\begin{enumerate}
\item The number of elements that appear in $s_i$ and have rank less than $z_i(1- \eps_1^2)$ in $s$ is less than $z_i/a_i$
\item The number of elements that appear in $s_i$ and have rank less than $z_i(1+\eps_1^2)$ in $s$ is more than $ z_i/a_i$
\end{enumerate}
This will imply the Lemma.

For the first item, we apply Lemma~\ref{lemma:chernoff} with $X$ denoting the number of elements that appear in $s_i$ and have rank less than $p=z_i(1-\eps_1^2)$ in $s$, so that $X_j=1$ if and only if the element of rank $j\leq z_i(1-\eps_1^2)$ in $s$ appears in $s_i$. We have $\mu=z_i(1-\eps_1^2)/a_i$. We set $\beta=\eps_1^2/(1-\eps_1^2)$. We obtain that the probability that the statement  does not hold is at most $2exp(-\frac{  z_i\eps_1^2 }{3a_i (1-\eps_1^2)})\leq 
2exp(-\frac{  z_i\eps_1^2 }{3a_i (1+\eps_1^2)})$. 

For the second item, we apply Lemma~\ref{lemma:chernoff} with $X$ denoting the number of elements that appear in $s_i$ and have rank less than $p=z_i(1+ \eps_1^2)$ in $s$, so that $X_j=1$ if and only if the element of rank $j\leq z_i(1+ \eps_1^2 )$ in $s$ appears in $s_i$. We have $\mu=z_i(1+ \eps_1^2)/a_i$. We set $\beta=\frac{ \eps_1^2 }{(1+ \eps_1^2)}$. We obtain that the probability that the statement does not hold is at most $2exp(-\frac{ z_i\eps_1^2}{3 a_i (1+ \eps_1^2)})$.

By the union bound, the probability that the two statements do not both hold is bounded by
$4exp(-\frac{ z_i\eps_1^2}{3 a_i (1+ \eps_1^2)})$.
Let $a_i= \eps_1^2 z_i / ({6 \ln ((\ln n)/\delta)})$. Then this probability  is at most $4\delta/\ln n$.

Since $f=g$ and $z_i$ is not close to one of the endpoints of the 
Intervals of $f$, we also have $f(z_i)\sim_{\eps_2} \occ(e'_i)$.
\end{proof}


Now we turn to the analysis of the SpaceSaving algorithm, with two cases:  $z_i\leq 1/\epsilon_1^3$ and $a_i=1$ and we have one stream, Lemma \ref{re1} and the general case when $z_i > 1/\epsilon_1^3$ and we have many substreams, Lemma  \ref{re} which determines the final size $K$ of the tables $T_i$.\\

\begin{lemma}\label{re1}
Let $i$ be such that $z_i\leq 1/\epsilon_1^3$. 
Assume that $g$ is  $(\gamma_1,\gamma_2)$-decreasing. Consider Algorithm~\ref{alg:Tester} with
$K = 2\cdot\frac{z_i}{ a_i } \cdot \frac{(\gamma_1-1)}{1-\gamma_1/\gamma_2}\cdot (1+\eps_1^2)\cdot \frac{\log n}{\eps_2 \delta}$. 
We have: 
$$ c_{K} \leq \eps_2. g(z_i) .$$
\end{lemma}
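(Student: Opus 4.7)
\textbf{Proof plan for Lemma \ref{re1}.}

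The plan is to reduce this to a direct application of the two deterministic tools already at hand: Lemma~\ref{ss} (the tighter bound on the minimum counter $c_K$ in terms of the residual sum $F^{res(u)}$) and Lemma~\ref{tail} (which, under the $(\gamma_1,\gamma_2)$-decreasing hypothesis, controls $F^{res(k)}$ by a multiple of $k\cdot g(k)$). Since this regime corresponds to $z_i\le \lceil 1/\eps_1^3\rceil$, the first step is to observe that $a_i=\lceil \eps_1^2 z_i/\log\log n\rceil = 1$, so the sampling step has no effect: $s_i=s$ and the frequency function seen by SpaceSaving is exactly $g$. In particular this case is purely deterministic and requires no probabilistic argument.

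Next I would apply Lemma~\ref{ss} to the stream $s_i=s$ with the natural choice $u=z_i$, which gives
\[
c_K \;\le\; \frac{F^{res(z_i)}}{K - 2z_i}.
\]
Because $g$ is $(\gamma_1,\gamma_2)$-decreasing, Lemma~\ref{tail} (read at $k=z_i$) yields
\[
F^{res(z_i)} \;\le\; z_i\cdot g(z_i)\cdot \frac{\gamma_1-1}{1-\gamma_1/\gamma_2}.
\]
Combining these two inequalities, it remains to verify that the prescribed value
\[
K \;=\; 2\cdot\frac{z_i}{a_i}\cdot \frac{\gamma_1-1}{1-\gamma_1/\gamma_2}\cdot(1+\eps_1^2)\cdot\frac{\log n}{\eps_2\,\delta},
\]
which in the current regime equals $2z_i\cdot \frac{\gamma_1-1}{1-\gamma_1/\gamma_2}\cdot(1+\eps_1^2)\cdot\frac{\log n}{\eps_2\,\delta}$, is large enough so that $K-2z_i \;\ge\; z_i\cdot \frac{\gamma_1-1}{1-\gamma_1/\gamma_2}\cdot \frac{1}{\eps_2}$, which is exactly the inequality required to conclude $c_K\le \eps_2\cdot g(z_i)$.

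The arithmetic in the last step is routine: the factor $(1+\eps_1^2)\,\log n/\delta$ in $K$ is vastly larger than the constant $1$ needed to absorb the $2z_i$ subtraction and the $\gamma$-dependent constants, so the bound holds with plenty of room to spare. There is no real obstacle; the only thing to be careful about is the reindexing between Lemma~\ref{ss} (which is stated for an arbitrary frequency function and an arbitrary cutoff $u<K/2$) and our instance, making sure that $u=z_i$ is admissible, i.e.\ $z_i < K/2$, which is immediate from the explicit formula for $K$.
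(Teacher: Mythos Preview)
Your proposal is correct and follows essentially the same approach as the paper: observe that $a_i=1$ so $s_i=s$, apply Lemma~\ref{ss} with $u=z_i$, bound the residual via Lemma~\ref{tail}, and then check that the given $K$ makes the arithmetic go through. The only cosmetic difference is that the paper simplifies $K-2z_i>K/2$ to replace $\frac{1}{K-2z_i}$ by $\frac{2}{K}$ before substituting, whereas you compare $K-2z_i$ directly to the needed threshold; both are equivalent.
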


\begin{proof}
Since $i$ is small, $a_i=1$ and stream $s_i=s$. The frequency function of substream $i$ is $g$. For table $T_i$ of size $K$ used by the algorithm. The number of distinct elements in stream $s$ equals $n$. Then the domain of $g$ is $[1,n]$.
The length of stream $s$ equals $N$.
Let $G(u)=\sum_{j=1}^u g(j)$ denote the cumulative frequency, and 
$G^{res(u)}=\sum_{j=u+1}^{ni} g(j)$.
We apply Lemma~\ref{ss} to table $T_i$, using $u={z_i}$ and noting that $K-2{z_i} > K/2$:

\begin{align}
c_{K} \leq 
\min_{u<K/2} \frac{G^{res(u)}}{K-2u} \leq 
\frac{\sum_{{z_i}+1}^{n} g(x)}{K-2{z_i}}\leq  \frac{2.}{K} \sum_{{z_i}+1}^{n} g(x).    \label{eq:1bis}
\end{align}
As in Lemma~\ref{lemma:elementary1}, let $e'_i$ denote the element of substream such that $rank_{s}(e'_i)=z_i$.
We have:
$$
\sum_{{z_i}+1}^{n} g(x)=\sum_{y=rank_s(e'_i)+1}^n g(y).$$

 Now, since $g$ is $(\gamma_1,\gamma_2)$-decreasing, applying Lemma~\ref{tail} to $g(z_i)$ and rewriting, we have:
 
\begin{align}
\sum_{y=z_i }^n g(y)  \leq \frac{(\gamma_1-1)}{1-\gamma_1/\gamma_2} .z_i  .g(z_i) \label{eq:3bis} 
\end{align}
Combining the inequalities \eqref{eq:1bis}  and \eqref{eq:3bis}  gives:

$$c_{K}\leq  \frac{2}{K}\cdot  \frac{(\gamma_1-1)}{1-\gamma_1/\gamma_2} z_i\cdot g(z_i).
$$

As $K = 2\cdot z_i \cdot \frac{(\gamma_1-1)}{1-\gamma_1/\gamma_2}\cdot (1+\eps_1^2)\cdot \frac{\log n}{\eps_2 \delta}$. 

$$c_{K}\leq    \frac{\delta}{\log n}  .  \eps_2.g(z_i) \leq \eps_2 g(z_i).$$

\end{proof}

We now consider the general case when there are several substreams $s_i$. We use the Spacesaving algorithm for each substream, maintaining a table $T_i$ of size $K$. Let $c_{i,K}$ be the counter value in position $K$ of the table $T_i$. We want a relative error bound on $c_{i,K}$ with high probability.

\begin{lemma}\label{re}
Let $i$ be such that $z_i> 1/\epsilon_1^3$. 
Assume that $g$ is   $(\gamma_1,\gamma_2)$-decreasing. Consider the Spacesaving on table $T_i$ for the $i$-th substream  in Algorithm~\ref{alg:Tester} and recall that $K = 2\cdot\frac{z_i}{ a_i } \cdot \frac{(\gamma_1-1)}{1-\gamma_1/\gamma_2}\cdot (1+\eps_1^2)\cdot \frac{\log n}{\eps_2 \delta}$. 
We have: 
$$ \Pr[ c_{i,K} \leq \eps_2. g(z_{i+1}) ] \geq 1 -5\delta/\log n.$$
\end{lemma}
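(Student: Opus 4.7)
The plan is to mirror the structure of the proof of Lemma~\ref{re1}, but now inject the probabilistic sampling argument of Lemma~\ref{lemma:elementary1} at the step relating the tail of the substream frequency $g_i$ (the frequency function of $s_i$) to that of $g$. First I would apply the tight SpaceSaving bound, Lemma~\ref{ss}, to the table $T_i$ with threshold $u = \lceil z_{i+1}/a_i \rceil \cdot (1+\eps_1^2)$. Since $K$ is proportional to $z_i/a_i$ with a large multiplicative slack, $K - 2u \geq K/2$, hence
$$c_{i,K} \;\leq\; \frac{2}{K}\,\sum_{j>u} g_i(j).$$

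Next I would bound $\sum_{j>u} g_i(j)$ in two moves. Adapting the Chernoff estimate of Lemma~\ref{lemma:elementary1} with $z_{i+1}$ in place of $z_i$, except with probability at most $4\delta/\log n$ the element occupying rank $u$ in $s_i$ has rank in $s$ at least $z_{i+1}$; on this event, every element contributing to $\sum_{j>u} g_i(j)$ has rank exceeding $z_{i+1}$ in $s$, so
$$\sum_{j>u} g_i(j) \;\leq\; \sum_{e\text{ sampled},\; rank_s(e) > z_{i+1}} \occ(e).$$
This random sum has expectation $G^{res(z_{i+1})}/a_i$ over the sampling, and a Bernstein/Chernoff concentration bound controls it up to a constant factor with failure probability $\delta/\log n$. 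Then Lemma~\ref{tail} gives $G^{res(z_{i+1})} \leq \frac{\gamma_1-1}{1-\gamma_1/\gamma_2}\, z_{i+1}\, g(z_{i+1})$. Substituting into the chain and using the prescribed value of $K$ collapses the algebra to $c_{i,K} \leq \eps_2 \cdot g(z_{i+1})$, with the $(1+\eps_1^2)$ slack in $K$ absorbing the ratio $z_{i+1}/z_i = 1+\eps_1^2$. A union bound over the sampling event ($4\delta/\log n$) and the concentration event ($\delta/\log n$) produces the claimed total failure probability $5\delta/\log n$.

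The main technical hurdle is the concentration step for the tail sum: a priori, a single very heavy sampled element could swamp the expectation and defeat a Hoeffding bound. This is where the hypothesis $z_i > 1/\eps_1^3$ is used. It guarantees that $a_i \gg 1$, and together with the $(\gamma_1,\gamma_2)$-decreasing property it forces the maximum individual $\occ(e)$ in the tail past $z_{i+1}$ to be bounded by $g(z_{i+1})$, which is a small fraction of the expectation $G^{res(z_{i+1})}/a_i$. This ratio of max to expectation is what makes Bernstein's inequality applicable with failure probability $\delta/\log n$, so the delicate point in writing the proof up will be checking that the specific choice of $a_i$ and the slack in $K$ give enough room for that tail bound.
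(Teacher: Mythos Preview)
Your overall architecture matches the paper's proof exactly: apply Lemma~\ref{ss} at a threshold $u\approx z_{i+2}/a_i$, use the Chernoff estimate of Lemma~\ref{lemma:elementary1} to guarantee that the rank-$u$ element of $s_i$ has rank at least $z_{i+1}$ in $s$ (the paper calls this event $A$), bound the resulting tail by $G^{res(z_{i+1})}/a_i$, invoke Lemma~\ref{tail}, and finish by a union bound. The paper also explicitly notes that event $A$ depends only on the sampling of elements with rank $\le z_{i+1}$, while the tail sum depends only on the sampling of elements with rank $> z_{i+1}$; this independence is what lets one pass from the conditional to the unconditional expectation. You use this implicitly but do not say it.

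The one substantive divergence is your concentration step, and there the proposal has a gap. The paper does \emph{not} use Bernstein or Chernoff on the tail sum; it simply observes that, with the prescribed $K$, the conditional expectation $\E[c_{i,K}\mid A]$ is at most $\frac{\delta}{\log n}\,\eps_2\,g(z_{i+1})$, and then applies \emph{Markov's inequality} to obtain $\Pr[c_{i,K}>\eps_2 g(z_{i+1})\mid A]\le \delta/\log n$. This is precisely why $K$ carries the factor $\log n/(\eps_2\delta)$: that slack is consumed by Markov, not by a concentration argument. Your claim that ``the maximum individual $\occ(e)$ \dots\ is a small fraction of the expectation $G^{res(z_{i+1})}/a_i$'' is not justified by the hypotheses: the $(\gamma_1,\gamma_2)$-decreasing assumption gives only an \emph{upper} bound on $G^{res(z_{i+1})}$ via Lemma~\ref{tail}, not a lower bound, so the ratio $g(z_{i+1})\cdot a_i / G^{res(z_{i+1})}$ can be arbitrarily large (take $g$ dropping to zero just past $z_{i+1}$). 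In such cases Bernstein does not yield failure probability $\delta/\log n$ for constant-factor deviation. The condition $z_i>1/\eps_1^3$ only ensures $a_i\ge 2$ so that subsampling is nontrivial; it does not supply the missing lower bound. Replacing your Bernstein step by Markov's inequality removes the hurdle entirely and makes the algebra close exactly as in the paper.
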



\begin{proof}
Let  $g_i$ be the frequency function of substream $i$.  Let $n_i$ denote the number of distinct elements in substream $s_i$. Then the domain of $g_i$ is $[1,n_i]$, and $n_i$ is a random variable with expectation equal to $n/a_i$. Let $N_i$ denote the length of substream $s_i$: we have $N_i =\sum_{x=1}^{x=n_i} g_i(x)$. Let $G_i(u)=\sum_{j=1}^u g_i(j)$ denote the cumulative frequency of the most frequent items, and 
$G_i^{res(u)}=\sum_{j=u+1}^{n_i} g_i(j)$.
Let  $$\widehat{z}=z_{i+2}/a_i$$
 We apply Lemma~\ref{ss} to table $T_i$, using $u=\widehat{z}$ and noting that $K-2\widehat{z} > K/2$:

\begin{align}
c_{i,K} \leq 
\min_{u<K/2} \frac{G_i^{res(u)}}{K-2u} \leq 
\frac{\sum_{\widehat{z}+1}^{n_i} g_i(x)}{K-2\widehat{z}}\leq  \frac{2.}{K} \sum_{\widehat{z}+1}^{n_i} g_i(x).    \label{eq:1}
\end{align}
As in Lemma~\ref{lemma:elementary1}, let $e'$ denote the element of substream $s_i$ such that $rank_{s_i}(e')=\widehat{z}$. By definition of $g_i$ and $e'$, we have:
$$
\sum_{\widehat{z}+1}^{n_i} g_i(x)=\sum_{y=rank_s(e')+1}^n g(y) {\bf 1}(\hbox{the element of $s$ with rank $y$ is in $s_i$}).
$$

Let $A$ denote the following event:
$$rank_s(e')\geq z_{i+1}
$$
Assume that $A$ holds. Then
$$
\sum_{\widehat{z}+1}^{n_i} g_i(x)\leq \sum_{y=z_{i+1}+1}^n g(y) {\bf 1}(\hbox{the element of $s$ with rank $y$ is in $s_i$})
$$
Observe that the value of the right-hand side is determined by which elements of $s$ are put in $s_i$, among the ones with $rank_s$ greater than $z_{i+1}$. Also observe that event $A$ is determined by how many elements of $s$ are put in $s_i$, among the ones with $rank_s$ smaller than or equal to $z_{i+1}$. Thus 
the expression in the right-hand side is independent of event $A$, and we can write: 

\begin{eqnarray*}
\E[\sum_{\widehat{z}<x\leq n_i}g_i(x) |A ]&\leq &
	\E[\sum_{y=z_{i+1}+1}^n g(y) {\bf 1}(\hbox{the element of $s$ with rank $y$ is in $s_i$}) |A]\\
	&= &
	\E[\sum_{y=z_{i+1}+1}^n g(y) {\bf 1}(\hbox{the element of $s$ with rank $y$ is in $s_i$}) ]\\
	&=&
	\frac{1}{a_i } \sum_{y=z_{i+1}+1}^n g(y) 
\end{eqnarray*}

 Now, since $g$ is $(\gamma_1,\gamma_2)$-decreasing, applying Lemma~\ref{tail} to $g$ and $k=z_{i+1}$ and rewriting, we have:
 \begin{align}
\sum_{y=z_{i+1} +1}^n g(y)  \leq \frac{(\gamma_1-1)}{1-\gamma_1/\gamma_2} .z_{i+1} .g(z_{i+1}) \label{eq:3} 
\end{align}
Combining the inequalities \eqref{eq:1}  and \eqref{eq:3}  and let $\mu=\E[c_{i,K}|A ]$ :

$$\mu=\E[c_{i,K}|A ]\leq  \frac{2}{K}\cdot \frac{1}{a_i }\cdot  \frac{(\gamma_1-1)}{1-\gamma_1/\gamma_2} z_{i+1} \cdot g(z_{i+1} )
$$
$$\mu=\E[c_{i,K}|A ]\leq  \frac{2}{K}\cdot \frac{z_i}{a_i }\cdot  \frac{(\gamma_1-1)}{1-\gamma_1/\gamma_2} (1+\eps_1^2) \cdot g(z_{i+1} )
$$

Let  $K = 2\cdot\frac{z_i}{ a_i } \cdot \frac{(\gamma_1-1)}{1-\gamma_1/\gamma_2}\cdot (1+\eps_1^2)\cdot \frac{\log n}{\eps_2 \delta}$, we have:


$$\mu=\E[c_{i,K}|A ]\leq    \frac{\delta}{\log n}  .  \eps_2.g(z_{i+1} ) $$

We use Markov's inequality to conclude that, conditioned on event $A$ we have:
$$\Pr[c_{i,K} \leq \frac{\log n}{\delta}\cdot \mu ~\mid~A]\geq 1-\delta/\log n. $$
$$\Pr[c_{i,K} \leq \eps_2. g(z_{i+1} )  ~\mid~A]\geq 1-\delta/\log n. $$

By Lemma~\ref{lemma:elementary1} event $A$ has probability at least $1-4\delta/\log n$. We conclude that 
$$ \Pr[c_{i,K} \leq \eps_2. g(z_{i+1} )  ] = \Pr[A ]  \cdot \Pr[c_{i,K} \leq \eps_2. g(z_{i+1} )  ~\mid~A]$$
$$ \Pr[c_{i,K} \leq \eps_2. g(z_{i+1} )  ] \geq (1 -4\delta/\log n)(1 - \delta/\log n)\geq  1-5\delta/\log n.$$
\end{proof}




We can now prove our main Theorem: \\
\begin{T1}
\label{mt2}
  \thmtext
\end{T1} 
\begin{proof}

As a preliminary, we first prove that for $z_i\leq 1/\eps_1^3$
\begin{equation}\label{eq:g-versus-counter}
|g(z_i)-c_{r} | \leq 2c_{K}.
\end{equation}
Recall that  $\tilde{e}$ denotes the element of $s$ with the $r-$th largest count  $c_{r}$ and that $e'$ denotes the element of $s$ with the $r-$th largest frequency, so that $g(z_i)=\occ(e')$. We write: 
$$|g(z_i)-c_{r}| =  |\occ(e')-\ccount(\tilde{e})| \leq
|\occ(e')-\ccount(e')|+
|\ccount(e')-\ccount(\tilde{e})|.$$
By properties~\ref{prop:2} and~\ref{prop:3} of Spacesaving (see page \pageref{prop:3}) the first term is at most $c_{K}$, and  by Lemma~\ref{lemma:diffcounts}, 
the second term is also at most $c_{K}$, hence Equation~\ref{eq:g-versus-counter}.\\

To prove the Theorem, in the first part of the analysis, we assume that $f=g$ and aim to prove that the algorithm outputs YES with probability $1-O(\delta)$. 

Consider the (easy) case where $z_i\leq 1/\eps_1^3$. Then $a_i=1$, $r=\lceil z_i\rceil$, the substream $s_i$ equals the stream $s$, so there is no randomness. Since $f=g$, we have $|f(z_i)-c_r| = |g(z_i)-c_r|$.
Using Equation~\ref{eq:g-versus-counter}, we deduce that $|f(z_i)-c_r| \leq 2c_K$. 
By~ Lemma~\ref{re1} $c_{i,K} \leq \eps_2 g(z_i)=\eps_2 f(z_i)$. Hence:

$$~~~~~~~~|f(z_i)-c_r| \leq
2\eps_2 f(z_i)$$
By Lemma~\ref{lemma:losing-epsilon}, this implies $c_r\simeq_{2\eps_2/(1-2\eps_2)}f(z_i)$, so $c_r\simeq_{3\eps_2}f(z_i)$. 
and the coherence test answers YES.

Now we turn to the case where $z_i> 1/\eps_1^3$.
We prove that  $c_{i,r}$ is close to one of $\{f(z_{i-1}), f(z_{i+1})   \}$ with high probability.
Let $e'_i$ the element of rank $r$ in the substream $s_i$ and let $z=\rank_s(e'_i)    $. By 
Lemma~\ref{lemma:elementary1} with h.p.
$$z_{i-1} \leq z  \leq z_{i+1}$$


Because $g$ is a $(3\eps_1,\eps_2)$-half-stable function, consider the rectangle $R=
[x, x.(1+3\eps_1)] \times [y, y(1+\eps_2) ]$  which contains $z$,  guaranteed 
by the half-stability hypothesis. Recall that $z_{i+1}=(1+\eps_1 ^2)^2 .z_{i-1}$. Because $ (1+\eps_1 ^2)^2 < (1+3\eps_1)$,  
   at least one of $z_{i-1}$ and $z_{i+1}$ is in $R$.
Assume $R$ contains $z_{i+1}$, then:  
$$~~~~~~~~|f(z_{i+1})-c_{i,r}| = |g(z_{i+1})-c_{i,r}| ~~ as~ f=g~~~~~~~~~~~~~~~~~~~~~~~~$$
By the triangular inequality:
\begin{align}
 ~~~~~~~~|f(z_{i+1})-c_{i,r}| \leq |g(z_{i+1})-\occ(e'_i)|+|\occ(e'_i)-c_{i,r}|
 \label{eq:t1a}
\end{align}
The first term is:
$$|g(z_{i+1})-\occ(e'_i)|=|g(z_{i+1})-g(z)|= (\frac{g(z)}{g(z_{i+1})}-1) \cdot g(z_{i+1}))\leq \eps_2 . g(z_{i+1})$$
as $z$ and $z_{i+1}$ are both in $R$.
For the second term:
$$|\occ(e'_i)-c_{i,r}|  \leq c_{i,K}$$
 from properties 2 and 3 of Spacesaving (see page \pageref{prop:3}).
By Lemma \ref{re} with high probability:
$$ c_{i,K} \leq \eps_2 g(z_{i+1})$$

We conclude  from (\ref{eq:t1a}) that: 
$$|f(z_{i+1})-c_{i,r}| \leq 2\eps_2 . g(z_{i+1})=2\eps_2 . f(z_{i+1})$$
By Lemma~\ref{lemma:losing-epsilon} this implies
$$c_{i,r}. \simeq_{2\eps_2/(1-2\eps_2)} f(z_{i+1}) $$ 
The  coherence test is passed as $c_{i,r} \simeq_{5\eps_2}f(z_{i+1})$. The case when $J$ contains $z_{i-1}$ is similar.\\

In the second part of the analysis, we assume that $g$ is far from $f$, i.e.  $f \not\sim_{ (3.\eps_1,10\eps_2)} g$, and aim to prove that with probability $1-O(\delta)$ the algorithm outputs NO. 

By  Theorem \ref{main0} with $d=10$, there exists
a separating rectangle $R=[b, b(1+\eps_1)]*[c,c(1+8\eps_2)]$ which separates $f$ from $g$, see figure \ref{fig:sr}.  We want to show that $c_{i,r}$ which is close to $g(z_i)$ is far from $f(z_i)$.

Consider the (easy) case where $[b,b(1+\eps_1)]$ contains some $z_i\leq 1/\eps_1^3$ where we have one stream only. Hence $g(z_i)\not\sim_{8\eps_2}f(z_i)$ and by Lemma~\ref{lemma:losing-epsilon} we have: 

\begin{align}
 ~~~~~~~~
|f(z_i)-g(z_i)|>\frac{8\eps_2}{1+8\eps_2}\max(f(z_i),g(z_i)) \geq 7\eps_2 \cdot \max(f(z_i),g(z_i))
 \label{eq:fg}
 \end{align}
  By Equation~\ref{eq:g-versus-counter} and Lemma~\ref{re1}, 
$$|g(z_i)-c_r| \leq
2c_K\leq 2\eps_2g(z_i) \leq  2\eps_2 \cdot \max(f(z_i),g(z_i)) $$
 Hence: 
 $$|f(z_i)-c_r|\geq |f(z_i)-g(z_i)|-|g(z_i)-c_r|$$
 $$~~~~~~~~~~~~~~~~~~~~~~~~~~~~~~~~~~ \geq (7\eps_2-2\eps_2) \cdot \max(f(z_i),g(z_i))\geq 5\eps_2 \cdot \max(f(z_i),g(z_i))$$
Notice that $\max(f(z_i),g(z_i)) \geq \min(f(z_i),c_r)$. Hence: 
 $$|f(z_i)-c_r|\geq 5\eps_2 \cdot \max(f(z_i),g(z_i)) \geq  5\eps_2 \min(f(z_i),c_r)$$
 

The coherence test answers NO as $c_r\not\simeq_{5\eps_2} f(z_i)$\\

If $z_i\geq 1/\eps_1^3$, there is some $i$ such that $z_{i-1}, z_i, z_{i+1}$, separated by $(1+\eps_1^2)$,  are all in the range of the separating rectangle $R$ of relative width $1+\eps_1$.  
The inequality (\ref{eq:g-versus-counter}) is not necessarly true for $g(z_i)$.
Consider the two cases in Figure \ref{sep-rect}: in the first case $f$ is below the separating rectangle $R$ and $g$ is above, in the second case $g$ is below the separating rectangle $R$ and $f$ is above.

 \begin{figure}[ht]
 \begin{center}
\includegraphics[height=60mm]{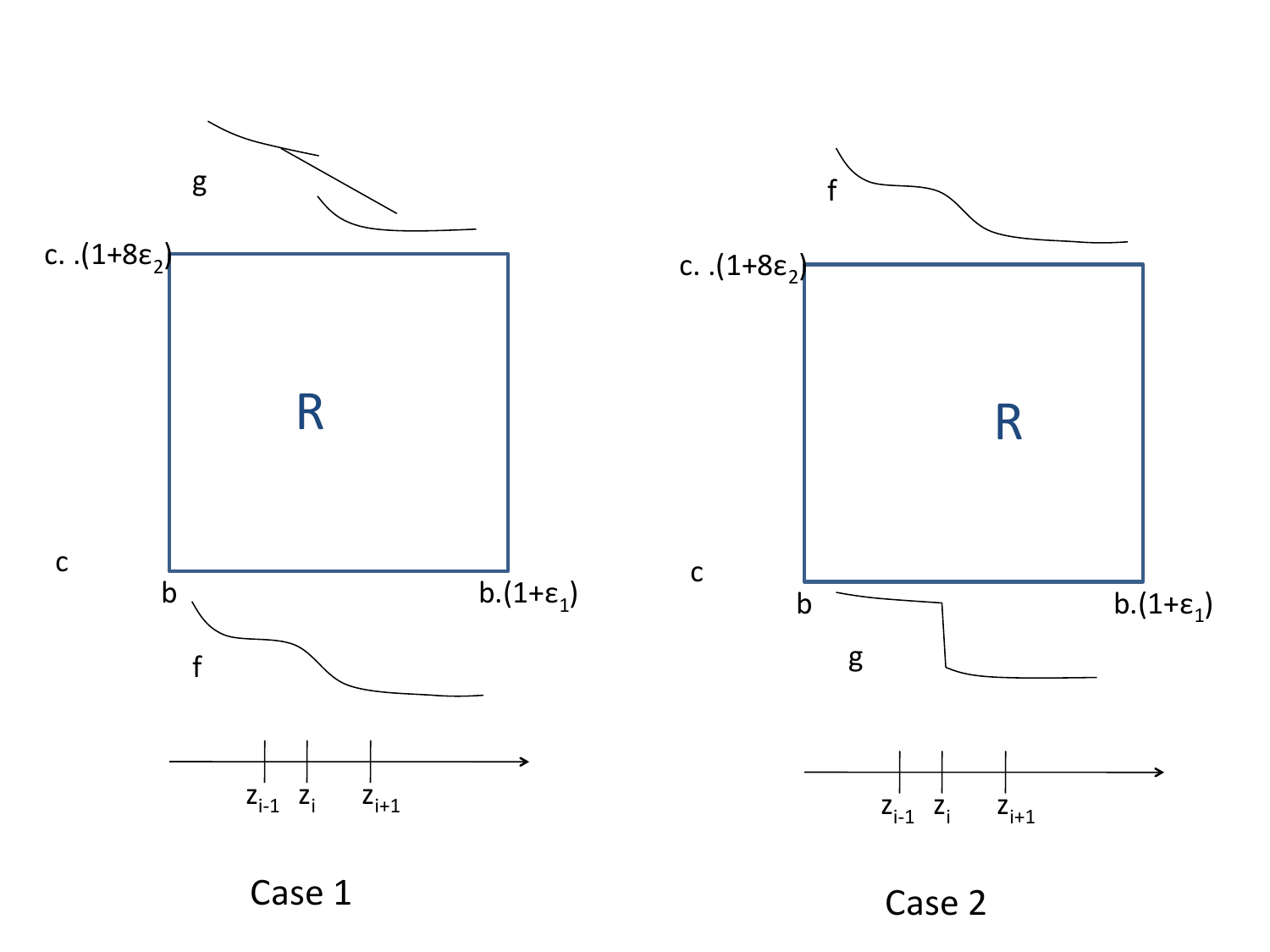}

~~~~~~~~~~~~\caption{~~~~~~~~Separating rectangle $R$ when $f$ and $g$ are far from each other} 
\label{sep-rect}
 \end{center}
\end{figure}

For case 1, $f$ is below $R=[b, b(1+\eps_1)]*[c,c(1+8\eps_2)]$ and:

$$ c_{i,r}= \ccount(\tilde{e_i}) \geq \ccount(e'_i)- c_{i,K}$$
$$ \ccount(e'_i) \geq  \occ(e'_i)- c_{i,K}$$
$$  \occ(e'_i) \geq g(z_{i+1})$$
Then $c_{i,r} \geq g(z_{i+1})-2c_{i,K}$.
By Lemma \ref{re}, with h.p. we have:
$c_{i,K} \leq \eps_2 . g(z_{i+1})$.
Hence $c_{i,r} \geq (1-2\eps_2). g(z_{i+1})$. As $g$ is above $R$:
$g(z_{i+1}) \geq c.(1+8\eps_2)$.
 Hence:
 $$c_{i,r} \geq (1-2\eps_2). g(z_{i+1}) \geq  (1-2\eps_2). c.(1+8\eps_2)$$
Because $f$ is below $R$:
$$f(z_{i+1}) \leq f(z_{i}) \leq f(z_{i-1}) \leq c$$
$$ \frac{c_{i,r}}{f(z_{i+1})} \geq  \frac{c_{i,r}}{f(z_{i})}\geq \frac{c_{i,r}}{f(z_{i-1})} \geq (1-2\eps_2). (1+8\eps_2)\geq (1+5\eps_2)$$
The coherence test answers NO with h.p.\\

For case 2, $f$ is above $R=[b, b(1+\eps_1)]*[c,c(1+8\eps_2)]$ and:
$$ c_{i,r}= \ccount(\tilde{e_i}) \leq \ccount(e'_i)+ c_{i,K}$$
$$ \ccount(e'_i) \leq  \occ(e'_i)+ c_{i,K}$$
$$  \occ(e'_i) \leq c$$
$$  g(z_{i+1})  \leq c$$
Then $c_{i,r} \leq c+ 2c_{i,K}$. By Lemma \ref{re}, with h.p. we have:
$c_{i,K} \leq \eps_2 . g(z_{i+1})$. Hence:
$$ c_{i,r} \leq c.(1+2\eps_2)$$
$$f(z_{i-1}) \geq f(z_{i}) \geq f(z_{i+1}) \geq c.(1+8\eps_2) $$
$$ \frac{f(z_{i-1})}{c_{i,r}} \geq  \frac{f(z_{i})}{c_{i,r}}\geq \frac{f(z_{i+1})}{c_{i,r}} \geq (1+8\eps_2)/(1+2\eps_2)  \geq (1+5\eps_2)$$
The coherence test answers NO with h.p.
\end{proof}

In order to extend the Tester $A_1$ to the Tester $A_2$ which takes two different streams as input, we need to analyse the case when  $f \sim_{(\eps_1, \eps_2)}g$ instead of $f=g$.\\

\begin{corollary}[Tolerant Tester]\label{tt}
If  $f$ and $g$ are $ (3\eps_1,\eps_2)$\hbox{-half-stable} and $(\gamma_1,\gamma_2)$-decreasing 
and $f \sim_{(\eps_1^4 , \eps_2)}g$ then the Tester $A_1$ outputs YES with high probability.
\end{corollary}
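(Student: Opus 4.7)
The plan is to revisit the YES portion (first half) of the proof of Theorem~\ref{main}, identify where the equality $f=g$ was invoked, and check that the weaker hypothesis $f \sim_{(\eps_1^4,\eps_2)}g$ can be substituted at the cost of absorbing a small extra $\eps_2$-slack into the $5\eps_2$ budget of the coherence test.

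The key preliminary step will be to show that, at the rank landmarks used by the algorithm, $g$ is still $O(\eps_2)$-close to $f$. Given $z_i$, the relative Fr{\'e}chet closeness supplies $x'_i$ with $x'_i \simeq_{\eps_1^4} z_i$ and $g(x'_i)\simeq_{\eps_2} f(z_i)$. Since $\eps_1^4$ is dramatically smaller than the $3\eps_1$ width of the half-stability windows of $g$, the points $z_i$ and $x'_i$ fall in the same or in two adjacent half-stability rectangles of $g$. Using the $(3\eps_1,\eps_2)$-half-stability of $g$ to hop (at most once) between rectangles, I get $g(z_i)\simeq_{2\eps_2} g(x'_i)$, and hence $g(z_i)\simeq_{3\eps_2} f(z_i)$ via Lemma~\ref{lemma:losing-epsilon} (the exact small constants do not matter, only that they are absolute $O(\eps_2)$).

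Then I would rerun the YES analysis of Theorem~\ref{main} verbatim on $g$. In the easy regime $z_i\leq 1/\eps_1^3$ where $a_i=1$ and $s_i=s$, Equation~\ref{eq:g-versus-counter} and Lemma~\ref{re1} give $|g(z_i)-c_r|\leq 2\eps_2\,g(z_i)$; combining with the preliminary step by the triangle inequality yields $|f(z_i)-c_r|\leq O(\eps_2)\cdot \min(f(z_i),c_r)$, passing the test whenever $5\eps_2$ exceeds the small constant. In the general regime $z_i> 1/\eps_1^3$, Lemma~\ref{lemma:elementary1} still places $\rank_s(e'_i)$ in $[z_{i-1},z_{i+1}]$ with high probability, so at least one of $z_{i-1},z_{i+1}$ lies in the same $(3\eps_1,\eps_2)$-half-stability rectangle of $g$ as $\rank_s(e'_i)$; the triangle inequality decomposes $|f(z_{i\pm 1})-c_{i,r}|$ into $|f(z_{i\pm 1})-g(z_{i\pm 1})|+|g(z_{i\pm 1})-\occ(e'_i)|+|\occ(e'_i)-c_{i,r}|$, where the first term is $O(\eps_2)\cdot\max$ by the preliminary step, the second is $\eps_2\,g(z_{i\pm 1})$ by half-stability of $g$, and the third is bounded by $c_{i,K}\leq \eps_2\, g(z_{i+1})$ via Lemma~\ref{re} with probability $1-O(\delta/\log n)$.

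The main obstacle is the bookkeeping of constants: the Theorem~\ref{main} YES-proof already consumes roughly $2\eps_2$ of the $5\eps_2$ budget in the rectangle-hopping step, so I need to verify that the extra $O(\eps_2)$ cost of $|f(z_i)-g(z_i)|$ still fits. This is precisely why the hypothesis demands $\eps_1^4$-closeness on the $x$-axis (so that it is negligible against the sampling accuracy $\eps_1^2$ of Lemma~\ref{lemma:elementary1}) while keeping $\eps_2$ on the $y$-axis (because $y$-errors simply add and the coherence tolerance $5\eps_2$ leaves enough slack). Finally, a union bound over the $O(\log n)$ values of $i$ with the high-probability events of Lemmas~\ref{lemma:elementary1} and~\ref{re} gives the overall $1-O(\delta)$ success probability.
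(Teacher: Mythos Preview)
Your preliminary step has a genuine gap. Half-stability of $g$ does \emph{not} let you conclude $g(z_i)\simeq_{2\eps_2} g(x'_i)$ merely from $x'_i\simeq_{\eps_1^4} z_i$: the definition allows $g$ to have one large drop anywhere, and nothing forbids that drop from sitting exactly between $x'_i$ and $z_i$. For a concrete picture, take $z_i=M$ large, $g(t)=100$ for $t<M$ and $g(t)=1$ for $t\ge M$; this $g$ is $(3\eps_1,\eps_2)$-half-stable (the rectangle for $M$ lies entirely to the right of the jump, the one for $M{-}1$ entirely to the left), yet if the Fr\'echet match of $(z_i,f(z_i))$ is $x'_i=M{-}1$ you get $g(z_i)=1$ versus $g(x'_i)=100$. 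Your ``hop at most once between adjacent rectangles'' does not help, because adjacent half-stability rectangles can have $y$-ranges differing by an arbitrary factor---that is precisely what a single permitted discontinuity means. With the preliminary step gone, the first term $|f(z_{i\pm1})-g(z_{i\pm1})|$ in your general-case decomposition is unbounded.

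The paper avoids this by reversing both the Fr\'echet direction and the roles of $f$ and $g$ in the hopping. It starts from $(z_i,g(z_i))$, takes the Fr\'echet-close point $(t,f(t))$ so that $f(t)\simeq_{\eps_2} g(z_i)$ and $t\simeq_{\eps_1^4}z_i$, and then invokes half-stability of $f$ at $t$. Since $t\in[z_{i-1},z_{i+1}]$ and $z_{i+1}/z_{i-1}=(1+\eps_1^2)^2<1+3\eps_1$, the $(3\eps_1,\eps_2)$-rectangle of $f$ around $t$ is wider than the whole interval $[z_{i-1},z_{i+1}]$ and therefore must contain at least one of $z_{i-1},z_{i+1}$, giving $f(t)\simeq_{\eps_2} f(z_{i\pm1})$ with no hop across a discontinuity. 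The chain is then $c_{i,r}\to g(z_i)\to f(t)\to f(z_{i\pm1})$. The essential difference from your plan is that the half-stability rectangle is anchored at $t$ and asked to reach \emph{one of two} landmarks spaced $\eps_1^2$ apart, which a $3\eps_1$-wide rectangle must do; your argument anchored the rectangle and asked it to reach a single specific point at distance $\eps_1^4$, for which half-stability gives no guarantee when that point happens to straddle a jump.
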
 
\begin{proof}
Since $f\sim_{(\eps_1^4 , \eps_2)} g$, there exists a $t$ such that  $|t-z_i | \leq \eps_1^4\cdot z_i $ and
 $$(t,f(t))\simeq_{\eps_1^4,\eps2} (z_i,g(z_i))$$
 
Consider the (easy) case where $z_i\leq 1/\eps_1^3$. Then the substream $s_i$ equals the stream $s$, so there is no randomness. 
 Since $z_i\leq 1/\eps_1^3$, then $\eps_1^4\cdot z_i \leq \eps_1$ and $t=z_i$. 
 As  $|g(z_i)-c_r|\leq 2\eps_2 g(z_i)$ and also $f(z_i)\simeq_{\eps2} g(z_i)$, we write:
$$|f(z_i)-c_r|\leq |f(z_i)-g(z_i)|+|g(z_i)-c_r|  \leq \eps_2 \min(f(z_i),g(z_i))+2\eps_2 g(z_i)$$
$$\leq \eps_2 f(z_i) +2\eps_2.(1+\eps_2).f(z_i)= (3\eps_2 +2 \eps_2^2).f(z_i)$$
By Lemma \ref{lemma:losing-epsilon}, we conclude that $|f(z_i)-c_r|\leq 4\eps_2 \min(f(z_{i}),c_{r})$ and
$c_{r}\not\simeq_{5\eps_2} f(z_i)$.


Now we turn to the case where $z_i> 1/\eps_1^3$. As $t$ is very close to $z_i$, it is between 
$z_{i-1}$ and $z_{i+1}$. Because $f$ is  $(3\eps_1, \eps_2)$-half-stable, $t$ is in a Box $J$ of relative
 height $(1+\eps_2)$ which includes either $z_{i-1}$ or  $z_{i+1}$.

 In the first case, $t$ and $z_{i+1}$ are in the same box $J$. Then:
 $$|c_{i,r}- f(z_{i+1})|\leq |c_{i,r}- f(t)| + |f(t)- f(z_{i+1})|$$
 $$~~~~~~~~\leq  |c_{i,r}-g(z_i) |+ |g(z_i)- f(t)| + |f(t)- f(z_{i+1})|$$
 Let us bound each part:  $ |c_{i,r}-g(z_i) |\leq 2\eps_2 \cdot g(z_i)\leq (2,1) \eps_2 \cdot c_{i,r}$ by Equation 
 \ref{eq:g-versus-counter} and Lemma \ref{lemma:losing-epsilon}. 
  $ |g(z_i)- f(t)| \leq \eps_2.\min(g(z_i),f(t)) \leq \eps_2.f(t)  \leq \eps_2(1+\eps_2 ).f(z_{i+1}) $
  because of the hypothesis and $t$ and $z_{i+1}$ are in the same box $J$. 
  $|f(t)- f(z_{i+1})| \leq \eps_2.\min(f(t),f(z_{i+1}))\leq \eps_2.f(t)  \leq \eps_2.f(z_{i+1}) $ because $f$ is monotone. 
  Hence:
 $$|c_{i,r}- f(z_{i+1})|\leq (2,1) \eps_2 \cdot c_{i,r}+ \eps_2(1+\eps_2 ).f(z_{i+1})  +\eps_2.f(z_{i+1}) $$
 $$|c_{i,r}- f(z_{i+1})|\leq (4,2 ) \eps_2  \cdot \max(c_{i,r}, f(z_{i+1}))$$
 By Lemma \ref{lemma:losing-epsilon},
 $$|c_{i,r}- f(z_{i+1})|\leq 5 \eps_2 \min(c_{i,r}, f(z_{i+1}))$$
 Hence $c_{i,r}\simeq_{5\eps_2} f(z_{i+1})$.\\

 In the second case, $t$ and $z_{i-1}$ are in the same box $J$. Then:
 $$|c_{i,r}- f(z_{i-1})|\leq |c_{i,r}- f(t)| + |f(t)- f(z_{i-1})|$$
 $$~~~~~~~~\leq  |c_{i,r}-g(z_i) |+ |g(z_i)- f(t)| + |f(t)- f(z_{i-1})|$$
We bound each part: $ |c_{i,r}-g(z_i) | \leq 2\eps_2 \cdot g(z_i)\leq (2,1) \eps_2 \cdot c_{i,r}$ as in the previous case.  $ |g(z_i)- f(t)| \leq \eps_2.\min(g(z_i),f(t)) \leq \eps_2.f(t)  \leq \eps_2.f(z_{i-1}) $ because $f$ is monotone decreasing. $|f(t)- f(z_{i-1})| \leq \eps_2.\min(f(t),f(z_{i-1}))\leq \eps_2.f(t)  \leq \eps_2.(1+\eps_2 ).f(z_{i-1}) $ because $t$ and $z_{i-1}$ are in the same box $J$. Hence:
$$|c_{i,r}- f(z_{i-1})|\leq   (4,2 ) \eps_2  \cdot \max(c_{i,r}, f(z_{i-1}))$$
By Lemma \ref{lemma:losing-epsilon},
 $$|c_{i,r}- f(z_{i-1})|\leq 5 \eps_2 \min(c_{i,r}, f(z_{i-1}))$$
 Hence $c_{i,r}\simeq_{5\eps_2} f(z_{i-1})$.\\
 
The  coherence test is passed and the Tester answers YES.
\end{proof}


\subsection{A corrector $\bar{g}$ for the distribution $g$ of the stream}\label{corrlemma}

Let  $\widehat{g} $  the step function defined by $\widehat{g}(z_i)=c_{i,r} $ and  $\widehat{g}(t) =\widehat{g}(z_i)$ for $z_i \leq t < z_{i+1}$ in Algorithm $A_1$.  Because of errors in the counter value $c_{i,r} $, the function may not be monotone. We therefore  introduce its monotone version $ \bar{g}$.
$$ \bar{g}(t)=    Min _{y\leq t }\{ \widehat{g}(y) \}   $$

Let us first prove that $\widehat{g}$ and $\bar{g} $ are very close to $g$, depending on the required size $K$ of the SpaceSaving tables.

\begin{lemma} \label{gbara}
If $g$ is $(\eps_1,\eps_2)$ half-stable and  $(\gamma_1,\gamma_2 )$-decreasing, then Algiorithm $A_1$ with $K= \frac{2 }{ \eps_2 . \eps_1^2} \cdot \frac{(\gamma_1-1)}{1-\gamma_1/\gamma_2}\cdot \frac{\log n .\log\log n }{\delta}\cdot(1+\eps_1^2)$ defines $\widehat{g} $  and $ \bar{g}$ such that  $\widehat{g}\sim_{(\eps_1,2\eps_2)}g$ and
  $\bar{g} \sim_{(5\eps_1,2\eps_2)} g$ with high probability. 
\end{lemma}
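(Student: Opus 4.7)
The plan is to first bound the pointwise error between $\widehat{g}$ and $g$ at the grid points $z_i$, then extend to the full step function using half-stability, and finally handle the monotonization that produces $\bar{g}$.

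First, I would show that for each $i$, the value $\widehat{g}(z_i) = c_{i,r}$ is close to $g$ at some point near $z_i$. For small $z_i \leq 1/\eps_1^3$ we have $a_i=1$, so Equation~\ref{eq:g-versus-counter} combined with Lemma~\ref{re1} yields $|c_{i,r} - g(z_i)| \leq 2 c_{i,K} \leq 2\eps_2\, g(z_i)$. For $z_i > 1/\eps_1^3$, Lemma~\ref{lemma:elementary1} gives with high probability that the element $e'_i$ selected at position $r$ has true rank $z$ with $z_{i-1}\leq z \leq z_{i+1}$, and Lemma~\ref{re} gives $c_{i,K}\leq \eps_2\, g(z_{i+1})$. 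Using the triangle inequality as in the proof of Theorem~\ref{main}, $|c_{i,r}-g(z)| \leq 2\eps_2 \cdot g(z)$, and by half-stability of $g$ all the values $g(z_{i-1}), g(z_i), g(z_{i+1})$ lie in a common $(3\eps_1,\eps_2)$-rectangle. By a union bound over the $O(\log n)$ indices $i$ (absorbed in the choice of $\delta$), all these bounds hold simultaneously.

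Next, I would verify $\widehat{g} \sim_{(\eps_1,2\eps_2)} g$. For the forward direction, any point $(t, \widehat{g}(t))$ of $\widehat{g}$ lies on a step: $z_i \leq t < z_{i+1}$ and $\widehat{g}(t) = c_{i,r}$. The witness $(z, g(z))$ found above satisfies $z \in [z_{i-1}, z_{i+1}]$, so $z \simeq_{\eps_1} t$ because consecutive $z_i$ are spaced by a factor at most $(1+\eps_1^2)^2 < 1+\eps_1$, and $g(z) \simeq_{2\eps_2} c_{i,r}$ by Lemma~\ref{lemma:losing-epsilon}. Conversely, for any $(t, g(t))$ of $g$, let $i$ satisfy $z_i \leq t < z_{i+1}$; the point $(z_i, c_{i,r})$ is horizontally within $\eps_1$ of $t$, and vertically $g(t)$ and $g(z)$ differ by at most a $(1+\eps_2)$ factor by half-stability, so combined with $g(z)\simeq_{2\eps_2} c_{i,r}$ we conclude $(t,g(t)) \sim_{(\eps_1,2\eps_2)} (z_i, c_{i,r})$ after a small constant adjustment.

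Finally I would handle $\bar{g}$. By construction $\bar{g}(t) = \widehat{g}(z_j)$ for some $j\leq i$ where $z_i \leq t < z_{i+1}$. The vertical witnessing in both directions is not worse than for $\widehat{g}$ since $\bar{g}(t)\leq \widehat{g}(t)$ and $g$ is monotone non-increasing, so $2\eps_2$ still suffices on the $y$-axis. The horizontal slack is the delicate point: I need to show that $j$ cannot be much smaller than $i$. Since $\widehat{g}(z_j) \simeq_{2\eps_2} g(z_j')$ for some $z_j'$ close to $z_j$ (same analysis as above applied to index $j$), and since $\widehat{g}(z_j)$ is the pointwise minimum, we have $g(z_j') \leq (1+2\eps_2)^2 \, g(z_i')$. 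If $z_j \leq z_i/(1+5\eps_1)$, then using the $(\gamma_1,\gamma_2)$-decreasing property of $g$ with $\gamma_2 > 1$ fixed, $g(z_j')$ would be strictly larger than $(1+2\eps_2)^2\, g(z_i')$ for $\eps_1,\eps_2$ small enough, a contradiction. Hence $z_j \geq z_i/(1+5\eps_1)$, giving horizontal closeness within $5\eps_1$ between $(t, \bar{g}(t))$ and a corresponding point of $g$.

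The main obstacle is the last step: the $5\eps_1$ horizontal bound for $\bar{g}$ requires carefully trading off the multiplicative error $(1+2\eps_2)$ carried by the counters against the strict decrease rate $\gamma_2$ of $g$, in order to rule out a long look-back in the monotonization. Everything else follows from Lemmas~\ref{re1},~\ref{re}, and half-stability in a mechanical way.
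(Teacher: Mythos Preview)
Your treatment of the first part (showing $\widehat{g}\sim_{(\eps_1,2\eps_2)}g$) is essentially the paper's argument, only spelled out in more detail via the rank $z$ of $e'_i$; the paper simply invokes Equation~\eqref{eq:g-versus-counter} and Lemmas~\ref{re1},~\ref{re} to obtain $|g(z_i)-c_{i,r}|\leq 2\eps_2\,g(z_i)$ directly.

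The second part, however, has a genuine gap. You try to show that the monotonization cannot look back far, i.e.\ that the index $j$ achieving the minimum satisfies $z_j\geq z_i/(1+5\eps_1)$, and you invoke the $(\gamma_1,\gamma_2)$-decreasing property to derive a contradiction otherwise. But that property only guarantees a drop by a factor $\gamma_2$ over a horizontal factor $\gamma_1$, and typically $\gamma_1\gg 1+5\eps_1$ (e.g.\ $\gamma_1=2$). If $g$ is nearly constant on an interval of relative width just below $\gamma_1$, the minimum of $\widehat{g}$ over $k\leq i$ can be attained at a $z_j$ with $z_i/z_j$ close to $\gamma_1$, far outside your $(1+5\eps_1)$ window. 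So the look-back cannot be bounded by $5\eps_1$ in this way.

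The paper avoids this entirely: it does \emph{not} attempt to bound how far back $u$ (your $z_j$) lies. Instead it splits into two cases. If $t\simeq_{3\eps_1}u$, one uses the witness $u'$ near $u$ coming from the first part. If $t\not\simeq_{3\eps_1}u$, one takes the witness $t'$ near $t$ (again from the first part, applied at $t$) and sandwiches $\bar g(t)$: on one side $\bar g(t)\leq \widehat g(t)\simeq_{2\eps_2}g(t')$, and on the other $\bar g(t)=\widehat g(u)\simeq_{2\eps_2}g(u')\geq g(t')$ by plain monotonicity of $g$ (since $u'<t'$). Thus $\bar g(t)\simeq_{2\eps_2}g(t')$ with $t'\simeq_{\eps_1}t$. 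Note that only monotonicity of $g$ is used here, not the $(\gamma_1,\gamma_2)$-decreasing hypothesis; your proposed route via the decrease rate is both unnecessary and, as stated, insufficient.
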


\begin{proof}

Let us prove the first part of the Lemma. 
Let $z_i \leq t < z_{i+1}$. Then $(t,\widehat{g}(t))=(t,\widehat{g}(z_i))$.
By  the inequality (\ref{eq:g-versus-counter}) in Theorem \ref{mt2}
$$|g(z_i)-c_{i,r} | \leq 2c_{i,K}$$
and by Lemmas~\ref{re} and~\ref{re1}, with h.p.
$$c_{i,K} \leq \eps_2. g(z_{i})$$
Hence $|g(z_i)-\widehat{g}(z_i))| \leq 2 \cdot c_{i,K} \leq 2 \cdot \eps_2. g(z_{i}) $ and
$$(t,\widehat{g}(t)) \simeq_{(\eps_1,2\eps_2)} (z_{i},g(z_{i})).$$

For the second part of the Lemma, let $(t, \bar{g}(t))$ a point of $\bar{g}$ and let us show that is is close to a point
 of $g$. By definition of $\bar{g}$ there is a $u\leq t $ such that $\bar{g}(t))=\widehat{g}(u)$ and $u=z_i$ for some $i$. By the the first part of the Lemma,  there exists $u' $ such that $(u,\widehat{g}(u)) \simeq_{(\eps_1,2\eps_2)}( u', g(u'))$ and  there exists $t'$ such that $(t,\widehat{g}(t))\simeq_{(\eps_1,2\eps_2)} (t', g(t'))$. Let us consider two cases depending on whether $u$ is close to $t$ or not.

Case 1:  $t \simeq_{3\eps_1}  u$. We also have $u \simeq_{\eps_1} u'$, therefore $t \simeq_{5\eps_1} u'$. Hence 
$(t, \bar{g}(t)) \simeq_{5\eps_1,2\eps_2}  (u',g(u'))$.

Case 2:  $t \not \simeq_{3\eps_1} u$. Observe that since $u'\simeq_{\eps_1} u$ and $t'\simeq_{\eps_1}t$ and $u<t$, we must have $u'<t'$.  
We can then write:  $\bar{g}(t)=\widehat{g}(u) \simeq_{2\eps_2} g(u') \geq g(t')$ because $g$ is decreasing then: 
$$\bar{g}(t) \geq   \frac {1}{1+2\eps_2} \cdot g(t').$$
Now $\bar{g}(t) \leq \widehat{g}(t) \simeq_{2\eps_2}  g(t')$, so: 
$$\bar{g}(t) \leq   (1+2\eps_2) \cdot g(t')$$
Hence $\bar{g}(t) \simeq_{2\eps_2}  g(t')$ and $(t,\bar{g}(t))\simeq_{(\eps_1,2\eps_2)} (t',g(t'))$.
In both cases $\bar{g} \sim_{(5\eps_1,2\eps_2)} g$. 
\end{proof}

 Let us study the conditions on $g$ such that $ \bar{g}$ is both   $(\gamma_1,\gamma_2 )$-decreasing and $(\eps_1,\eps_2)$ half-stable.
Let $\widehat{g}$ be a $(\gamma_1,\gamma_2 )$-decreasing function then $ \bar{g}$, its monotone
decreasing approximation, is also  $(\gamma_1,\gamma_2 )$-decreasing.
\begin{lemma} \label{gbar}
If $\widehat{g}$ is a $(\gamma_1,\gamma_2 )$-decreasing function, its monotone approximation  $ \bar{g}$ is also $(\gamma_1,\gamma_2 )$-decreasing.
\end{lemma}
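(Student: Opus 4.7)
The proof should be short and structural. The key observation is that the $\min$ operation defining $\bar{g}$ is compatible with the $(\gamma_1,\gamma_2)$-decreasing inequality in a clean way.

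My plan is as follows. First, note that $\bar{g}$ is non-increasing by its very definition: for $s \geq t$, the set $\{y : y \leq s\}$ contains $\{y : y \leq t\}$, so the minimum over the larger set is at most the minimum over the smaller one. Thus the monotonicity requirement in Definition~\ref{definition:decreasing} is free.

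Next, I fix $t$ with $1 \leq \gamma_1 t \leq n$ and let $s = \lceil \gamma_1 t \rceil$. I want to prove $\bar{g}(s) \leq \bar{g}(t)/\gamma_2$. By definition of $\bar{g}$, there exists $y^\star \leq t$ such that $\bar{g}(t) = \widehat{g}(y^\star)$. Since $\widehat{g}$ is assumed to be $(\gamma_1,\gamma_2)$-decreasing, I can apply its defining inequality at the point $y^\star$ to obtain
$$\widehat{g}(\lceil \gamma_1 y^\star \rceil) \leq \widehat{g}(y^\star)/\gamma_2 = \bar{g}(t)/\gamma_2.$$

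To finish, I observe that $y^\star \leq t$ implies $\lceil \gamma_1 y^\star \rceil \leq \lceil \gamma_1 t \rceil = s$, so $\lceil \gamma_1 y^\star \rceil$ is one of the candidates in the $\min$ defining $\bar{g}(s)$. Therefore
$$\bar{g}(s) = \min_{y \leq s} \widehat{g}(y) \leq \widehat{g}(\lceil \gamma_1 y^\star \rceil) \leq \bar{g}(t)/\gamma_2,$$
which is exactly the required inequality.

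There is no real obstacle here; the argument is essentially a two-line min-exchange. The only thing worth being careful about is that the domain condition $1 \leq \gamma_1 y^\star \leq n$ is automatically inherited from the hypothesis $\gamma_1 t \leq n$ together with $y^\star \leq t$, so the application of the $(\gamma_1,\gamma_2)$-decreasing property for $\widehat{g}$ at $y^\star$ is legitimate.
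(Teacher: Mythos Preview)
Your proof is correct and follows essentially the same approach as the paper: both exploit that the $(\gamma_1,\gamma_2)$-decreasing inequality for $\widehat{g}$ passes through the $\min$ defining $\bar{g}$. The paper phrases this as a chain of inequalities on minima (bounding $\min_{y\leq \gamma_1 t}\widehat{g}(y)$ by $\min_{y\leq \gamma_1 t}\widehat{g}(y/\gamma_1)/\gamma_2$ and then substituting $z=y/\gamma_1$), whereas you pick the minimizer $y^\star$ explicitly and apply the inequality at that single point; these are two equivalent formulations of the same min-exchange argument, and your version is a bit more careful about the ceilings and the domain condition.
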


\begin{proof}
$$ \bar{g}(\gamma_1 t)=    Min _{y\leq \gamma_1 t }\{ \widehat{g}(y) \}   $$

$$~~~~ ~~~~~~~~~ ~~~~ \leq  Min _{~y\leq \gamma_1 t ~}\{ \widehat{g}(y/\gamma_1) /\gamma_2\}  $$
$$~~~~ ~~~~~~ ~~~ ~~~ ~~\leq  \frac{1}{\gamma_2} ~ ~Min _{~y\leq \gamma_1 t ~}\{ \widehat{g}(y/\gamma_1) \}  $$
$$~~~~ ~~~~~~ ~~~  \leq  \frac{1}{\gamma_2} ~ ~Min _{~z\leq \ t ~}\{ \widehat{g}(z) \}  $$
$$ \leq  \frac{1}{\gamma_2} ~ ~\bar{g}(t)   $$

\end{proof}
\begin{lemma} \label{gbarb}
If $g$ is a $(\gamma_1/(1+\eps_1),(1+\eps_2)^2\gamma_2 )$-decreasing function, its monotone approximation  $ \bar{g}$ is  $(\gamma_1,\gamma_2 )$-decreasing.
\end{lemma}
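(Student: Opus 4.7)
The plan is to reduce to Lemma \ref{gbar}: it suffices to show that the underlying step function $\widehat{g}$ (not just $\bar{g}$) is already $(\gamma_1,\gamma_2)$-decreasing with high probability, since then applying Lemma \ref{gbar} to $\widehat{g}$ gives the same property for $\bar{g}$. So the new content is to transfer the decreasing rate from $g$ (which satisfies the stronger hypothesis $(\gamma_1/(1+\eps_1), (1+\eps_2)^2\gamma_2)$-decreasing) to $\widehat{g}$, paying for the transfer with the two slack factors $(1+\eps_1)$ in the horizontal direction and $(1+\eps_2)^2$ in the vertical direction.

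First I would fix $t$ in the admissible range and let $z_i$ and $z_j$ denote the step indices with $z_i\leq t<z_{i+1}$ and $z_j\leq \lceil \gamma_1 t\rceil<z_{j+1}$, so that $\widehat{g}(t)=\widehat{g}(z_i)$ and $\widehat{g}(\lceil \gamma_1 t\rceil)=\widehat{g}(z_j)$. The bound of Lemma \ref{gbara} (more precisely, the relation $|g(z_i)-\widehat{g}(z_i)|\leq \eps_2\, g(z_i)$ extracted from Lemmas \ref{re1} and \ref{re}, with $K$ chosen with the appropriate constant) gives that with high probability
\[
\widehat{g}(z_i)\geq \frac{g(z_i)}{1+\eps_2}\qquad\text{and}\qquad \widehat{g}(z_j)\leq (1+\eps_2)\,g(z_j).
\]

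Next I would locate $z_j$ relative to $z_i$. Since $z_{j+1}>\gamma_1 t\geq \gamma_1 z_i$ and consecutive $z$-indices satisfy $z_{j+1}\leq(1+\eps_1^2)z_j+O(1)$, I get $z_j\geq \gamma_1 z_i/(1+\eps_1^2)\geq \gamma_1 z_i/(1+\eps_1)$, and a short integrality check upgrades this to $z_j\geq \lceil\gamma_1 z_i/(1+\eps_1)\rceil$. Applying the $(\gamma_1/(1+\eps_1),(1+\eps_2)^2\gamma_2)$-decreasing hypothesis of $g$ at index $z_i$, followed by monotonicity of $g$, then yields
\[
g(z_j)\;\leq\; g\!\left(\left\lceil\tfrac{\gamma_1}{1+\eps_1}z_i\right\rceil\right)\;\leq\; \frac{g(z_i)}{(1+\eps_2)^2\gamma_2}.
\]

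Finally I chain the three estimates:
\[
\widehat{g}(\lceil\gamma_1 t\rceil)=\widehat{g}(z_j)\leq (1+\eps_2)\,g(z_j)\leq \frac{g(z_i)}{(1+\eps_2)\gamma_2}\leq \frac{\widehat{g}(z_i)}{\gamma_2}=\frac{\widehat{g}(t)}{\gamma_2},
\]
so $\widehat{g}$ is $(\gamma_1,\gamma_2)$-decreasing, and Lemma \ref{gbar} concludes. The main obstacle is exactly the bookkeeping in this last display: the slack $(1+\eps_2)^2$ in the hypothesis must cover both the upward approximation $\widehat{g}(z_j)\leq (1+\eps_2)g(z_j)$ and the downward one $\widehat{g}(z_i)\geq g(z_i)/(1+\eps_2)$, which is why the vertical slack is squared and not linear. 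A secondary nuisance is the ceiling $\lceil \gamma_1 z_i/(1+\eps_1)\rceil$, which requires the horizontal slack $(1+\eps_1)$ to absorb the rounding loss $(1+\eps_1^2)$ coming from the spacing of the $z_i$'s.
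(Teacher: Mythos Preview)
Your proposal is correct and follows essentially the same route as the paper: prove that $\widehat{g}$ is $(\gamma_1,\gamma_2)$-decreasing by sandwiching $\widehat{g}$ between $g$ (up to factors $(1+\eps_2)$) at the step indices, then invoking the strengthened decreasing hypothesis on $g$ to absorb both the vertical $(1+\eps_2)^2$ loss and the horizontal $(1+\eps_1)$ shift from the $z_i$-spacing, and finally applying Lemma~\ref{gbar}. The only cosmetic difference is that the paper packages the upper bound as the pointwise inequality $\widehat{g}(t)\leq (1+\eps_2)\,g(t/(1+\eps_1))$ and applies it at $\gamma_1 t$, whereas you track the two step indices $z_i,z_j$ explicitly and prove $z_j\geq \gamma_1 z_i/(1+\eps_1)$; the arithmetic is identical.
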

\begin{proof}
By definition $ g(\gamma_1  t /(1+\eps_1) ) \leq  g(t) /(1+\eps_2)^2\gamma_2  $.

Recall that  by lemma \ref{re1}, with high probability:
$$   g(z_i) /(1+\eps_2) \leq \widehat{g}(z_i) \leq  g(z_i) (1+\eps_2) $$ 
For  $z_i \leq t < z_{i+1}$  then $\widehat{g}(t)=\widehat{g}(z_i) \geq g(z_i) /(1+\eps_2) \geq g(t)/(1+\eps_2)$ with h.p.
We also have with h.p.: 
$$\widehat{g}(t)=\widehat{g}(z_i) \leq g(z_i) (1+\eps_2)  \leq g(t / (1+\eps_1) )(1+\eps_2)$$
because $t / (1+\eps_1) < z_i$ and $g$ is monotone. Hence:
$$   g(t) /(1+\eps_2) \leq \widehat{g}(t) \leq  g(t/ (1+\eps_1) )  (1+\eps_2)  ~~~{\rm and} $$ 
$$  \widehat{g}(\gamma_1 t)  \leq (1+\eps_2)   g(\gamma_1 t / (1+\eps_1) )  \leq  (1+\eps_2)  g(t) /(1+\eps_2)^2 \gamma_2 =g(t) /(1+\eps_2)\gamma_2 \leq  \widehat{g}(t) /\gamma_2 $$
because  $g$ is a $((1+\eps_1)\gamma_1,(1+\eps_2)^2\gamma_2 )$-decreasing function.
Hence and $\widehat{g}$ is a $(\gamma_1,\gamma_2 )$-decreasing function.
By lemma \ref{gbar}, $ \bar{g}$ is  also $(\gamma_1,\gamma_2 )$-decreasing
\end{proof}

\begin{lemma} \label{gbarc}
If $g$ is a $(\eps_1,\eps_2)$ half-stable  function, its monotone approximation  $ \bar{g}$ is  $(2\eps_1,2\eps_2)$  half-stable.
\end{lemma}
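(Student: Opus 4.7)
My plan is to exhibit, for each $t \in [1, n]$, a $(2\eps_1, 2\eps_2)$-rectangle $R'$ whose horizontal span contains $t$ and whose vertical range contains $\bar{g}(z)$ for every $z$ in the span. I would start from the $(\eps_1,\eps_2)$-half-stability of $g$ at $t$, which yields a rectangle $R_g = [x, x(1+\eps_1)] \times [y, y(1+\eps_2)]$ containing $(t,g(t))$ with $g(z) \in [y, y(1+\eps_2)]$ for all $z \in [x, x(1+\eps_1)]$. The rectangle for $\bar{g}$ is then a mild horizontal enlargement of this one with a vertical range enlarged to absorb the quantization and monotonization losses; the task is to verify that the enlargement stays within $(1+2\eps_1)$ and $(1+2\eps_2)$.

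The key pointwise estimate, which I would establish first, is that for every $z$ with $z \in [z_j, z_{j+1})$, with high probability
\[ \bar{g}(z) \in \bigl[\,g(z_j)/(1+2\eps_2),\ g(z_j)(1+2\eps_2)\,\bigr]. \]
The upper bound follows from $\bar{g}(z) \leq \widehat{g}(z) = c_{j,r}$ combined with $c_{j,r} \leq (1+2\eps_2)\, g(z_j)$, which is the pointwise bound proved in Lemma~\ref{gbara}. The lower bound uses the monotonization: $\bar{g}(z) = \min_{k \leq j} c_{k,r}$, and since $g$ is non-increasing, $g(z_k) \geq g(z_j)$ for every $k \leq j$, so $c_{k,r} \geq g(z_k)/(1+2\eps_2) \geq g(z_j)/(1+2\eps_2)$.

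I would then combine this estimate with the half-stability of $g$: if the index $z_j$ associated to $z$ lies in $[x, x(1+\eps_1)]$, then $g(z_j) \in [y, y(1+\eps_2)]$, hence $\bar{g}(z)$ lies in the vertical window $[y/(1+2\eps_2),\ y(1+\eps_2)(1+2\eps_2)]$, whose relative height $(1+\eps_2)(1+2\eps_2)^2$ can be absorbed into $(1+2\eps_2)$ after adjusting the base $y$ by a constant factor. The horizontal span of $R'$ is chosen as a dilation of $[x, x(1+\eps_1)]$ by a factor $(1+\eps_1^2)$, still containing $t$ and of relative width $(1+\eps_1)(1+\eps_1^2) \leq 1+2\eps_1$.

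The main obstacle, and the reason the constants inflate from $(\eps_1,\eps_2)$ to $(2\eps_1,2\eps_2)$, is the boundary behaviour at the left endpoint $x$: since $z_j \in [z/(1+\eps_1^2), z]$, for $z$ close to $x$ the index $z_j$ can drop just below $x$, and then $g(z_j)$ is not directly controlled by $R_g$. I would resolve this by invoking the $(\eps_1,\eps_2)$-half-stability of $g$ at $z_j$ itself: the half-stability rectangle at $z_j$ has relative width $(1+\eps_1)$, so when $z_j \geq x/(1+\eps_1^2)$ its rectangle overlaps with $R_g$, forcing $g(z_j)$ to be $(1+\eps_2)$-close to $y$, with all losses absorbed into the enlarged $(2\eps_1,2\eps_2)$ tolerances. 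The careful bookkeeping of these constants is the only delicate step of the argument.
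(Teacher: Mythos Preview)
Your plan coincides with the paper's (very terse) argument: take the $(\eps_1,\eps_2)$-rectangle $R$ witnessing half-stability of $g$ near $t$, and enlarge it so that it also covers $\bar g$ on the same horizontal span. The paper's entire proof is a three-line sketch saying exactly this; you add a useful explicit ingredient the paper omits, namely the two-sided pointwise bound $\bar g(z)\in[g(z_j)/(1+2\eps_2),\,g(z_j)(1+2\eps_2)]$ for $z\in[z_j,z_{j+1})$, and you correctly isolate the boundary issue at the left edge where $z_j$ may fall just below $x$.

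There is, however, a genuine arithmetic slip in your bookkeeping. Having obtained the vertical range $[y/(1+2\eps_2),\,y(1+\eps_2)(1+2\eps_2)]$ for $\bar g$, you claim it ``can be absorbed into $(1+2\eps_2)$ after adjusting the base $y$ by a constant factor''. That is impossible: the ratio of the endpoints, $(1+\eps_2)(1+2\eps_2)^2\approx 1+5\eps_2$, is scale-invariant, so no choice of base makes the range fit inside a rectangle of relative height $1+2\eps_2$. Your argument as written therefore only yields $(2\eps_1,c\eps_2)$-half-stability for some $c$ around $5$--$6$, not $c=2$. The paper's own proof performs no constant verification whatsoever and offers no justification for the factor~$2$ either, so this is a gap you are inheriting from the source; but you should drop the absorption claim, since that step is false as stated.
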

\begin{proof}
If $g$ is a $(\eps_1,\eps_2)$ half-stable, let $R$ be the $(\eps_1,\eps_2)$ rectangle associated with $(z_i, g(z_i))$ which includes $g$. Let $z_i \geq t  \geq z_{i+1}$ and the point $(t, \bar{g}(t))$. Increase $R$ to include the point $(t, \bar{g})$, defining $R'$. The span of $R'$ includes $\bar{g}$ which is then $ \bar{g}$ is  $(2\eps_1,2\eps_2)$  half-stable.
\end{proof}

\subsection{Comparison between frequency distributions}\label{comp}

In this section we generalize the Tester Algorithm $A_1$ and introduce two generalizations:  when we have two distinct streams and  when we consider two frequency distributions corresponding to two marginal distributions of the same stream.

\subsubsection{Comparison between the frequency distributions of two streams}\label{2streams}

Suppose we have two independent streams  $s_1$ and $s_2$ and we want to compare  their frequency distributions $g_1$ and $g_2$. 
As with Algorithm $A_1$, we sample  both streams $s_1$ and $s_2$  and for each substream $s_i^1$ and $s_i^2$ we have two tables $T_i^1$  and $T_i^2$ sorted by counter values. The first  table $T_i^1$ applies to the elements of $s_i^1$ 
 and let $c_{i,r}^1$ be the counter value in position $r$ of the table $T_i^1$. Similarly, the table $T_i^2$ applies to the elements of $s_i^2$
 and let
 $c_{i,r}^2$  be the counter value in position $r$.
 
 We will use $\bar{g}_1$  built from the $\{z_i, c_{i,r}^1)$ introduced in section  \ref{corrlemma} as a function $f$ in the Tester $A_1$. As it approximates $g_1$, we will require $\bar{g}_1 \sim_{(\eps_1^4 , \eps_2)} g_1$, i.e. a very good approximation of $g_1$ on the $x$-axis and consequently the size  $K_1$ of the tables $T_i^1$ will be much bigger than the size $K_2$ of the tables $T_i^2$. Let:

$$K_1=\frac{4 }{ \eps_2 . (\eps_1^4/5)^2} \cdot \frac{(\gamma_1-1)}{1-\gamma_1/\gamma_2}\cdot \frac{\log n .\log\log n }{\delta}\cdot(1+(\eps_1^4/5)^2)$$

$$K_2=\frac{2 }{ \eps_2 . \eps_1^2} \cdot \frac{(\gamma_1-1)}{1-\gamma_1/\gamma_2}\cdot \frac{\log n .\log\log n }{\delta}\cdot(1+\eps_1^2)$$


\begin{T3}
  \thmtextc
\end{T3}

\begin{proof}
We use the corrector $\bar{g}_1$  built from the $\{z_i, c_{i,r}^1)$ of  $g_1$ introduced in section  \ref{corrlemma}. We then follow the analysis of Theorem \ref{mt2}, using $\bar{g}_1$  for the function 
$f$ and $g_2$ for $g$. If the hypothesis of Lemma \ref{gbara} is satisfied, then $\bar{g}_1 \sim_{(5\eps_1,2\eps_2)} g_1$ and if we set $\eps'_1= \eps_1^4/5$ and $\eps'_2= \eps_2/2$, then 
$\bar{g}_1 \sim_{(\eps_1^4 , \eps_2)} g_1$. We also  replace  $\eps_1, \eps_2$ in the value of $K$ used by the Tester $A_1$ by 
$\eps'_1, \eps'_2$  and  obtain $K_1$ defined above, whereas $K_2=K$.

We first have to verify the hypothesis of Lemmas \ref{gbara}, \ref{gbar}, \ref{gbarb}, \ref{gbarc}, Theorem \ref{main} and use the conclusions of  Theorem \ref{main}  and Corollary \ref{tt}.\\

By hypothesis  $g_1$ is   $(3\eps_1/2, \eps_2/4)$ half-stable. By  Lemma \ref{gbarc} the function $\bar{g}_1 $ is $(3\eps_1, \eps_2/2)$ half-stable, hence also 
$(\eps_1^4/5, \eps_2/2)$ half-stable because\footnote{If a frequency function $g$  is $(\eps_1, \eps_2)$  half-stable it is also $(\eps'_1, \eps'_2)$ for $\eps'_1\leq \eps_1$ and $\eps'_2\geq \eps_2$  because if the curve $g$ is an Box $J$, it is also in a box $J'$ smaller on the $x$-axis but larger on the $y$-axis.}  $\eps_1^4/5 < 3\eps_1$. 
By hypothesis, 
 $g_1$ is a $(\gamma_1/(1+\eps_1),(1+\eps_2)^2\gamma_2 )$-decreasing function,  hence by Lemma \ref{gbarb} the function  $\bar{g}_1$ is  $(\gamma_1,\gamma_2 )$-decreasing. We can therefore apply 
Lemma \ref{gbara} with $\eps'_1= \eps_1^4/5$ and $\eps'_2= \eps_2/2$, and conclude that
 $\bar{g}_1 \sim_{(\eps_1^4 , \eps_2)} g_1$.  Because $\bar{g}_1$ and $g_2$ satisfies the hypothesis of the Theorem 
\ref{main} we  use the same conclusions and in particular     Corollary \ref{tt}.\\

For the positive case, assume that $g_1=g_2$. Because $\bar{g}_1 \sim_{(\eps_1^4 , \eps_2)} g_1$, we infer that
$\bar{g}_1 \sim_{(\eps_1^4 , \eps_2)} g_2$. By the  Corollary \ref{tt}, the Tester $A_2$ accepts with high probability.\\

For the negative case, assume $g_1 \not\sim_{(4\eps_1, 12\eps_2)} g_2$.  We then want to show that as in Theorem \ref{main}
$\bar{g}_1 \not\sim_{(3\eps_1, 10\eps_2)} g_2$. By definition, 

$~~~~~~~~~~\exists t ~~\forall u ~~(t, g_1(t)) \not\simeq_{(4\eps_1, 12\eps_2)} (u,g_2(u))$

Therefore 
$u \notin  [ t /(1+4\eps_1), t.(1+4\eps_1)]$ 
or $g_2(u) \notin [g_1(t)/(1 +12\eps_2), g_1(t) \cdot  (1+12\eps_2)]$.
 
Because $\bar{g}_1 \sim_{(\eps_1^4 , \eps_2)} g_1$, 
there exists $t' \in [ t /(1+\eps_1^4), t.(1+\eps_1^4)]$
and similarly  $\bar{g}_1 (t') \in  [g_1(t)/(1+\eps_2), g_1(t).(1+\eps_2)]$. There are four possibilities whether $t'>t$ or $t'\leq t$ and $\bar{g}_1 (t') > g_1(t)$ or $\bar{g}_1 (t') \leq g_1(t)$. Assume $t'>t$ and $\bar{g}_1 (t')> g_1(t)$, as the other cases are similar.

Notice that $t'.(1+3\eps_1)\leq t.(1+\eps_1^4).(1+3\eps_1)\leq t.(1+4\eps_1)$ and  similarly
$t/(1+4\eps_1) \leq t.(1+3\eps_1)\leq t'.(1+3\eps_1)$. Therefore if $u \notin  [ t /(1+4\eps_1), t.(1+4\eps_1)]$ then $u \notin  [ t' /(1+3\eps_1), t'.(1+3\eps_1)]$ as the second interval is included in the first.
Similarly 
$\bar{g}_1 (t') . (1+10\eps_2)\leq g_1(t).(1+\eps_2). (1+10\eps_2)\leq  g_1(t).(1+12\eps_2)$ and
$g_1 (t) /(1+12\eps_2)\leq g_1(t)/((1+\eps_2). (1+10\eps_2))\leq  \bar{g}_1(t')/(1+10\eps_2)$
Therefore  if $ g_2(u) \notin [g_1(t)/(1 +12\eps_2), g_1(t) \cdot  (1+12\eps_2)]$ then
 $ g_2(u) \notin [\bar{g}_1(t')/(1 +10\eps_2), \bar{g}_1(t') \cdot  (1+10\eps_2)]$   as the second interval is included in the first.
We conclude that $\bar{g}_1 \not\sim_{(3\eps_1, 10\eps_2)} g_2$. Hence the Tester $A_2$ rejects with high probability as it duplicates the Tester $A_1$.
\end{proof}


\subsubsection{Comparison between the frequency distributions of two marginals}
Assume each element of the stream is a tuple $(e_1,e_2,...e_d)$ where  each $e_i$ is an element of a domain of size $n$ and $d$ is even.  Consider two marginal frequency distributions $g_1$ as the frequency distributions of $(e_1,e_2,...e_{d/2})$, the projection on the first $d/2$ dimensions, and $g_2$ as the frequency distributions of $(e_{d/2 +1}...e_d)$, the projection on the last $d/2$ dimensions. Assume their support sizes are  similar.  We want to compare $g_1$ and $g_2$ as in the previous section. We may also  want to compare $g_1$ with a  reference distribution $f$.

As with the Tester $A_2$, we sample  both streams and for each substream $s_i^1$ and $s_i^2$ we have two tables $T_i^1$  and $T_i^2$ sorted by counter values. The first  table $T_i^1$ applies to the elements
 $(e_1,e_2,...e_{d/2})$ for each data item and let $c_{i,r}^1$ be the counter value in position $r$ of the table $T_i^1$. Similarly, the table $T_i^2$ applies to the elements
 $(e_{d/2 +1}...e_d)$ for each data item and let
 $c_{i,r}^2$  be the counter value in position $r$. The values of $K_1$ and $K_2$ are asymmetrical as in Theorem \ref{main3}.\\


\begin{corollary}\label{st}

\thmtextb
\end{corollary} 
\begin{proof}
The stream $s^1$ plays the role of  the reference function $f$ in Theorem \ref{main}. As in Theorem \ref{main3}, we introduce 
$\bar{c}_{i,r}^1=    Min _{j\leq i }\{ c_{j,r}^1 \}  $ and the Tester $A_2$ compares
$\bar{c}_{i,r}^1$ with $c_{i,r}^2 $ for the values $i-1, i, i+1$. Precisely, it tests if
$\bar{c}_{i,r}^1\not \simeq_{5.\eps_2}c_{i,r}^2 $.
The analysis of Theorem \ref{main3} is identical.
\end{proof}

\section{Conclusion}
We introduced a scale free distance between two frequency distributions, the relative version of the  Fr{\'e}chet distance.
We then studied how to verify a frequency function $g$ defined  by a stream of $N$ items among $n$ distinct items. We first proved a  $\Omega(n)$ lower bound on the space required in general. If we assume that  the frequency distribution $f$ and the frequency $g$ defined by the stream satisfy a
half-stability condition and  decrease fast enough, we presented a Tester that uses
 $O(\log^2 n\cdot \log\log n)$ space. We then generalized the Tester to compare the frequencies of two distinct streams.
 
 We will study the turnstile model~\cite{M05} with insertions and deletions  for  the bounded deletions model\footnote{In such a model,  the number $D$ of deletions  is related to the number $I$ of insertions:  $D\leq (1-1/\alpha)I$, for some constant $\alpha \geq 1$.} from \cite{J18} in some later work.
The approach generalizes to
 the {\em sliding window} model with insertions and deletions outside a window.
Notice that the sliding window model  is not a bounded deletion model, as $I/D$ tends to $ 1$ when $I $ goes to $\infty$.

\bibstyle{plain}
\bibliography{bibliographydegreedistribution}

\appendix

\section{Relative Fr\'echet distance}\label{a1}

 We introduced the notion of \emph{relative Fr\'echet distance} between two functions with definition \ref{definition:Frechet}. 
 The relative Fr{\'e}chet distance differs from the absolute Fr{\'e}chet distance. For example, consider two  families of step functions:

\begin{equation}
    f(i) =
    \begin{cases*}
     4b & if $i\leq 10c$ \\
      b      & if $i> 10c$ 
    \end{cases*}
     ~~~~g(i) =
    \begin{cases*}
     5b & if $i\leq 11c$ \\
      b      & if $i> 11c$ 
    \end{cases*}
  \end{equation} 
The absolute Frechet distance between $f$ and $g$ is $\max(b,c)$ (based on the $L_{\infty}$ metric) which is arbitrary large, whereas the relative Frechet distance is $(\eps_1,\eps_2)=(0.1,0.25)$, independent of $(b,c)$.
 
There is a different approach which generalizes the classical
(absolute) Fr{\'e}chet distance, based on the notion of {\emph coupling}, defined in \cite{EM94} and which we now recall. Here we also define the \emph{relative length} of a coupling.

\begin{definition}\label{def:coupling}
Let $f$ and $g$ be two functions with domain $\{ 1,\cdots ,n\}$. For $1\leq t\leq n$, consider the points  $u_t=(t,f(t))$ and $v_t=(t,g(t))$. 
A {\em coupling} between $f$ and $g$ is a sequence $(u_{a_1},v_{b_1}),(u_{a_2},v_{b_2}),\cdots ,(u_{a_m},v_{b_m})$ such that $a_1=1,b_1=1, a_m=n, b_m=n$, and for all $i$ we have $a_{i+1}\in \{ a_i, a_i+1\}$ and $b_{i+1}\in \{ b_i,b_i+1\}$. The coupling has \emph{relative length}  $(\eps_1,\eps_2)$ if for all $i$ we have $u_{a_i}\simeq_{(\eps_1,\eps_2)}v_{b_i}$. 
\end{definition}

Informally, with the coupling, to each point $u_t$ of the graph of $f$, we associate at least one point $v_{t'}$ of the graph of $g$, and vice-versa, in a monotone way: for all $t$ the points associated to $(t,f(t))$ are all to the left of the points associated to $(t+1,f(t+1))$, and the points associated to $(t,g(t))$ are all to the left of the points associated to $(t+1,g(t+1))$.

We now define the relative Fr{\'e}chet distance based on a coupling. 

\begin{definition}\label{Frechet}
Let $f$ and $g$ be two functions with domain $\{ 1,\cdots ,n\}$.  We say that $f$ and $g$ are $(\eps_1,\eps_2)$-close by coupling, denoted $f \sim_{(\eps_1,\eps_2)}^c g$,  if there exists a coupling of relative length at most $\eps_1,\eps_2$.
\end{definition}

If $f \sim_{(\eps_1,\eps_2)}^c g$ then $f \sim_{(\eps_1,\eps_2)} g$ based on the definition \ref{definition:Frechet}. Any point $(x,f(x)$ is $(\eps_1,\eps_2)$-close to a point
$(x',g(x'))$. We now show the converse and therefore establish that the definitions
\ref{definition:Frechet} and  
\ref{Frechet} are equivalent.

\begin{lemma}[Distance lemma]\label{sl}
If $f$  and $g$ are two functions describing  frequencies and such that 
for every point of $f$, there is a point of $g$ which is 
$(\eps_1,\eps_2)$-close and conversely. Then
$$f  \sim_{(\eps_1,\eps_2)} g$$
\end{lemma}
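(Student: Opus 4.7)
The goal is to prove that pointwise $(\eps_1,\eps_2)$-closeness (Definition \ref{definition:Frechet}) implies the coupling-based closeness (Definition \ref{Frechet}); this is the converse of the easy implication stated just above the lemma, and the conclusion $f \sim_{(\eps_1,\eps_2)} g$ in the statement should be read as $f \sim_{(\eps_1,\eps_2)}^c g$. My plan is to construct a coupling explicitly by exploiting the monotonicity of the frequency functions.

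First I would define, for each $a \in \{1,\ldots,n\}$, the set $I_a = \{b : u_a \simeq_{(\eps_1,\eps_2)} v_b\}$ of good partners of $a$; by hypothesis each $I_a$ is non-empty, and symmetrically the column sets $J_b = \{a : u_a \simeq_{(\eps_1,\eps_2)} v_b\}$ are non-empty. A direct calculation shows $I_a$ is the intersection of the $x$-interval $\{b : a \simeq_{\eps_1} b\}$ and the $y$-constraint $\{b : f(a) \simeq_{\eps_2} g(b)\}$; the latter is an interval of integers because $g$ is non-increasing. Hence $I_a$ is an interval. Moreover, as $a$ increases both constraints shift right: the $x$-interval endpoints are linear in $a$, and the $y$-target interval for $g(b)$ moves weakly downward since $f$ is non-increasing, which translates to a weak right shift in $b$ since $g$ is non-increasing. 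Consequently both $\min I_a$ and $\max I_a$ are non-decreasing in $a$.

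The key step, which I expect to be the main obstacle, is to establish that consecutive intervals are nearly adjacent: $\min I_{a+1} \leq \max I_a + 1$. I would prove this by contradiction. If some integer $b_0$ satisfies $\max I_a < b_0 < \min I_{a+1}$, then the reverse hypothesis applied to $v_{b_0}$ gives an $a'$ with $b_0 \in I_{a'}$. But the monotonicity just established forces $b_0 \leq \max I_{a'} \leq \max I_a$ when $a' \leq a$, and $b_0 \geq \min I_{a'} \geq \min I_{a+1}$ when $a' \geq a+1$, contradicting the position of $b_0$ in either case. Alongside this I would verify the two boundary conditions: $1 \in I_1$ follows because the $x$-constraint $t' \simeq_{\eps_1} 1$ with $\eps_1<1$ forces $t'=1$, and $n \in I_n$ follows from applying the hypothesis in both directions at the endpoint together with the monotonicity of $f$ and $g$ to deduce $f(n) \simeq_{\eps_2} g(n)$.

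Finally, I would construct the coupling iteratively. Starting at $(1,1) \in I_1$, at any good cell $(a,b) \in I_a$ with $a<n$: if $b \in I_{a+1}$, take the step $(a,b) \to (a+1,b)$; otherwise by near-adjacency $b < \min I_{a+1}$, and I would advance $b$ within $I_a$ (the interval property ensures $(a, b+1) \in I_a$ whenever $b<\max I_a$) until reaching $b = \max I_a$, then take the diagonal step to $(a+1, \max I_a + 1) = (a+1, \min I_{a+1}) \in I_{a+1}$. Once $a=n$ is reached, I would advance $b$ within $I_n$ up to $n$, which is valid because $n \in I_n$ and $I_n$ is an interval. The resulting sequence is a valid coupling of relative length $(\eps_1, \eps_2)$, which concludes the proof and establishes the equivalence of Definitions \ref{definition:Frechet} and \ref{Frechet}.
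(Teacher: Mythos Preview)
Your proof is correct and follows essentially the same approach as the paper's: define the interval of good partners $I_a$ (the paper's $S_i$), show it is a nonempty interval with monotone endpoints, use the reverse hypothesis to get covering/near-adjacency, and build the coupling greedily. Your treatment is in fact slightly more careful than the paper's, since you explicitly verify the endpoint conditions $1\in I_1$ and $n\in I_n$ needed for the coupling to start at $(1,1)$ and end at $(n,n)$, which the paper's proof leaves implicit.
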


\begin{proof}
Given a point $u_i=(i,f(i))$ of $f$, we first claim that the set $S_i$ of $j$ such that $v_j=(j,g(j))$ satisfies $v_j\simeq_{(\eps_1,\eps_2)} u_i$ is an non-empty interval, as shown in figure \ref{fig:prooflemma5}. Indeed, it is non-empty by assumption. Let $j_{\min}$ and $j_{\max}$ be its minimum and maximum elements respectively. Then for every $j\in [j_{\min}, j_{\max}]$, we have $i/(1+\eps_1)\leq j_{\min}\leq j$ and $j\leq j_{\max}\leq i(1+\eps_1)$, so $j\simeq_{\eps_1} i$; and by monotonicity, $f(j)\leq f(j_{\min})\leq (1+\eps_2)g(i)$ and $g(i)/(1+\eps_2) \leq g(j_{\max})\leq g(j)$, so $f(i)\simeq_{\eps_2}g(j)$, proving the claim. 

Let $S_i=[\ell_i,r_i]$.  We also claim that the sequence $(\ell_i)_i$ and $(r_i)_i$ are monotone non-decreasing. Indeed, assume, for a contradiction, that $\ell_i>\ell_{i+1}$. Then $i<i+1<\ell_{i+1}(1+\eps_1)$ and $i>\ell_i/(1+\eps_1)> \ell_{i+1}/(1+\eps_1)$, so $i\simeq_{\eps_1} \ell_{i+1}$; moreover, $g(\ell_{i+1}) \leq f(i+1)(1+\eps_2) \leq f(i)(1+\eps_2)$, and
 $g(\ell_{i+1})(1+\eps_2) \geq g(\ell_i) (1+\eps_2) \geq  f(i) $, so $u_i\simeq_{(\eps_1,\eps_2)}v_{\ell_{i+1}}$, a contradiction. The proof of the monotonicity of $(r_i)$ is similar. 
 
 Moreover, the collection of intervals $(S_i)_i$ covers $[1,n]$ because every point of $g$ is $(\eps_1,\eps_2)$-close to some point of $f$. 

The coupling then simply consists of the pairs $$\{ (i,j): \max(\ell_i,r_{i-1}) \leq j\leq r_i\},$$ in lexicographic order, i.e. the red edges of figure \ref{fig:prooflemma5}. Let us verify that this is a correct coupling sequence. Since $r_{i-1}\leq r_i$, every $i$ belongs to at least one pair. Every $j$ will appear in the pair $(i,j)$ where $i$ is minimum such that $r_i\geq j$. Such an $i$ exists because every $j$ belongs to at least one $S_i$. From one element of the sequence to the next, we either keep $i$ unchanged and move from one element of $S_i$ to the next element of $S_i$, incrementing the count by 1 on $g$; or we switch from $S_i$ to $S_{i+1}$, incrementing $i$ and possibly incrementing $j$ by one as well, in the case in which $r_i\notin S_{i+1}$. Thus this forms a correct coupling sequence such that $f  \sim_{(\eps_1,\eps_2)} g$.

\begin{figure}
\includegraphics[width=12cm, height=7cm]{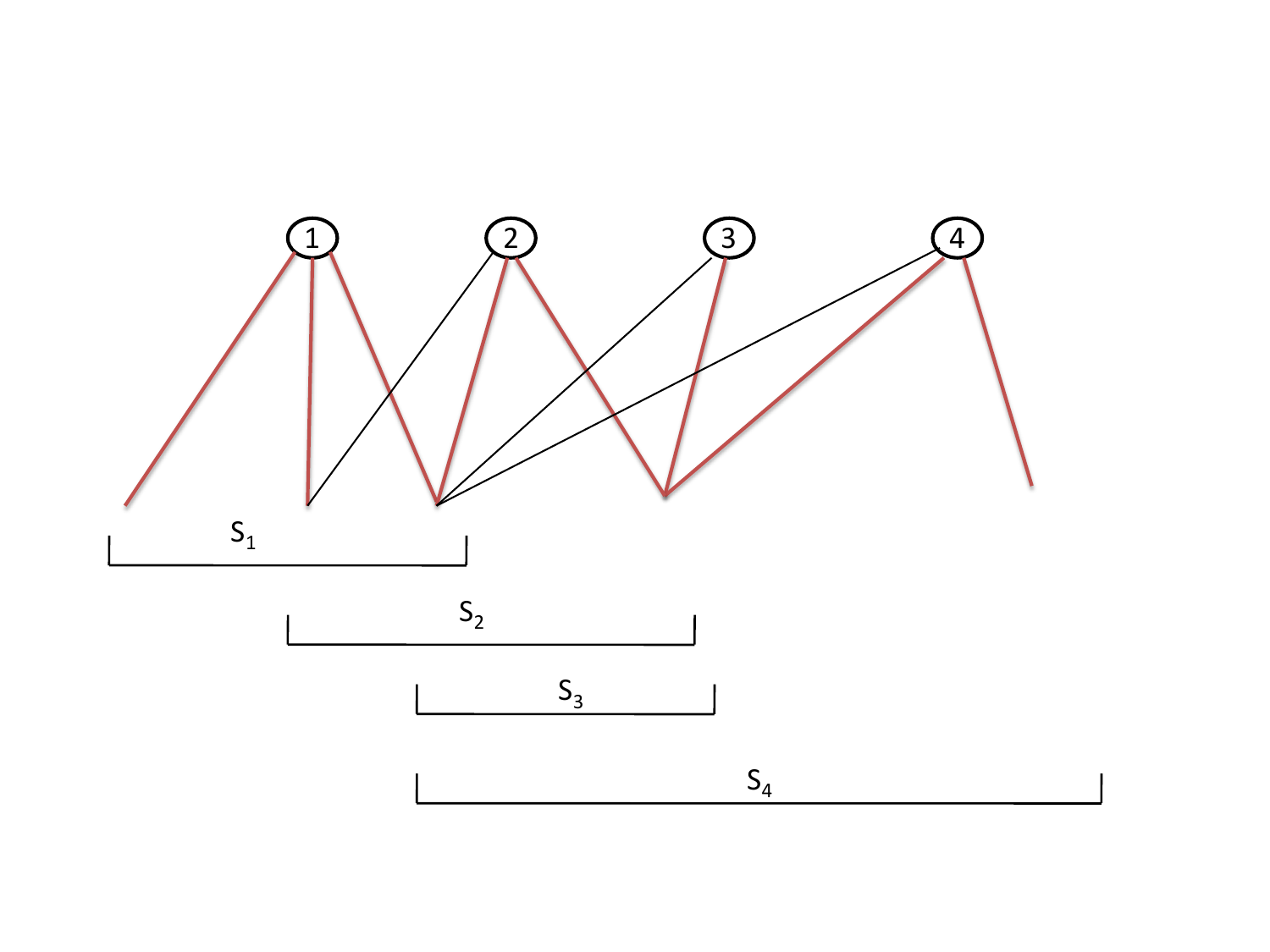}
\caption{Proof of lemma \ref{sl}. The thick red edges are the coupling edges.}
\label{fig:prooflemma5}
\end{figure}
\end{proof}

\section{Property of Half stable functions}\label{hsf}

 The following Lemma provides an equivalent characterization of half-stable frequency functions. It states that we can partition the domain $\{ 1,2,\ldots ,n\}$ into {\em Intervals} where the relative variations of the function $f$ are bounded.
 
\begin{lemma}\label{lemma:stepcompatible}

If $f$ is an   {\em $ (\eps_1,\eps_2)$-half-stable frequency function,} then there exists a partition of $\{ 1,2,\ldots ,n\}$ into {\em Intervals} $[\ell_j,r_j]$ with $j\geq 1$ such that for all $j$:
\begin{itemize}
\item (a) ~~$\ell_{j+1}=r_j +1 $ and $r_j \geq  \lfloor  (1+\eps_1)\ell_j \rfloor$ ; and
\item (b) ~~ $f(r_j) \geq f(\ell_j)/(1+3\eps_2)$.
\end{itemize}
If there exists a partition of $\{ 1,2,\ldots ,n\}$ into {\em Intervals} $[\ell_j,r_j]$ satisfying (a) and (b), then $f$ is $(\eps_1,4\eps_2)$-half-stable.
\end{lemma}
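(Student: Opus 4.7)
\textbf{Plan for Lemma~\ref{lemma:stepcompatible}.}

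\paragraph{First implication (half-stability implies partition).}
I would construct the partition greedily. Set $\ell_1 = 1$, and inductively, given $\ell_j$, let $r_j$ be the largest integer $r \in \{\ell_j,\ldots,n\}$ such that $f(r) \geq f(\ell_j)/(1+3\eps_2)$, then set $\ell_{j+1} = r_j + 1$, stopping when $r_j = n$. Condition~(b) holds by construction, so all the work goes into verifying~(a). Fix $j$ and assume $\lfloor (1+\eps_1)\ell_j\rfloor \le n$ (the edge case is trivial otherwise). Suppose for contradiction that $r_j + 1 \le \lfloor(1+\eps_1)\ell_j\rfloor$, so that $a = \ell_j$ and $b = r_j+1$ are integers with $a < b \le (1+\eps_1)a$ and $f(a)/f(b) > 1+3\eps_2$. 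The goal is then to prove the auxiliary claim:
\emph{for any integers $a < b$ in $[1,n]$ with $b \le (1+\eps_1) a$, one has $f(a)/f(b) \le (1+\eps_2)^2 \le 1+3\eps_2$}, which contradicts the assumption.

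To prove this auxiliary claim, I would use the half-stability rectangles $R_a$ and $R_b$ at $a$ and $b$. If the horizontal span of $R_a$ contains $b$, or that of $R_b$ contains $a$, then $f(a)$ and $f(b)$ both lie in a single height interval and the ratio is at most $1+\eps_2$. Otherwise $R_a$ ends strictly before $b$ and $R_b$ starts strictly after $a$. I would then apply half-stability at an integer $z$ lying in (or on the boundary between) the two spans: since $b \le (1+\eps_1)a$, the entire range $[a,b]$ has relative width at most $1+\eps_1$, so a single intermediate rectangle $R_z$ overlaps $R_a$ at an integer (some right-end integer of $R_a$, which falls in the span of $R_z$) and overlaps $R_b$ at an integer. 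Chaining $f(a)/f(z)\le 1+\eps_2$ with $f(z)/f(b)\le 1+\eps_2$ gives $f(a)/f(b)\le (1+\eps_2)^2\le 1+3\eps_2$, the desired contradiction.

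\paragraph{Second implication (partition implies $(\eps_1,4\eps_2)$-half-stability).}
Fix $t\in[1,n]$ and let $j$ be such that $t\in[\ell_j,r_j]$. I would exhibit an $(\eps_1,4\eps_2)$-rectangle at $t$ by distinguishing two cases. If $t \ge \ell_j (1+\eps_1)$, take $x = t/(1+\eps_1) \ge \ell_j$: the span $[x, t]$ is contained in $[\ell_j,r_j]$, so every $f$-value on the span lies in $[f(r_j), f(\ell_j)]$, a ratio of at most $1+3\eps_2$ by~(b). If $t < \ell_j(1+\eps_1)$, take $x = \ell_j$ (perturbed slightly if $\ell_j(1+\eps_1)$ is an integer): the span $[\ell_j, \ell_j(1+\eps_1)]$ contains $t$, and condition~(a) ensures $\lfloor \ell_j(1+\eps_1) \rfloor \le r_j$, so all integers of the span lie in $[\ell_j,r_j]$ and the $f$-values again satisfy a ratio of at most $1+3\eps_2 \le 1+4\eps_2$. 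Picking $y$ equal to the minimum $f$-value in the span yields the required $(\eps_1,4\eps_2)$-rectangle.

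\paragraph{Main obstacle.} The delicate part is the chaining argument in the first implication. One must show that when the spans of $R_a$ and $R_b$ fail to overlap at an integer, a single intermediate half-stability rectangle $R_z$ can be chosen so as to bridge them; the quantitative input that makes this work is the constraint $b\le(1+\eps_1)a$, which keeps the gap between $R_a$ and $R_b$ short enough for the spans (whose widths scale with $\eps_1 x$) to meet. Handling the integer-boundary cases carefully is where most of the casework goes, but the chain length is always at most two, which is exactly what yields the $(1+\eps_2)^2$ bound and the $3\eps_2$ slack in condition~(b).
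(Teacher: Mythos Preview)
\textbf{The second implication is fine}, and essentially matches the paper's (terse) ``other direction is straightforward'' remark.

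\textbf{The first implication has a genuine gap.} Your auxiliary claim --- that for every half-stable $f$ and every pair of integers $a<b$ with $b\le (1+\eps_1)a$ one has $f(a)/f(b)\le(1+\eps_2)^2$ --- is false. Take $\eps_1=1$, any small $\eps_2$, and the step function $f(t)=M$ for $t\le 5$, $f(t)=1$ for $t\ge 6$ (with $M$ large). This $f$ is $(\eps_1,\eps_2)$-half-stable: for $t\le 5$ use a rectangle whose span lies in $[t/2,5]$, and for $t\ge 6$ use one whose span lies in $[6,2t]$. Yet with $a=4$, $b=7$ we have $b<2a$ while $f(a)/f(b)=M$. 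The flaw in your bridging step is that half-stability allows the rectangle at $a$ to sit entirely \emph{left} of the drop and the rectangle at $b$ to sit entirely \emph{right} of it; any intermediate $z$ you pick will have its own rectangle forced to one side as well, so no single $R_z$ can touch integers in both $R_a$ and $R_b$. The constraint $b\le(1+\eps_1)a$ gives you nothing here, because the spans of $R_a$ and $R_b$ may point away from each other.

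Because the auxiliary claim fails, your verification of~(a) for the greedy intervals breaks down (and in fact the greedy-by-value construction itself can violate~(a): try $f=100$ on $[1,10]$, $f=50$ on $[11,21]$, $f=1$ on $[22,n]$ with $\eps_1=1$). The paper avoids this by taking a different route: it lays down a geometric grid $x_i=\lceil(1+\eps_1/3)^i\rceil$, notes that since $(1+\eps_1/3)^2<1+\eps_1$ the half-stability rectangle $R_i$ at $x_i$ must contain $x_{i-1}$ or $x_{i+1}$, and then greedily extracts a maximal \emph{non-intersecting} subsequence of the $R_i$'s. The interval endpoints $\ell_j$ are the left edges of the selected rectangles, so each interval automatically contains a full span (giving~(a)), and maximality forces every skipped $R_k$ to intersect the selected one, so the $f$-variation over an interval is at most two rectangle-heights, i.e.\ $(1+\eps_2)^2$ (giving~(b)). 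The key idea you are missing is to build the partition out of the half-stability rectangles themselves rather than out of level sets of $f$.
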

\begin{proof}
The intervals are defined in a 2-step process. The first step is greedy: let $(x_i)_{i\geq 1}$ denote the sequence of distinct values of $\lceil (1+\eps_1/3)^{j} \rceil$ and $y_i=x_{i+1}-1$ (or $y_i=n$ if $i$ is the last term of the sequence). Using the fact that $f$ is $(\eps_1,\eps_2)$-half-stable, let $R_i$ denote the $(\eps_1,\eps_2)$ rectangle containing $(x_i,f(x_i))$ and note that for $i$ large enough $R_i$ must contain $(x_{i+1},f(x_{i+1}))$ or $(x_{i-1},f(x_{i-1}))$ (otherwise its relative horizontal span would be less than $(1+\eps_1/3)^2<1+\eps_1$, so it intersects $R_{i-1}$ or $R_{i+1}$. Extract greedily a maximal subsequence $R_{i_1},R_{i_2},R_{i_3},\cdots$ of $R_i$'s containing $R_1$ and among which no two rectangles intersect. The sequence $\ell_j$ then consists of the left endpoints of the rectangles in that subsequence. Finally, we set $r_j=\ell_{j+1}-1$ (except that we set $r_j=n$ for the last interval). 

Each interval $[\ell_j,r_j]$ contains at least the horizontal span of a rectangle $R_{i_j}$ of the subsequence, so the first property holds: $\ell_{j+1}>(1+\eps_1)\ell_j$.
Consider the rightmost rectangle $R_k$ that intersects $R_{i_j}$.  All the points $(t,f(t))$ with $\ell_j\leq t\leq r_j$ are in the horizontal span of $R_{k'}\cup R_{i_j}\cup R_k$. The vertical span is therefore at most that of two $(\eps_1,\eps_2)$ rectangles, i.e. $$f(\ell_j)\leq (1+\eps_2)^2 f(r_j) < (1+4\eps_2)f(r_j)$$ 

The other direction is straightforward.
\end{proof}




\end{document}